\pdfoutput=1
\RequirePackage{ifpdf}
\ifpdf 
\documentclass[pdftex]{sigma}
\else
\documentclass{sigma}
\fi

\numberwithin{equation}{section}

\newtheorem{Theorem}{Theorem}[section]
\newtheorem{Corollary}[Theorem]{Corollary}
\newtheorem{Proposition}[Theorem]{Proposition}
 { \theoremstyle{definition}
\newtheorem{Remark}[Theorem]{Remark} }

\begin{document}


\renewcommand{\thefootnote}{$\star$}

\newcommand{\arXivNumber}{1511.08098}

\renewcommand{\PaperNumber}{054}

\FirstPageHeading

\ShortArticleName{Multidimensional Toda Lattices: Continuous and Discrete Time}

\ArticleName{Multidimensional Toda Lattices: \\ Continuous and Discrete Time\footnote{This paper is a~contribution to the Special Issue on Orthogonal Polynomials, Special Functions and Applications.
The full collection is available at \href{http://www.emis.de/journals/SIGMA/OPSFA2015.html}{http://www.emis.de/journals/SIGMA/OPSFA2015.html}}}

\Author{Alexander I.~APTEKAREV~$^{\dag^1}$, Maxim DEREVYAGIN~$^{\dag^2}$, Hiroshi MIKI~$^{\dag^3}$\\
 and Walter VAN ASSCHE~$^{\dag^4}$}

\AuthorNameForHeading{A.I.~Aptekarev, M.~Derevyagin, H.~Miki and W.~Van Assche}

\Address{$^{\dag^1}$~Keldysh Institute for Applied Mathematics, Russian Academy of Sciences,\\
\hphantom{$^{\dag^1}$}~Miusskaya pl.~4, 125047 Moscow, Russia}
\EmailDD{\href{mailto:aptekaa@keldysh.ru}{aptekaa@keldysh.ru}}

\Address{$^{\dag^2}$~University of Mississippi, Department of Mathematics,\\
\hphantom{$^{\dag^2}$}~Hume Hall 305, P. O. Box 1848, University, MS 38677-1848, USA}
\EmailDD{\href{mailto:derevyagin.m@gmail.com}{derevyagin.m@gmail.com}}

\Address{$^{\dag^3}$~Doshisha University, Department of Electronics, Faculty of Science and Engineering,\\
\hphantom{$^{\dag^3}$}~Kyotanabe city, Kyoto 610 0394, Japan}
\EmailDD{\href{mailto:hmiki@mail.doshisha.ac.jp}{hmiki@mail.doshisha.ac.jp}}

\Address{$^{\dag^4}$~KU Leuven, Department of Mathematics, Celestijnenlaan 200B box 2400,\\
\hphantom{$^{\dag^4}$}~BE-3001 Leuven, Belgium}
\EmailDD{\href{mailto:walter.vanassche@kuleuven.be}{walter.vanassche@kuleuven.be}}

\ArticleDates{Received January 05, 2016, in f\/inal form June 01, 2016; Published online June 13, 2016}

\Abstract{In this paper we present multidimensional analogues of both the continuous- and discrete-time Toda lattices. The integrable systems that we consider here have two or more space coordinates. To construct the systems, we generalize the orthogonal polynomial approach for the continuous and discrete Toda lattices to the case of multiple orthogonal polynomials.}

\Keywords{multiple orthogonal polynomials; orthogonal polynomials; recurrence relations; Toda equation; discrete integrable system; Toda lattice}

\Classification{42C05; 37K10; 39A14; 65Q10}

\renewcommand{\thefootnote}{\arabic{footnote}}
\setcounter{footnote}{0}

\section{Introduction}

The continuous-time Toda lattice \cite{Toda,Toda2}
\begin{gather}
 \dot{a}_n(t)=a_n(t)(b_{n-1}(t)-b_{n}(t)), \nonumber\\
 \dot{b}_n(t)=a_{n}(t)-a_{n-1}(t),\qquad a_n>0, \qquad t\in \mathbb{R}_+, \qquad n\in{\mathbb Z}_+, \label{toda}
\end{gather}
and the discrete-time Toda lattice \cite{Hirota,Spiridonov2}
\begin{gather}
A_{n}^{t+1}+B_n^{t+1}=A_{n}^t+B_{n+1}^t, \nonumber \\
A_{n-1}^{t+1}B_n^{t+1}=A_n^tB_n^t,\qquad t, n\in{\mathbb Z}_+, \label{dtoda}
\end{gather}
have \looseness=-1 appeared in physical and mathematical models quite a while ago and are still attracting the interest of many researchers in the f\/ield. For instance, generalizations of the Toda lattice have been considered from various points of view (e.g.,~\cite{Jack,Santini}). One of the interesting aspects of the continuous- and discrete-time Toda lattice is that \textit{orthogonal polynomials} (OPs) appear as eigenfunctions of their Lax pairs~\cite{Aptekarev,Spiridonov2}, which means that spectral transformations of OPs describe the f\/low of the Toda lattice. In addition, several integrable systems have been shown to be related to descendants of OPs through their spectral transformations~\cite{Aptekarev2,RTL1,RTL2,RTL3}. Applying this spectral transformation technique to a new class of OPs, novel integrable systems have been exploited~\cite{Adler,Spiridonov2,SpZh}. Recall (see~\cite{kaka, moser}) that the system~\eqref{toda} is managed by the evolution
\begin{gather}\label{evol}
d\mu(x,t)=e^{-xt} d\mu(x) ,
\end{gather}
and the functions $a_n(t)$ and $b_n(t)$ appear as the coef\/f\/icients of the \textit{three-term recurrence relation}
\begin{gather}\label{recOP}
xP_{n}(x;t)=P_{n+1}(x;t)+b_{{n}}(t)P_{{n}}(x;t)+a_{{n}}(t)P_{{n}-1}(x;t), \qquad n\in{\mathbb Z}_+,
\end{gather}
where $P_{{n}}(x;t)$ are monic polynomials in the variable $x$, orthogonal with respect to $d\mu(x,t)$. Thus, the direct and inverse spectral transformations $\{a_{{n}}(t), b_{{n}}(t)\} \rightleftarrows d\mu(t,x)$ along with the evolution~\eqref{evol} solve the Cauchy problem for~\eqref{toda}.

In this paper, motivated by these results, we aim to explore a new generalization of the Toda lattice by developing the spectral transformations of \textit{multiple orthogonal polynomials} \mbox{(m-OPs)}~\mbox{\cite{6, MI}}, which originated from the theory of \textit{Hermite--Pad\'{e} approximants}~\cite{Nikishin}, which were introduced by Hermite~\cite{1} in connection with his proof of the transcendence of~$e$. On top of that, m-OPs were recently found to have applications in many areas such as random matri\-ces~\cite{ABK, Bleher} and the theory of dif\/ference operators on lattices~\cite{Aptekar, ADvanA2015}.

A lattice of multiple orthogonal polynomials $P_{\vec{n}}$, $\vec{n}\in {\mathbb Z}_+^r$ is def\/ined as a set of polynomials of degree $|\vec{n}|=n_1+\cdots +n_r$, satisfying the orthogonality relations
\begin{gather}\label{orthogonality}
\mathcal{L}_j[x^iP_{\vec{n}}(x)]=\int x^iP_{\vec{n}}(x) d\mu_j(x)=0,\qquad i=0,\dots ,n_j-1,\qquad 1\leq j\leq r.
\end{gather}
This def\/inition gives $|\vec{n}|$ homogeneous and linear equations for the $|\vec{n}|+1$ coef\/f\/icients of the polynomial $P_{\vec{n}}$.
In fact, we can choose $P_{\vec{n}}$ to be monic and, so, it leads to a linear system of~$|\vec{n}|$ unknown coef\/f\/icients, which has a unique solution if and only if the corresponding determinant is not vanishing. Evidently, it is not always the case that the determinant is nonzero. Therefore, there is no guarantee that for a given multi-index one can f\/ind the corresponding multiple orthogonal polynomial. In the case of uniqueness the multi-index~$\vec{n}$ is called \textit{normal}. Hence, for normal indices the monic polynomial $P_{\vec{n}}$ can be determined. If all multi-indices $\vec{n}\in {\mathbb Z}_+^r$ on the lattice are normal then the system of measures $\{\mu_j\}_{j=1}^r$ (or functionals $\{\mathcal{L}_j\}_{j=1}^r$) generating the lattice of polynomials $\{P_{\vec{n}}\}$ in~\eqref{orthogonality} is called a~\textit{perfect system}. In other words, for a perfect system the polynomial $P_{\vec{n}}$ is well def\/ined for any $\vec{n}\in {\mathbb Z}_+^r$. This notion has been introduced by K.~Mahler~\cite{2} in relation to the diophantine approximation.

It is noteworthy that m-OPs become ordinary OPs if we take $r=1$. Moreover, one of the properties inherited from ordinary OPs is that m-OPs satisfy the following \textit{nearest-neighbor recurrence relations} \cite{MI, vanA2011}, which actually generalizes the three-term recurrence relation for OPs~\eqref{recOP}:
\begin{gather}\label{rec}
xP_{\vec{n}}(x)=P_{\vec{n}+\vec{e}_i}(x)+b_{\vec{n},i}P_{\vec{n}}(x)+
\sum_{k=1}^ra_{\vec{n},k}P_{\vec{n}-\vec{e}_k}(x), \qquad i=1,\dots ,r,\qquad \vec{n}\in{\mathbb Z}_+^r,
\end{gather}
where $\vec{e}_i=(0,\dots, 0,1,0,\dots ,0)$ is the $i$-th vector of the standard basis in~${\mathbb Z}_+^r$. Unlike in the ordinary case, the coef\/f\/icients of the recurrence relations $\{ a_{\vec{n},i},b_{\vec{n},i}\}$ for m-OPs are not independent but are required to satisfy the dif\/ference equations~\cite{vanA2011}
\begin{gather}
 \sum_{k=1}^r (a_{\vec{n}+\vec{e}_j,k}-a_{\vec{n}+\vec{e}_i,k})=
(b_{\vec{n},j}-b_{\vec{n},i}) (b_{\vec{n}+\vec{e}_j,i}-b_{\vec{n},i}),\nonumber\\
\frac{a_{\vec{n},i}}{a_{\vec{n}+\vec{e}_j,i}}=
\frac{b_{\vec{n}-\vec{e}_i,j}-b_{\vec{n}-\vec{e}_i,i}}{b_{\vec{n},j}-b_{\vec{n},i}},
\qquad 1 \leq i\neq j \leq r.\label{diffcoeff}
\end{gather}
Observe that \eqref{diffcoeff} is equivalent to the equations in~\cite[Theorem~3.2]{vanA2011}: interchanging $i$ and $j$ in the f\/irst equation of~\eqref{diffcoeff} gives~(3.6) from~\cite{vanA2011} and the determinant in~(3.7) from~\cite{vanA2011} is the same as the right hand side of the f\/irst equation in our~\eqref{diffcoeff}.

Finally, we are in a position to state the main results of the paper.

\begin{Theorem}\label{T1}
Suppose that the system of measures
\begin{gather}\label{I_2.1}
d\mu_j(x,t)=e^{-tx}d\mu_j(x,0),\qquad 1\leq j\leq r,
\end{gather}
generates m-OPs with normal indices $\vec{n}$ and $ \{\vec{n}\pm\vec{e}_k\}_{k=1}^r$ in a neighborhood of $t=0$. Then the recurrence coefficients from~\eqref{rec} satisfy locally the equations
\begin{gather}
\dot{a}_{\vec{n},k}=a_{\vec{n},k}[b_{\vec{n}-\vec{e}_k,k}-b_{\vec{n},k}], \nonumber\\
\dot{b}_{\vec{n},k}=\sum_{j=1}^r (a_{\vec{n},j}-a_{\vec{n}+\vec{e}_k,j}), \qquad 1\leq k\leq r.\label{I_2.2}
\end{gather}
If the system $\{d\mu_j(x,t)\}_{j=1}^{r}$ in~\eqref{I_2.1} is
perfect for $t\geqslant 0$, then the Cauchy problem for~\eqref{I_2.2} has a global solution which can be obtained by the direct and inverse spectral transformations $\{a_{\vec{n},k}(t), b_{\vec{n},k}(t)\}\rightleftarrows\{d\mu_k(x,t)\}$ and the evolution~\eqref{I_2.1}.
\end{Theorem}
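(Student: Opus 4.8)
The plan is to lift the deformation $d\mu_j(x,t)=e^{-tx}d\mu_j(x,0)$ to the polynomials $P_{\vec{n}}(x;t)$, using two bookkeeping quantities: the ``norming constants'' $h_{\vec{n},j}:=\mathcal{L}_j\big[x^{n_j}P_{\vec{n}}\big]$, for which applying $\mathcal{L}_k\big[x^{n_k-1}\,\cdot\,\big]$ to~\eqref{rec} together with~\eqref{orthogonality} gives $a_{\vec{n},k}=h_{\vec{n},k}/h_{\vec{n}-\vec{e}_k,k}$, and the subleading coefficient $\beta_{\vec{n}}$ of the monic polynomial $P_{\vec{n}}$, for which matching coefficients of $x^{|\vec{n}|}$ in~\eqref{rec} gives $b_{\vec{n},k}=\beta_{\vec{n}}-\beta_{\vec{n}+\vec{e}_k}$. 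The central step is to prove the polynomial evolution
\[
\dot{P}_{\vec{n}}(x;t)=\sum_{k=1}^{r}a_{\vec{n},k}(t)\,P_{\vec{n}-\vec{e}_k}(x;t).
\]
Since $P_{\vec{n}}$ is monic, $\dot{P}_{\vec{n}}$ has degree at most $|\vec{n}|-1$; differentiating $\mathcal{L}_j\big[x^{i}P_{\vec{n}}\big]=0$ ($0\le i\le n_j-1$) in $t$, using the product rule and $\partial_t d\mu_j(x,t)=-x\,d\mu_j(x,t)$, shows that $\dot{P}_{\vec{n}}$ is annihilated by $\mathcal{L}_j\big[x^{i}\,\cdot\,\big]$ for $0\le i\le n_j-2$ and satisfies $\mathcal{L}_j\big[x^{n_j-1}\dot{P}_{\vec{n}}\big]=h_{\vec{n},j}$, for every $j$. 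The polynomial on the right-hand side above has the same degree bound and satisfies exactly the same $|\vec{n}|$ conditions—a square system of $|\vec{n}|$ equations in the $|\vec{n}|$ coefficients of a polynomial of degree $<|\vec{n}|$—so their difference $D$ is a polynomial of degree $<|\vec{n}|$ with $\mathcal{L}_j\big[x^{i}D\big]=0$ for $0\le i\le n_j-1$; if $D\ne0$ then $P_{\vec{n}}-D$ would be a second monic solution of the linear system defining $P_{\vec{n}}$, contradicting normality of $\vec{n}$, so $D=0$.

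With this identity in hand, the two equations in~\eqref{I_2.2} drop out. Matching coefficients of $x^{|\vec{n}|-1}$ in the evolution identity gives $\dot{\beta}_{\vec{n}}=\sum_{k=1}^{r}a_{\vec{n},k}$, so $\dot{b}_{\vec{n},k}=\dot{\beta}_{\vec{n}}-\dot{\beta}_{\vec{n}+\vec{e}_k}=\sum_{j=1}^{r}(a_{\vec{n},j}-a_{\vec{n}+\vec{e}_k,j})$. For the first equation, observe that the evolution identity lets one rewrite~\eqref{rec} with $i=k$ as $xP_{\vec{n}}=P_{\vec{n}+\vec{e}_k}+b_{\vec{n},k}P_{\vec{n}}+\dot{P}_{\vec{n}}$; feeding this into $\dot{h}_{\vec{n},k}=-\mathcal{L}_k\big[x^{n_k}(xP_{\vec{n}})\big]+\mathcal{L}_k\big[x^{n_k}\dot{P}_{\vec{n}}\big]$ and using $\mathcal{L}_k\big[x^{n_k}P_{\vec{n}+\vec{e}_k}\big]=0$ and $\mathcal{L}_k\big[x^{n_k}P_{\vec{n}}\big]=h_{\vec{n},k}$, the two $\dot{P}_{\vec{n}}$ contributions cancel and one is left with $\dot{h}_{\vec{n},k}=-b_{\vec{n},k}h_{\vec{n},k}$. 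Differentiating $a_{\vec{n},k}=h_{\vec{n},k}/h_{\vec{n}-\vec{e}_k,k}$ and invoking this relation at both $\vec{n}$ and $\vec{n}-\vec{e}_k$ then yields $\dot{a}_{\vec{n},k}=a_{\vec{n},k}(b_{\vec{n}-\vec{e}_k,k}-b_{\vec{n},k})$. I expect the only real nuisance to be bookkeeping: keeping careful track of which multi-indices must be normal at each step—the relation for $\dot{h}_{\vec{n},k}$ needs $\vec{n}$, $\vec{n}+\vec{e}_k$ and all $\vec{n}-\vec{e}_j$ normal, and its use at $\vec{n}-\vec{e}_k$ brings in further nearby indices—which is a non-issue under the perfectness hypothesis of the second part, where all multi-indices are normal.

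Finally, for the global statement, the inverse spectral transform applied to recurrence data $\{a_{\vec{n},k}(0),b_{\vec{n},k}(0)\}$ of a perfect system reconstructs the measures $\{d\mu_k(\cdot,0)\}$; the evolution~\eqref{I_2.1} is then defined for all $t\ge0$ and stays perfect by hypothesis, so the m-OPs $P_{\vec{n}}(\cdot;t)$ and all coefficients $a_{\vec{n},k}(t),b_{\vec{n},k}(t)$ exist for every $t\ge0$. Being ratios of determinants of moment matrices with entries $\int x^{i}e^{-tx}d\mu_k(x,0)$, which are smooth in $t$ while the denominators stay nonzero by perfectness, these coefficients are smooth in $t$; applying the local statement just proved with the time origin moved to an arbitrary $t_0\ge0$ shows that they solve~\eqref{I_2.2} on all of $[0,\infty)$. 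This is the asserted global solution, obtained through the spectral transforms $\{a_{\vec{n},k}(t),b_{\vec{n},k}(t)\}\rightleftarrows\{d\mu_k(x,t)\}$ and~\eqref{I_2.1}; it is the unique solution of the Cauchy problem since the right-hand side of~\eqref{I_2.2} is polynomial, hence locally Lipschitz, in the unknowns.
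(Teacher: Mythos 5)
Your proposal is correct, and it takes a genuinely different route from the paper's proof. The paper's argument in Subsection~\ref{subsec:2.2} is built on the type~I polynomials and the biorthogonality~\eqref{biorth}: it differentiates the biorthogonality relations, identifies $b_{\vec{n},k}$ with the logarithmic derivative of the type~I leading coefficient $\kappa_{\vec{n}+\vec{e}_k,k}$ (equation~\eqref{blabla}), and gets the $\dot{b}$-equation through a rather long computation using both recurrences~\eqref{1.1} and~\eqref{1.3}; a second, determinantal route appears in Subsection~\ref{subsec:2.3}, where your central identity $\dot{P}_{\vec{n}}=\sum_k a_{\vec{n},k}P_{\vec{n}-\vec{e}_k}$ (the paper's~\eqref{diff}) is proved via Pl\"ucker relations for Wronskian-type determinants and the Toda equations are then read off from the compatibility of the Lax set together with the consistency relations~\eqref{diffcoeff}. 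You prove the same evolution identity by a pure orthogonality-plus-uniqueness argument (both sides satisfy the same $|\vec{n}|$ linear conditions, and normality of $\vec{n}$ forces the difference to vanish), and then extract the two equations of~\eqref{I_2.2} elementarily: the $\dot{b}$-equation from the subleading coefficients $\beta_{\vec{n}}$, and the $\dot{a}$-equation from the norming constants via $\dot{h}_{\vec{n},k}=-b_{\vec{n},k}h_{\vec{n},k}$. This buys a shorter proof that avoids type~I polynomials, biorthogonality and any appeal to~\eqref{diffcoeff}; the paper's biorthogonal route, in exchange, produces intermediate identities of independent interest. Your bookkeeping remark is apt and not a defect relative to the paper: using the relations at the shifted indices $\vec{n}\pm\vec{e}_k$ drags in normality of indices such as $\vec{n}\pm\vec{e}_k-\vec{e}_j$ beyond those listed in the hypothesis, but the paper's own proof does exactly the same (it uses~\eqref{1.3} at $\vec{n}+\vec{e}_k$, involving $Q_{\vec{n}+\vec{e}_j+\vec{e}_k}$), and the quantities $a_{\vec{n}+\vec{e}_k,j}$ in the statement already presuppose it; under perfectness the issue disappears, as you say.

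One small overreach: your closing uniqueness claim does not follow as stated, since~\eqref{I_2.2} is an infinite lattice of ODEs and local Lipschitz continuity of each individual right-hand side does not by itself give uniqueness of the Cauchy problem for the whole system. Fortunately the theorem only asserts the existence of a global solution produced by the spectral transformations and the evolution~\eqref{I_2.1}, and that part your argument does establish at the same level of detail as the paper.
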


\begin{Remark}\label{RemT1-1}
We would like to emphasize here that the system~(\ref{I_2.2}) is solvable as long as we choose the initial values subject to (\ref{diffcoeff}), which in turn means that we have to start with a~perfect system of measures. Once again, such systems generate coef\/f\/icients that satisfy \eqref{diffcoeff}. Moreover, for the global solution the Toda dynamics preserve the stationary equations~(\ref{diffcoeff}). Therefore,~(\ref{I_2.2}) forms a ``weakly integrable'' system.
Similar phenomena occur for the quantum Hamiltonians associated with classical multiple orthogonal polynomials~\cite{Miki2,Miki,FranWVA}. Also, it should be noted that there are other multidimensional integrable systems studied in the literature. For instance, see~\cite{Manakov}; see also~\cite{Apt2016} and references therein.
\end{Remark}

\begin{Remark}\label{RemT1-2}
There are two well-known perfect systems of measures. One of them is called an Angelesco system~\cite{21} and is formed by measures with supports on disjoint intervals. The perfectness of an Angelesco system immediately follows from~\eqref{orthogonality} and the properties of zeros of OPs. The other one is called a Nikishin system~\cite{23}. The measures from a Nikishin system have the same support but some extra conditions (analytic properties of the weight functions) need to hold. Details of the def\/inition of Nikishin systems and the proof of their perfectness can be found in~\cite{FL}.
\end{Remark}

\begin{Remark}\label{RemT1-3}
The system $\{d\mu_j\}_{j=1}^{r}$ consists of the \textit{spectral measures} of the $r$ marginal one-dimensional dif\/ference operators def\/ined by the three-term recurrence relations~\eqref{recOP} with coef\/f\/i\-cients $a_{n}^{(j)}:=a_{ne_{j},j}$, $b_{n}^{(j)}:=b_{ne_{j},j}$, $j=1,\dots, r$. These marginal spectral measures can be taken as \textit{spectral data} for the multidimensional dif\/ference operator def\/ined by the recurrence rela\-tions~\eqref{rec} with
the coef\/f\/icients $\{a_{\vec{n},k}, b_{\vec{n},k}\}$. Therefore, the direct and inverse spectral transformations $\{a_{\vec{n},k}(t), b_{\vec{n},k}(t)\}\rightleftarrows\{d\mu_k(t,x)\}$ reduce to the well-known direct and inverse spectral problems for OPs together with a scheme to solve a boundary value problem (BVP) for the discrete integrable system~\eqref{diffcoeff}; see \cite{ADvanA2014, FHVanA} for more details on this matter.
\end{Remark}

The proof of Theorem~\ref{T1} and properties of m-OPs when the corresponding measures evolve as in \eqref{I_2.1} are presented in Section~\ref{sec:2} (see Subsections~\ref{subsec:2.1}--\ref{subsec:2.3}).

It turns out that an important place on the lattice of m-OPs \eqref{orthogonality}--\eqref{rec} is the diagonal. Namely, the \textit{diagonal sequence} $\{q_N\}$ of multiple orthogonal polynomials~\cite{3} (also called \textit{$r$-orthogonal polynomials}~\cite{Cheikh-I, Cheikh-II,Douak}) that is generated in the following manner
\begin{gather}\label{dop}
q_{N} =P_{(n,\dots ,n)+\sum\limits_{j=1}^k e_j},\qquad N =: nr+k, \qquad k=0,\dots,r-1.
\end{gather}
The diagonal m-OPs appear as the common denominator of the convergents of the underlying Jacobi--Perron vector continued fraction \cite{Bern, Nikishin}, which is a functional analog of the continued fraction introduced by Jacobi \cite{Jac1, Jac2} and Perron~\cite{Perr} in their approach to f\/ind a characterization of irrationals of orders higher than quadratic. As a matter of fact, the polynomial sequence $\{q_N\}$ satisf\/ies the \textit{step-line recurrence relation}~\cite{6, Nikishin}
\begin{align}\label{recStepLine}
xq_{N}(x)=q_{N+1}(x)+\beta_{N}q_{N}(x)+
\sum_{k=1}^r \alpha_{N}^{(k)}q_{N-k}(x),\qquad n\in{\mathbb Z}_+.
\end{align}
If the system of measures $\{d\mu_j\}_{j=1}^{r}$ evolves subject to \eqref{I_2.1}, then (see \cite{Adler2}) the equations for the coef\/f\/icients from~\eqref{recStepLine} can be written in the Lax pair form
\begin{gather}\label{LAtoda}
\dot{L}=[L,(L)_{-}], \qquad L := L\big(1, \beta_{N}, \alpha_{N}^{(1)},\dots, \alpha_{N}^{(r)}\big),
\end{gather}
where $(X)_-$ denotes the strictly lower part of the matrix $X$ and $L$ is a $(r+2)$-banded (lower Hessenberg) semi-inf\/inite matrix with 1's on the upper diagonal, $\{\beta_{N}\}$ on the main diagonal and $\{\alpha_{N}^{(k)}\}_{k=1}^r$ building the lower diagonals. In Subsection~\ref{subsec:2.4} we show relations between two generalizations of the Toda equations, which are basically the relations between~\eqref{I_2.2} and~\eqref{LAtoda}.

In the same spirit as it is done for the continuous-time Toda equation, we obtain in Section~\ref{sec:3} a discrete analogue of \eqref{I_2.1}
and~\eqref{I_2.2}.
\begin{Theorem}\label{T2}
Suppose that the system of measures
\begin{gather*}
d\mu_j(x,t)=x^{t}d\mu_j(x,0),\qquad t\in{\mathbb Z}_+, \qquad 1\leq j\leq r,
\end{gather*}
generates the m-OPs $\{P_{\vec{n}}^t(x)\}$ with normal indices $\vec{n}$ and $ \{\vec{n}\pm\vec{e}_k\}_{k=1}^r$. Then the following nonlinear dif\/ference system on a semi-infinite lattice
\begin{gather}
A_{\vec{n},j}^{t+1}+\sum_{k=1}^rB_{\vec{n},k}^{t+1}=A_{\vec{n},j}^t+\sum_{k=1}^rB_{\vec{n}+\vec{e_j},k}^t,\nonumber\\
A_{\vec{n}-\vec{e}_j,j}^{t+1}B_{\vec{n},j}^{t+1}=A_{\vec{n},j}^tB_{\vec{n},j}^t,
\qquad 1 \leq j \leq r,\label{Int_DToda}
\end{gather}
 can be solved locally with respect to the space variable. Moreover, if the system of measures is perfect for any discrete time $t\in{\mathbb Z}_+$, then the solution exists globally and it is given by the formulas
\begin{gather*}
A_{\vec{n},j}^t=-\frac{P_{\vec{n}+\vec{e}_j}^t(0)}{ P_{\vec{n}}^t(0) },\qquad
B_{\vec{n},k}^t=-a_{\vec{n},k}^t\frac{P_{\vec{n}-\vec{e}_k}^t(0)}{P_{\vec{n}}^t(0)}.
\end{gather*}
\end{Theorem}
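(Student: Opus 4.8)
The plan is to mimic the proof of Theorem~\ref{T1}: the evolution $d\mu_j(x,t+1)=x\,d\mu_j(x,t)$ is a Christoffel transformation at the origin, so I would first ``dress'' the m-OPs at time $t$ into those at time $t+1$ and back, and then read the system~\eqref{Int_DToda} off the dressing coefficients. (The guiding picture is that the discrete-time Toda step is the $LR$/$qd$ step $J^{t}=L^{t}U^{t}\mapsto J^{t+1}=U^{t}L^{t}$ on a Jacobi matrix, $A_{n}^{t},B_{n}^{t}$ being the nontrivial entries of the bidiagonal factors.) The first ingredient is the \emph{Christoffel formula}: for every $j$,
\[
 xP_{\vec{n}}^{t+1}(x)=P_{\vec{n}+\vec{e}_j}^{t}(x)+A_{\vec{n},j}^{t}\,P_{\vec{n}}^{t}(x),\qquad A_{\vec{n},j}^{t}=-\frac{P_{\vec{n}+\vec{e}_j}^{t}(0)}{P_{\vec{n}}^{t}(0)} .
\]
Here $xP_{\vec{n}}^{t+1}$ is monic of degree $|\vec{n}|+1$, and since $\mathcal{L}_m^{t}[x^{i}\cdot xP_{\vec{n}}^{t+1}]=\mathcal{L}_m^{t+1}[x^{i}P_{\vec{n}}^{t+1}]=0$ for $i=0,\dots,n_m-1$, $1\le m\le r$, it satisfies the $|\vec{n}|$ orthogonality relations that determine $P_{\vec{n}}^{t}$ and that are also satisfied by $P_{\vec{n}+\vec{e}_j}^{t}$; hence $xP_{\vec{n}}^{t+1}-P_{\vec{n}+\vec{e}_j}^{t}$ has degree $\le|\vec{n}|$ and, by normality of $\vec{n}$, is a scalar multiple of $P_{\vec{n}}^{t}$, the scalar being fixed by putting $x=0$. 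The same computation forces $P_{\vec{n}}^{t}(0)\neq 0$: otherwise $P_{\vec{n}+\vec{e}_j}^{t}/x$ would be a monic polynomial of degree $|\vec{n}|$ satisfying the $|\vec{n}|+1$ relations that determine the normal index $\vec{n}+\vec{e}_j$ at time $t+1$, which is impossible. This non-vanishing (and that of the neighboring $P_{\vec{n}\pm\vec{e}_k}^{t}(0)$, obtained in the same way) is exactly what ``solvable locally with respect to the space variable'' means. The second ingredient is the \emph{Geronimus formula}
\[
 P_{\vec{n}}^{t}(x)=P_{\vec{n}}^{t+1}(x)+\sum_{k=1}^{r}B_{\vec{n},k}^{t}\,P_{\vec{n}-\vec{e}_k}^{t+1}(x),\qquad B_{\vec{n},k}^{t}=-a_{\vec{n},k}^{t}\frac{P_{\vec{n}-\vec{e}_k}^{t}(0)}{P_{\vec{n}}^{t}(0)},
\]
which I would get by substituting the Christoffel formula (at $\vec{n}$ and at each $\vec{n}-\vec{e}_k$) into the recurrence~\eqref{rec} at time $t$: the coefficient of $P_{\vec{n}}^{t}$ then collapses to $0$ by virtue of the $x=0$ specialization of~\eqref{rec}, and one divides the resulting polynomial identity by $x$.

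From here the rest is bookkeeping. The definitions of $A,B$ together with the $x=0$ specialization of~\eqref{rec} immediately give, at every time, the two ``$LU$'' relations
\[
 a_{\vec{n},k}^{t}=A_{\vec{n}-\vec{e}_k,k}^{t}\,B_{\vec{n},k}^{t},\qquad b_{\vec{n},j}^{t}=A_{\vec{n},j}^{t}+\sum_{k=1}^{r}B_{\vec{n},k}^{t}.
\]
For the complementary ``$UL$'' relations I would expand $xP_{\vec{n}}^{t+1}$ in two ways: by~\eqref{rec} at time $t+1$, and by applying the Christoffel formula followed by the Geronimus formula, which returns everything to the time-$(t+1)$ polynomials at the cost of the ``sideways'' terms $P_{\vec{n}+\vec{e}_j-\vec{e}_k}^{t+1}$ ($k\neq j$); these I reduce to $P_{\vec{n}}^{t+1}$ and $P_{\vec{n}-\vec{e}_k}^{t+1}$ via $P_{\vec{n}+\vec{e}_j-\vec{e}_k}^{t+1}=P_{\vec{n}}^{t+1}+(b_{\vec{n}-\vec{e}_k,k}^{t+1}-b_{\vec{n}-\vec{e}_k,j}^{t+1})P_{\vec{n}-\vec{e}_k}^{t+1}$, a direct consequence of~\eqref{rec}. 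Since $P_{\vec{n}+\vec{e}_j}^{t+1}$, $P_{\vec{n}}^{t+1}$ and $\{P_{\vec{n}-\vec{e}_l}^{t+1}\}_{l=1}^{r}$ are linearly independent (the last $r$ because $\mathcal{L}_m^{t+1}[x^{n_m-1}P_{\vec{n}-\vec{e}_l}^{t+1}]$ vanishes for $l\neq m$ but not for $l=m$, by normality), comparing the two expansions and matching the coefficients of $P_{\vec{n}-\vec{e}_j}^{t+1}$ and of $P_{\vec{n}}^{t+1}$ yields
\[
 a_{\vec{n},j}^{t+1}=A_{\vec{n},j}^{t}\,B_{\vec{n},j}^{t},\qquad b_{\vec{n},j}^{t+1}=A_{\vec{n},j}^{t}+\sum_{k=1}^{r}B_{\vec{n}+\vec{e}_j,k}^{t}.
\]

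Finally, equating the two expressions for $a_{\vec{n},j}^{t+1}$ — namely $A_{\vec{n}-\vec{e}_j,j}^{t+1}B_{\vec{n},j}^{t+1}$ from the ``$LU$'' relation at time $t+1$ and $A_{\vec{n},j}^{t}B_{\vec{n},j}^{t}$ from the ``$UL$'' relation — gives the second equation of~\eqref{Int_DToda}, and equating the two expressions for $b_{\vec{n},j}^{t+1}$ gives the first; the closed forms of $A_{\vec{n},j}^{t}$ and $B_{\vec{n},k}^{t}$ have been produced along the way. Everything above is legitimate wherever the relevant $P_{\vec{m}}^{t}(0)$ do not vanish, which is what the normality hypotheses on $\vec{n}$ and $\vec{n}\pm\vec{e}_k$ supply, so the local claim follows. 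If the system $\{d\mu_j(\cdot,t)\}$ is perfect for every $t\in{\mathbb Z}_+$, then all indices are normal at all times, every $P_{\vec{m}}^{t}(0)\neq 0$, and a global solution is obtained by iterating $\{A_{\vec{n},k}^{t},B_{\vec{n},k}^{t}\}\mapsto\{a_{\vec{n},k}^{t},b_{\vec{n},k}^{t}\}\mapsto\{d\mu_k(\cdot,t)\}\mapsto\{d\mu_k(\cdot,t+1)\}\mapsto\{a_{\vec{n},k}^{t+1},b_{\vec{n},k}^{t+1}\}\mapsto\{A_{\vec{n},k}^{t+1},B_{\vec{n},k}^{t+1}\}$, the middle arrows being the direct and inverse spectral transformations for (multiple) orthogonal polynomials already used in Theorem~\ref{T1} (cf.\ Remark~\ref{RemT1-3}). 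I expect the main obstacle to be the second step — the reduction of the sideways polynomials and the careful coefficient matching via linear independence — together with making precise that ``local solvability'' is exactly the non-vanishing of $P_{\vec{n}}^{t}(0)$ and that it is furnished by normality of the adjacent indices.
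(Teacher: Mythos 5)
Your proposal is correct and follows essentially the same route as the paper: the Christoffel transformation \eqref{mct2} and the Geronimus transformation \eqref{mgt} at $\lambda_t=0$, the ``LU'' identities \eqref{tcoeff} obtained from the $x=0$ specialization, and the comparison of the two resulting expansions against the nearest-neighbor recurrence at times $t$ and $t+1$ to produce \eqref{Int_DToda} together with the closed forms of $A_{\vec{n},j}^t$ and $B_{\vec{n},k}^t$. The only deviations are cosmetic: you derive the Geronimus step by substitution into \eqref{rec} rather than by the paper's orthogonality expansion of $P_{\vec{n}}^t-P_{\vec{n}}^{t+1}$, you eliminate the sideways terms $P_{\vec{n}+\vec{e}_j-\vec{e}_k}^{t+1}$ directly and match coefficients by linear independence instead of invoking the consistency relations \eqref{div}, and you make explicit (via normality of the adjacent indices) the non-vanishing $P_{\vec{n}}^t(0)\neq 0$ that the paper handles through the choice of $\lambda_t$.
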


To get to \eqref{Int_DToda} we present two approaches. The f\/irst one is based on Christof\/fel and Geronimus transformations, which are known to be discrete analogues of Darboux transformations, see Subsection~\ref{subsec:3.2}. The second method is obtained by following the consistency approach from~\cite{BS2002} and~\cite{SNderK2011}. In particular, the Lax pair we get for the second method is a certain adaptation of the one from~\cite{SNderK2011} to our setting.
The latter approach allows us to present the discrete-time Toda equations~\eqref{Int_DToda} and the consistency equations~\eqref{diffcoeff} in a unif\/ied fashion in the form of Lax pairs commutation relations, see Subsection~\ref{subsec:3.4}. It is worth mentioning that both the approaches are related to a generalization of the quotient-dif\/ference algorithm for the Pad\'{e} table. Moreover, our approach complements some earlier attempts to develop the q-d algorithm~\cite{I2003} related to multiple orthogonal polynomials and puts it into the context of discrete integrable systems.

Also, in Subsection~\ref{subsec:3.3} we connect the discrete-time multidimensional Toda equations \eqref{Int_DToda} and the discrete time analogue of Toda chain equation~\eqref{LAtoda} for the recurrence coef\/f\/icients of the diagonal sequence of m-OPs \eqref{recStepLine}. Using the formulas for multiple Laguerre polyno\-mials~\cite{ABVA, vanA2011} we give the initial data for the explicit solution of the multidimensional Toda lattice in both cases. For the continuous time it can be found in Subsection~\ref{subsec:2.5} and for the discrete time in Subsection~\ref{subsec:3.5}.

\section{The continuous-time higher analogues of the Toda lattice}\label{sec:2}

\subsection{Preliminaries}\label{subsec:2.1}
Let us brief\/ly go over the basic def\/initions, see \cite[Chapter 23]{MI} for details. The m-OPs $P_{\vec{n}}$ def\/ined in \eqref{orthogonality} are called multiple orthogonal (or Hermite--Pad\'{e}) polynomials of \textit{type II}. In what follows, we suppose that~$P_{\vec{n}}$ are monic so that the m-OPs are uniquely determined. It is useful to consider the dual construction. More precisely, \textit{type~I} multiple orthogonal polynomials $(C_{\vec{n}, 1}, \dots,C_{\vec{n},r})$ are such that~$C_{\vec{n},j}$ is a polynomial of degree at most ${n}_j - 1$ with the orthogonality relations
\begin{gather*}
\sum^r_{j=1}\int x^kC_{\vec{n},j}(x) d\mu_j (x) = 0,\qquad k= 0, 1, \dots , |\vec{n}| - 2,
\end{gather*}
and the normalization
\begin{gather*}
\sum^r_{j=1}\int x^{ |\vec{n}| -1}C_{\vec{n},j}(x) d\mu_j (x) = 1, \qquad
C_{\vec{n},j}(x)=\kappa_{\vec{n},j}x^{n_j-1}+\cdots .
\end{gather*}
We assume that the $r$ measures $\mu_1, \dots,\mu_r$ are all absolutely continuous with respect to a~measu\-re~$\mu$ and that $d\mu_j (x) = w_j(x) d\mu (x) $ and we will use the following notation
\begin{gather*}Q_{\vec{n}}(x) = \sum^r_{j=1}C_{\vec{n},j}(x)w_j(x).
\end{gather*}

It is easy to check that the type I and type II multiple orthogonal polynomials form a~ \textit{bi\-ortho\-nor\-mal system} in the sense that the functions $Q_{\vec{n}}$ generated by type I multiple orthogonal polynomials satisfy
\begin{gather}\label{biorth}
\int P_{\vec{n}}(x) Q_{\vec{m}}(x) d\mu (x) = \begin{cases}
0& \text {if } |\vec{m}|\leq |\vec{n}|,\\
0& \text{if } |\vec{n}|\leq |\vec{m}| - 2,\\
1 &\text{if } \vec{m} = \vec{n} +\vec{e}_k \text{ for } 1\leq k\leq r.
 \end{cases}
\end{gather}
Furthermore, for the type II polynomials we have \eqref{rec}:
\begin{gather}\label{1.1}
xP_{\vec{n}}(x) =P_{ \vec{n} +\vec{e}_k}(x)+b_{\vec{n},k}P_{\vec{n}}(x)+ \sum^r_{j=1}a_{\vec{n},j}P_{ \vec{n} -\vec{e}_j}(x),
\end{gather}
and using the type I polynomials we get
\begin{gather}\label{1.2}
b_{\vec{n},k}=\int xP_{\vec{n}} Q_{\vec{n}+\vec{e}_k}(x) d\mu (x),
\qquad
a_{\vec{n},j}=\frac{\int x^{ n_j }P_{\vec{n}}(x) d\mu_j (x)}{\int x^{ n_j -1}P_{\vec{n}-\vec{e}_k}(x) d\mu_j (x)}.
\end{gather}
For type I we have a similar recurrence relation
\begin{gather}\label{1.3}
xQ_{\vec{n}}(x) =Q_{ \vec{n} -\vec{e}_k}(x)+b_{\vec{n}-\vec{e}_k,k}Q_{\vec{n}}(x)+ \sum^r_{j=1}a_{\vec{n},j}Q_{ \vec{n} +\vec{e}_j}(x).
\end{gather}
We need to point out two useful relations here. If we multiply \eqref{1.1} by $Q_{\vec{n}}(x)$ and integrate, then the biorthogonality gives
\begin{gather}\label{1.5}
\int xP_{\vec{n}}(x) Q_{\vec{n}}(x) d\mu (x)=\sum_{j=1}^ra_{\vec{n},j}.
\end{gather}
The orthogonality properties of $P_{\vec{n}}$ imply that
\begin{gather*}
\int P_{\vec{n}}(x) Q_{\vec{n}+\vec{e}_k}(x) d\mu (x) =\int P_{\vec{n}} C_{\vec{n}+\vec{e}_k,k}(x) d\mu_k (x),
\end{gather*}
so we arrive at
\begin{gather}\label{1.6}
1=\kappa_{\vec{n}+\vec{e}_k,k}\int P_{\vec{n}}(x) x^{ n_k} d\mu_k (x).
\end{gather}

\subsection{Time dynamics}\label{subsec:2.2}
In this subsection we present a \textit{proof of Theorem}~\ref{T1}. To this end, we consider a perfect system of measures $(\mu_1(x,t), \dots,\mu_r(x,t))$ that depend on time as in \eqref{I_2.1}, i.e., $d\mu_k(x,t):=e^{-tx} d\mu_k(x)$ for $1\leq k\leq r$. Now the corresponding type I and type II
multiple orthogonal polynomials depend on time as well, that is, we have $Q_{\vec{n}}(x;t)$ and $P_{\vec{n}}(x;t)$. Our goal is to show that the relations \eqref{1.2} evaluated at the moment~$t$
\begin{gather*}
b_{\vec{n},k}(t)=\int xP_{\vec{n}} (x;t) Q_{\vec{n}+\vec{e}_k}(x;t)e^{-xt} d\mu (x),
\qquad
a_{\vec{n},k}(t)=\frac{\int x^{ n_k}P_{\vec{n}}(x;t) e^{-xt} d\mu_k (x)}{\int x^{ n_k -1}P_{\vec{n}-\vec{e}_k}(x) e^{-xt} d\mu_k (x)},
\end{gather*}
imply\textit{ multiple Toda} (m-Toda) equations \eqref{I_2.2}, i.e., for $1\leq k\leq r$
\begin{gather}\label{2.1}
\dot{a}_{\vec{n},k}=a_{\vec{n},k}[b_{\vec{n}-\vec{e}_k,k}-b_{\vec{n},k}],
\end{gather}
and
\begin{gather}\label{2.2}
\dot{b}_{\vec{n},k}=\sum_{j=1}^r (a_{\vec{n},j}-a_{\vec{n}+\vec{e}_k,j}).
\end{gather}

From the biorthogonality \eqref{biorth} we get
\begin{gather*}
\int P_{\vec{n}} (x;t) Q_{\vec{n}+\vec{e}_k}(x;t)e^{-xt} d\mu (x) =1.
\end{gather*}
If we take derivatives with respect to $t$, then this gives
\begin{gather*}
\int \dot {P}_{\vec{n}} (x;t) Q_{\vec{n}+\vec{e}_k}(x;t)e^{-xt} d\mu (x) +
\int P_{\vec{n}} (x;t) \dot{Q}_{\vec{n}+\vec{e}_k}(x;t)e^{-xt} d\mu (x)
 \\
\qquad{} =\int xP_{\vec{n}} (x;t) Q_{\vec{n}+\vec{e}_k}(x;t)e^{-xt} d\mu (x).
\end{gather*}
Observe that $\dot {P}_{\vec{n}}$ is a polynomial of degree at most $ |\vec{n}| - 1$ since ${P}_{\vec{n}}$ is a monic polynomial, hence the orthogonality for type I multiple orthogonal polynomials gives
\begin{gather*}
\int \dot {P}_{\vec{n}} (x;t) Q_{\vec{n}+\vec{e}_k}(x;t)e^{-xt} d\mu (x)=0.
\end{gather*}
The orthogonality for type II multiple orthogonal polynomials gives
\begin{gather}\label{bla}
\int P_{\vec{n}} (x;t) \dot{Q}_{\vec{n}+\vec{e}_k}(x;t)e^{-xt} d\mu (x) =\int P_{\vec{n}} (x;t) \dot{A}_{\vec{n}+\vec{e}_k,k}(x;t)e^{-xt} d\mu_k (x).
\end{gather}
If we use $C_{\vec{n}+\vec{e}_k,k}(x;t)=\kappa_{\vec{n}+\vec{e}_k,k}(t)x^{n_k}+\cdots$,
then we f\/ind
\begin{gather*}
\int P_{\vec{n}} (x;t) \dot{A}_{\vec{n}+\vec{e}_k,k}(x;t)e^{-xt} d\mu_k (x)=\kappa_{\vec{n}+\vec{e}_k,k}\int x^{ n_k}P_{\vec{n}}(x;t) e^{-xt} d\mu_k (x)=\frac{\dot{\kappa}_{\vec{n},j}}{\kappa_{\vec{n}+\vec{e}_k,j}},
\end{gather*}
where the last equality follows from~\eqref{1.5}. Combining these results with~\eqref{1.2} already gives
\begin{gather}\label{blabla}
b_{\vec{n},k}(t)=\frac{\dot{\kappa}_{\vec{n}+\vec{e}_k,k}}{\kappa_{\vec{n}+\vec{e}_k,k}}.
\end{gather}
Taking derivatives in \eqref{1.6} gives
\begin{gather*}
\frac{\dot{\kappa}_{\vec{n},k}}{\kappa_{\vec{n},k}}-\frac{\dot{\kappa}_{\vec{n}+\vec{e}_k,k}}{\kappa_{\vec{n}+\vec{e}_k,k}}
=\frac{\dot{a}_{\vec{n},k}}{a_{\vec{n},k}},
\end{gather*}
which implies
\begin{gather*}
b_{\vec{n}-\vec{e}_k,k}(t)-b_{\vec{n},k}(t)=\frac{\dot{a}_{\vec{n}+\vec{e}_k,k}}{a_{\vec{n}+\vec{e}_k,k}},
\end{gather*}
giving \eqref{2.1}. Thus the f\/irst relation in \eqref{I_2.2} is proved.

For the second relation in \eqref{I_2.2} we use \eqref{1.2} to f\/ind
\begin{gather*}
b_{\vec{n},k}(t)=\int xP_{\vec{n}} (x;t) Q_{\vec{n}+\vec{e}_k}(x;t)e^{-xt} d\mu (x).
\end{gather*}
Taking the derivative with respect to $t$ gives
\begin{gather*}
\dot{b}_{\vec{n},k}(t)=\int x\dot {P}_{\vec{n}} (x;t) Q_{\vec{n}+\vec{e}_k}(x;t)e^{-xt} d\mu (x)\\
\hphantom{\dot{b}_{\vec{n},k}(t)=}{} + \int xP_{\vec{n}} (x;t) \dot{Q}_{\vec{n}+\vec{e}_k}(x;t)e^{-xt} d\mu (x) -\int x^2 P_{\vec{n}} (x;t) Q_{\vec{n}+\vec{e}_k}(x;t)e^{-xt} d\mu (x).
\end{gather*}
If we use \eqref{1.3} then we f\/ind
\begin{gather*}
\int x\dot {P}_{\vec{n}} (x;t) Q_{\vec{n}+\vec{e}_k}(x;t)e^{-xt} d\mu (x)\\
\qquad{} = \int \dot {P}_{\vec{n}} (x;t)\left(Q_{ \vec{n}}(x;t)+b_{\vec{n},k}Q_{\vec{n}+\vec{e}_k}(x;t)+ \sum^r_{j=1}a_{\vec{n}+\vec{e}_k,j}Q_{ \vec{n} +\vec{e}_j+\vec{e}_k}(x;t)\right)e^{-xt} d\mu (x).
\end{gather*}
Since $\dot {P}_{\vec{n}}$ is of degree at most $ |\vec{n}| - 1$, the orthogonality for type~I multiple orthogonal
polynomials gives
\begin{gather*}
\int x\dot {P}_{\vec{n}} (x;t) Q_{\vec{n}+\vec{e}_k}(x;t)e^{-xt} d\mu (x) = \int \dot {P}_{\vec{n}} (x;t) Q_{\vec{n}}(x;t)e^{-xt} d\mu (x).
\end{gather*}
Taking the derivative of
\begin{gather*}
\int {P}_{\vec{n}} (x;t) Q_{\vec{n}}(x;t)e^{-xt} d\mu (x)=0
\end{gather*}
leads to
\begin{gather*}
\int \dot {P}_{\vec{n}} (x;t) Q_{\vec{n}}(x;t)e^{-xt} d\mu (x)=\int x {P}_{\vec{n}} (x;t) Q_{\vec{n}}(x;t)e^{-xt} d\mu (x)= \sum^r_{j=1}a_{\vec{n},j}(t),
\end{gather*}
hence
\begin{gather*}
\int x\dot {P}_{\vec{n}} (x;t) Q_{\vec{n}+\vec{e}_k}(x;t)e^{-xt} d\mu (x) = \sum^r_{j=1}a_{\vec{n},j}(t).
\end{gather*}
If we use \eqref{1.1} then we f\/ind
\begin{gather*}
\int x {P}_{\vec{n}} (x;t) \dot{Q}_{\vec{n}+\vec{e}_k}(x;t)e^{-xt} d\mu (x)\\
\qquad{} = \int\left(P_{ \vec{n}+\vec{e}_k}(x;t)+b_{\vec{n},k}(t)P_{\vec{n}}(x;t)+ \sum^r_{j=1}a_{\vec{n},j}P_{ \vec{n} -\vec{e}_j}(x;t)\right) \dot {Q}_{\vec{n}+\vec{e}_k} (x;t)e^{-xt} d\mu (x).
\end{gather*}
The orthogonality of type II multiple orthogonal polynomials gives
\begin{gather*}
 \int x {P}_{\vec{n}} (x;t) \dot{Q}_{\vec{n}+\vec{e}_k}(x;t)e^{-xt} d\mu (x)\\
\qquad{} =b_{\vec{n},k}\int {P}_{\vec{n}} (x;t) \dot{Q}_{\vec{n}+\vec{e}_k}(x;t)e^{-xt} d\mu (x)+ \sum^r_{j=1}a_{\vec{n},j}\int {P}_{\vec{n}-\vec{e}_j}(x;t)\dot{Q}_{ \vec{n} +\vec{e}_k}(x;t)e^{-xt} d\mu (x).
\end{gather*}
From \eqref{bla} and \eqref{blabla} we recall that
\begin{gather*}
\int {P}_{\vec{n}} (x;t) \dot{Q}_{\vec{n}+\vec{e}_k}(x;t)e^{-xt} d\mu (x) =b_{\vec{n},k}.
\end{gather*}
If we take the derivative of
\begin{gather*}
\int {P}_{\vec{n}-\vec{e}_j} (x;t) {Q}_{\vec{n}+\vec{e}_k}(x;t)e^{-xt} d\mu (x) =0,
\end{gather*}
then we f\/ind
\begin{gather*}
\int {P}_{\vec{n}-\vec{e}_j} (x;t) \dot{Q}_{\vec{n}+\vec{e}_k}(x;t)e^{-xt} d\mu (x)
=\int x {P}_{\vec{n}-\vec{e}_j} (x;t) {Q}_{\vec{n}+\vec{e}_k}(x;t)e^{-xt} d\mu (x),
\end{gather*}
and after using \eqref{1.1} and the biorthogonality, we f\/ind
\begin{gather*}
\int {P}_{\vec{n}-\vec{e}_j} (x;t) \dot{Q}_{\vec{n}+\vec{e}_k}(x;t)e^{-xt} d\mu (x) =1.
\end{gather*}
This gives
\begin{gather*}
\int x {P}_{\vec{n}} (x;t) \dot{Q}_{\vec{n}+\vec{e}_k}(x;t)e^{-xt} d\mu (x) =
{b}^2_{\vec{n},k}+\sum^r_{j=1}a_{\vec{n},j}.
\end{gather*}
Finally use \eqref{1.1}, \eqref{1.3} and the biorthogonality to f\/ind
\begin{gather*}
\int x^2 {P}_{\vec{n}} (x;t) {Q}_{\vec{n}+\vec{e}_k}(x;t)e^{-xt} d\mu (x) =
\sum^r_{j=1}a_{\vec{n}+\vec{e}_k,j}+{b}^2_{\vec{n},k}+\sum^r_{j=1}a_{\vec{n},j}.
\end{gather*}
Combining all these results then gives
\begin{gather*}
\dot{b}_{\vec{n},k}=\sum^r_{j=1}a_{\vec{n},j}-\sum^r_{j=1}a_{\vec{n}+\vec{e}_k,j},
\end{gather*}
which is the same as \eqref{2.2}. This f\/inishes the proof of Theorem~\ref{T1}.

\subsection{Time dependent m-OPs}\label{subsec:2.3}
Introducing the moments def\/ined by the functional \eqref{orthogonality}
\begin{gather}\label{moment}
\mu _{i,j}:=\mathcal{L}_j[x^i],\qquad j=1,\dots ,r,
\end{gather}
we have a determinant expression for m-OPs
\begin{gather}\label{detmops}
P_{\vec{n}}(x)=\frac{1}{\tau_{\vec{n}}}
\begin{vmatrix}
\mu_{0,1} & \cdots &\mu_{n_1-1,1} & \cdots &\mu_{0,r} & \cdots& \mu_{n_r-1,r} & 1\\
\mu_{1,1} & \cdots &\mu_{n_1,1} & \cdots &\mu_{1,r} & \cdots & \mu_{n_r,r} & x\\
\vdots & \cdots & \vdots & \vdots & \cdots & \vdots & \cdots & \vdots \\
\mu_{|\vec{n}|,1} & \cdots & \mu_{|\vec{n}|+n_1-1,1} & \cdots & \mu_{|\vec{n}|,r} & \cdots & \mu_{|\vec{n}|+n_r-1,r} & x^{|\vec{n}|}
\end{vmatrix},
\end{gather}
with
\begin{gather*}
\tau_{\vec{n}} =
\begin{vmatrix}
\mu_{0,1} & \cdots & \mu_{n_1-1,1} & \cdots & \mu_{0,r} & \cdots & \mu_{n_r-1,r}\\
\mu_{1,1} & \cdots & \mu_{n_1,1} & \cdots & \mu_{1,r} & \cdots & \mu_{n_r,r}\\
\vdots & \cdots & \vdots & \cdots & \vdots & \cdots & \vdots \\
\mu_{|\vec{n}|-1,1} & \cdots & \mu_{|\vec{n}|+n_1-2,1} & \cdots & \mu_{|\vec{n}|-1,r} & \cdots & \mu_{|\vec{n}|+n_r-2,r}
\end{vmatrix}.
\end{gather*}
From this expression, it is easy to f\/ind that the multi-index $\vec{n}\in \mathbb{Z}_{+}^n$ is normal if\/f~$\tau_{\vec{n}}\ne 0$ and this is assumed to hold in what follows. Using (\ref{detmops}), (\ref{rec}) and the orthogonality relation (\ref{orthogonality}), the following determinant expression of the recurrence coef\/f\/icients $\{ a_{\vec{n},j},b_{\vec{n},j}\}$ is directly verif\/ied
\begin{gather}\label{tau}
b_{\vec{n},j}=\frac{\sigma_{\vec{n}+\vec{e}_j}}{\tau_{\vec{n}+\vec{e}_j}}-\frac{\sigma_{\vec{n}}}{\tau_{\vec{n}}},\qquad a_{\vec{n},j}=\frac{\tau_{\vec{n}+\vec{e}_j}\tau_{\vec{n}-\vec{e}_j}}{\tau_{\vec{n}}^2},
\end{gather}
with
\begin{gather*}
\sigma_{\vec{n}}=
\begin{vmatrix}
\mu_{0,1} & \cdots &\mu_{n_1-1,1} & \cdots & \mu_{0,r} & \cdots & \mu_{n_r-1,r}\\
\vdots & \cdots & \vdots & \cdots & \vdots & \cdots & \vdots \\
\mu_{|\vec{n}|-2,1} & \cdots & \mu_{|\vec{n}|+n_1-3,1} & \cdots & \mu_{|\vec{n}|-2,r} & \cdots & \mu_{|\vec{n}|+n_r-3,r} \\
\mu_{|\vec{n}|,1} & \cdots & \mu_{|\vec{n}|+n_1-1,1} & \cdots & \mu_{|\vec{n}|,r} & \cdots & \mu_{|\vec{n}|+n_r-1,r}
\end{vmatrix}.
\end{gather*}

As was already mentioned, the spectral transformation plays a central role in f\/inding the corresponding integrable systems. In order to mimic the classical scheme, we will consider the spectral transformation of m-OPs f\/irst. Let us introduce the 1-parameter deformation of the moments \eqref{moment} as follows
\begin{align}\label{1pt}
\frac{d}{dt}\mu_{i,j}=-\mu_{i+1,j},\qquad j=1,\dots ,r.
\end{align}
This transformation can also be interpreted in terms of the linear
functionals \eqref{orthogonality}:
\begin{gather}\label{1pt2}
\frac{d}{dt}\mathcal{L}_j[\cdot ]=-\mathcal{L}_j[x\cdot ],\qquad j=1,\dots ,r.
\end{gather}
Notice that the 1-parameter deformation~\eqref{1pt} (or~\eqref{1pt2}) coincides with that of ordinary OPs in the case $r=1$. Using the determinant expression of m-OPs~\eqref{detmops}, the spectral transformation of m-OPs can be constructed.

\begin{Theorem}
 If the linear functionals with the condition \eqref{1pt2} $($or equivalently the mo\-ments~$\{ \mu_{i,j}\}$ with \eqref{1pt}$)$ are given,
 then the following relation for the corresponding m-OPs holds
\begin{gather}\label{diff}
\frac{d}{dt}P_{\vec{n}}(x)=\sum_{k=1}^r a_{\vec{n},k}P_{\vec{n}-\vec{e}_k}(x),
\end{gather}
where $\{ a_{\vec{n},j}\}$ are the coefficients of the recurrence relation in~\eqref{rec}.
\end{Theorem}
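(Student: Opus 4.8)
The plan is to show that the polynomial
\[
R_{\vec n}(x):=\frac{d}{dt}P_{\vec n}(x)-\sum_{k=1}^r a_{\vec n,k}P_{\vec n-\vec e_k}(x)
\]
vanishes identically, by verifying that it satisfies the full set of orthogonality relations attached to the multi-index $\vec n$, which — together with normality of $\vec n$ — forces a polynomial of degree below $|\vec n|$ to be zero. (As usual the term with index $k$ is understood to be absent when $n_k=0$, in which case $a_{\vec n,k}=0$.)

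First I would note that $P_{\vec n}$ is monic of degree $|\vec n|$ with leading coefficient $1$ independent of $t$, so $\frac{d}{dt}P_{\vec n}$ has degree at most $|\vec n|-1$; since each $P_{\vec n-\vec e_k}$ has degree $|\vec n|-1$, the same bound holds for $R_{\vec n}$. Next I would differentiate the defining relations $\mathcal{L}_j[x^iP_{\vec n}]=0$, $i=0,\dots,n_j-1$, in $t$ and apply the deformation rule~\eqref{1pt2}, namely $\frac{d}{dt}\mathcal{L}_j[\,\cdot\,]=-\mathcal{L}_j[x\,\cdot\,]$. This gives $\mathcal{L}_j[x^i\frac{d}{dt}P_{\vec n}]=\mathcal{L}_j[x^{i+1}P_{\vec n}]$, which vanishes for $i\le n_j-2$ by orthogonality of $P_{\vec n}$ and equals $\mathcal{L}_j[x^{n_j}P_{\vec n}]$ for $i=n_j-1$.

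Now I would compute $\mathcal{L}_j[x^iR_{\vec n}]$ for $i=0,\dots,n_j-1$ and each $j$. For $i\le n_j-2$ every contribution dies: $\frac{d}{dt}P_{\vec n}$ by the computation just made, $P_{\vec n-\vec e_k}$ with $k\neq j$ because its $j$-th component is still $n_j$, and $P_{\vec n-\vec e_j}$ because its $j$-th component is $n_j-1\ge i+1$. For $i=n_j-1$ the only surviving pieces are $\mathcal{L}_j[x^{n_j}P_{\vec n}]$ from the derivative term and $a_{\vec n,j}\,\mathcal{L}_j[x^{n_j-1}P_{\vec n-\vec e_j}]$ from the sum, and these cancel precisely because of the closed formula $a_{\vec n,j}=\mathcal{L}_j[x^{n_j}P_{\vec n}]/\mathcal{L}_j[x^{n_j-1}P_{\vec n-\vec e_j}]$ recorded in~\eqref{1.2}. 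Hence $R_{\vec n}$ is a polynomial of degree at most $|\vec n|-1$ with $\mathcal{L}_j[x^iR_{\vec n}]=0$ for all $i=0,\dots,n_j-1$, $1\le j\le r$; this is a system of $|\vec n|$ homogeneous linear equations for the $|\vec n|$ coefficients of $R_{\vec n}$, whose matrix is (a transpose of) the one with determinant $\tau_{\vec n}$. Since $\vec n$ is normal, $\tau_{\vec n}\neq0$, and therefore $R_{\vec n}\equiv0$, which is~\eqref{diff}.

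The bookkeeping — the degree count and tracking which component of which neighbouring index the functional $\mathcal{L}_j$ annihilates — is routine. The step carrying the real content, and where I would take care, is the case $i=n_j-1$: one must recognise that the residual moment $\mathcal{L}_j[x^{n_j}P_{\vec n}]$ produced by differentiating the last orthogonality relation is exactly matched by the $k=j$ term of $\sum_k a_{\vec n,k}P_{\vec n-\vec e_k}$, i.e.\ that $a_{\vec n,j}$ is precisely the constant that makes the cancellation work. As an alternative one could differentiate the determinant formula~\eqref{detmops} entry by entry using~\eqref{1pt} and reassemble the answer via cofactor expansions together with the determinant expressions~\eqref{tau} for the $a_{\vec n,k}$; this is more computational but avoids invoking biorthogonality.
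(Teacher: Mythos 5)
Your argument is correct, but it is genuinely different from the one in the paper. You characterize the residual $R_{\vec n}=\frac{d}{dt}P_{\vec n}-\sum_k a_{\vec n,k}P_{\vec n-\vec e_k}$ as a polynomial of degree at most $|\vec n|-1$ annihilated by all the orthogonality conditions of index $\vec n$: differentiating $\mathcal{L}_j[x^iP_{\vec n}]=0$ with the rule \eqref{1pt2} gives $\mathcal{L}_j[x^i\frac{d}{dt}P_{\vec n}]=\mathcal{L}_j[x^{i+1}P_{\vec n}]$, the lower moments vanish by orthogonality, and the single surviving moment at $i=n_j-1$ is cancelled by the $k=j$ term thanks to the moment-quotient formula for $a_{\vec n,j}$ (this is \eqref{1.2}, which in the paper carries an index misprint $\vec e_k$ for $\vec e_j$, and reappears correctly as \eqref{p4}); normality, i.e.\ $\tau_{\vec n}\neq0$, then kills $R_{\vec n}$ since the homogeneous system has the transposed moment matrix of $\tau_{\vec n}$. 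The paper instead proves the theorem determinantally: it writes $P_{\vec n}=\tau_{\vec n,x}/\tau_{\vec n}$ via \eqref{detmops}, differentiates the tau-determinants column by column using the Wronskian structure induced by \eqref{1pt}, and reorganizes the resulting bilinear combination with the Pl\"ucker relation in \eqref{calculation}, finally identifying the coefficients through the tau-quotients \eqref{tau}. Your route is more elementary and structural -- no determinant identities, and it parallels the biorthogonality computations of Subsection~\ref{subsec:2.2} -- while the paper's computation produces the tau-function expressions as a byproduct, which are exactly what is needed for the bilinear equation in the Proposition that follows. The only points worth making explicit in your write-up are the standing assumptions of the subsection that you use silently: $\vec n$ and the $\vec n\pm\vec e_k$ are normal, so the neighbouring polynomials and the denominator in \eqref{1.2} are nonzero, and the coefficients of $P_{\vec n}$ are differentiable in $t$ because they are rational in the moments with nonvanishing denominator $\tau_{\vec n}$.
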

\begin{proof}
To begin with, introduce the notation
\begin{gather*}
\tau_{\vec{n},x}:=\big|0^{n_1},1^{n_1},\dots ,(n_1-1)^{n_1},\dots,0^{n_r}, \dots ,(n_r-1)^{n_r},x\big| \\
\hphantom{\tau_{\vec{n},x}}{} =\begin{vmatrix}
\mu_{0,1} & \mu_{1,1} & \cdots &\mu_{n_1-1,1} & \cdots & \mu_{0,r} & \cdots& \mu_{n_r-1,r} & 1\\
\mu_{1,1} & \mu_{2,1} & \cdots &\mu_{n_1+1,1} & \cdots & \mu_{1,r} & \cdots & \mu_{n_r,r} & x \\
\vdots & \vdots & \cdots & \vdots & \cdots & \vdots & \cdots & \vdots & \vdots \\
\mu_{|\vec{n}|,1} & \mu_{|\vec{n}|+1,1} & \cdots & \mu_{|\vec{n}|+n_1-1,1} & \cdots &\mu_{|\vec{n}|,r} & \cdots & \mu_{|\vec{n}|+n_r-1,r} & x^{|\vec{n}|}
\end{vmatrix}.
\end{gather*}
 Then $\tau_{\vec{n}}$ and $P_{\vec{n}}(x)$ can be rewritten as
 \begin{gather*}
 P_{\vec{n}}(x)=\frac{\tau_{\vec{n},x}}{\tau_{\vec{n}}},\qquad \tau_{\vec{n}}=\big|0^{n_1},\dots ,(n_1-1)^{n_1},\dots ,0^{n_r},\dots ,(n_r-1)^{n_r-1},e\big|,
 \end{gather*}
 where $e=(0,\dots ,0,1)^T$.
 From the relation~\eqref{1pt}, it is easy to f\/ind that $\tau_{\vec{n}}$ and $\tau_{\vec{n},x}$ are Wronskian matrices, which amounts to
\begin{gather*}
-\frac{d}{dt}\tau_{\vec{n}} =\sum_{k=1}^r\big|0^{n_1}, \dots, (n_k-2)^{n_k},n_k^{n_k},0^{n_{k+1}},\dots ,(n_r-1)^{n_r-1},e\big|, \\
-\frac{d}{dt}\tau_{\vec{n},x} =\sum_{k=1}^r\big|0^{n_1}, \dots, (n_k-2)^{n_k},n_k^{n_k},0^{n_{k+1}},\dots ,(n_r-1)^{n_r-1},x\big|.
\end{gather*}
Using these notations and relations, we can calculate the derivative of $P_{\vec{n}}(x)$:
\begin{gather}
-\frac{d}{dt}P_{\vec{n}}(x) =-\frac{d}{dt}\frac{\tau_{\vec{n},x}}{\tau_{\vec{n}}}=-\frac{\frac{d}{dt}\tau_{\vec{n},x}\tau_{\vec{n}}-\tau_{\vec{n},x}\frac{d}{dt}
\tau_{\vec{n}}}{\tau_{\vec{n}}^2}\nonumber\\
\hphantom{-\frac{d}{dt}P_{\vec{n}}(x)}{}
=-\frac{1}{\tau_{\vec{n}}^2}\sum_{k=1}^r\Bigl( \big|\dots ,(n_k-2)^{n_k},n_k^{n_k},\dots ,x\big|\big|\dots ,(n_k-2)^{n_k},(n_k-1)^{n_k},\dots ,e\big|\nonumber\\
\hphantom{-\frac{d}{dt}P_{\vec{n}}(x)=}{}
 - \big|\dots ,(n_k-2)^{n_k},(n_k-1)^{n_k},\dots ,x\big|\big|\dots ,(n_k-2)^{n_k},n_k^{n_k},\dots ,e\big| \Bigr)\nonumber\\
\hphantom{-\frac{d}{dt}P_{\vec{n}}(x)}{}
=\frac{1}{\tau_{\vec{n}}^2}\sum_{k=1}^r \big|\dots ,(n_k-2)^{n_k},\dots ,x,e\big|\big|\dots ,(n_k-2)^{n_k},(n_k-1)^{n_k},n_k^{n_k},\dots\big| \nonumber\\
\hphantom{-\frac{d}{dt}P_{\vec{n}}(x)}{}
=\sum_{k=1}^r\frac{\tau_{\vec{n}-\vec{e}_k,x}\tau_{\vec{n}+\vec{e}_k}}{\tau_{\vec{n}}^2}
=\sum_{k=1}^r\frac{\tau_{\vec{n}-\vec{e}_k}\tau_{\vec{n}+\vec{e}_k}}{\tau_{\vec{n}}^2}\frac{\tau_{\vec{n}-\vec{e}_k,x}}{\tau_{\vec{n}-\vec{e}_k}}.
\label{calculation}
\end{gather}
In the calculation of \eqref{calculation}, we have used the Pl\"{u}cker relation, a well-known identity for determinants
\begin{gather*}
|\dots ,a,b | |\dots ,c,d|-|\dots ,a,c | |\dots ,b,d|+|\dots ,a,d | |\dots ,b,c |=0,
\end{gather*}
where $a$, $b$, $c$, $d$ are arbitrary column vectors of appropriate size. Finally, comparing the result with \eqref{tau}, we arrive at \eqref{diff}. This completes the proof.
\end{proof}

We thus have obtained $r+1$ linear equations where m-OPs appear as an eigenfunction (we shall refer to this as a ``Lax set'')
\begin{gather*}
xP_{\vec{n}}(x) =P_{\vec{n}+\vec{e}_j}(x)+b_{\vec{n},j}P_{\vec{n}}(x)+\sum_{k=1}^ra_{\vec{n},k}P_{\vec{n}-\vec{e}_k}(x), \qquad j=1,\dots ,r,\nonumber\\
\frac{d}{dt}P_{\vec{n}}(x) =\sum_{k=1}^r a_{\vec{n},k}P_{\vec{n}-\vec{e}_k}(x).\label{lax}
\end{gather*}
It is straightforward to see that the compatibility condition of the Lax set~\eqref{lax} gives us the nonlinear system containing $r^2+r$ equations, even though this system apparently looks like an overdetermined system. However, with the help of the consistency relations~\eqref{diffcoeff}, the $r^2+r$ equations are reduced to $2r$ equations and the evolution of this system is thus uniquely determined. Arranging these arguments, we again arrive at the equations~\eqref{I_2.2} of Theorem~\ref{T1}:
\begin{gather}
\frac{d}{dt}a_{\vec{n},k} =a_{\vec{n},k}[b_{\vec{n}-\vec{e}_k,k}-b_{\vec{n},k}],\nonumber\\
\frac{d}{dt}b_{\vec{n},k} =\sum_{j=1}^r (a_{\vec{n},j}-a_{\vec{n}+\vec{e}_k,j}), \qquad 1 \leq k \leq r.\label{gtoda}
\end{gather}

Recalling the determinant expression of the solution \eqref{tau}, one can easily f\/ind that the coef\/f\/icients $a_{\vec{n},j}$ and $b_{\vec{n},j}$ are also expressed as follows
\begin{gather*}
b_{\vec{n},j}=\frac{\frac{d}{dt}\tau_{\vec{n}}}{\tau_{\vec{n}}}-\frac{\frac{d}{dt}\tau_{\vec{n}+\vec{e}_j}}{\tau_{\vec{n}+\vec{e}_j}},\qquad a_{\vec{n},j}=\frac{\tau_{\vec{n}+\vec{e}_j}\tau_{\vec{n}-\vec{e}_j}}{\tau_{\vec{n}}^2}.
\end{gather*}
Substituting this expression into~\eqref{gtoda}, we can can get the following statement.
\begin{Proposition}
The $\tau$-function verif\/ies the following bilinear equation
\begin{gather*}
\tau_{\vec{n}} \frac{d^2}{dt^2}\tau_{\vec{n}}=\left( \frac{d}{dt}\tau_{\vec{n}}\right)^2+\sum_{k=1}^r \tau_{\vec{n}+\vec{e}_k}\tau_{\vec{n}-\vec{e}_k}.
\end{gather*}
\end{Proposition}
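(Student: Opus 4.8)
The plan is to feed the $\tau$-function formulas for $a_{\vec{n},j}$ and $b_{\vec{n},j}$ recorded just before the Proposition into the second equation of the multiple Toda system~\eqref{gtoda} and read off the bilinear identity, the only real work being to identify an $\vec{n}$-independent ``constant of integration'' and show it vanishes. First I would rewrite the given formula for $b_{\vec{n},k}$ as $b_{\vec{n},k}=\frac{d}{dt}\log\tau_{\vec{n}}-\frac{d}{dt}\log\tau_{\vec{n}+\vec{e}_k}$ and differentiate once, so that $\frac{d}{dt}b_{\vec{n},k}=G_{\vec{n}}-G_{\vec{n}+\vec{e}_k}$ with
\[
G_{\vec{n}}:=\frac{d^2}{dt^2}\log\tau_{\vec{n}}=\frac{1}{\tau_{\vec{n}}^2}\left(\tau_{\vec{n}}\frac{d^2}{dt^2}\tau_{\vec{n}}-\left(\frac{d}{dt}\tau_{\vec{n}}\right)^2\right).
\]
On the other side, using $a_{\vec{n},j}=\tau_{\vec{n}+\vec{e}_j}\tau_{\vec{n}-\vec{e}_j}/\tau_{\vec{n}}^2$ and writing $S_{\vec{n}}:=\sum_{j=1}^r a_{\vec{n},j}=\tau_{\vec{n}}^{-2}\sum_{j=1}^{r}\tau_{\vec{n}+\vec{e}_j}\tau_{\vec{n}-\vec{e}_j}$, the right-hand side $\sum_{j}a_{\vec{n},j}-\sum_{j}a_{\vec{n}+\vec{e}_k,j}$ of the second equation in~\eqref{gtoda} is exactly $S_{\vec{n}}-S_{\vec{n}+\vec{e}_k}$.

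Thus~\eqref{gtoda} becomes $G_{\vec{n}}-G_{\vec{n}+\vec{e}_k}=S_{\vec{n}}-S_{\vec{n}+\vec{e}_k}$ for every $k=1,\dots,r$, i.e.\ $G_{\vec{n}}-S_{\vec{n}}=G_{\vec{n}+\vec{e}_k}-S_{\vec{n}+\vec{e}_k}$; equivalently $G_{\vec{n}}-S_{\vec{n}}$ is unchanged under decreasing any coordinate of $\vec{n}$ by one, hence is independent of $\vec{n}$ along the lattice of normal indices under consideration. To evaluate the constant I would descend to $\vec{n}=\vec{0}$: since $\tau_{\vec{0}}=1$ (the empty determinant), $G_{\vec{0}}=\frac{d^2}{dt^2}\log 1=0$, while with the standard convention $\tau_{\vec{m}}=0$ whenever some $m_j<0$ --- the same convention that makes $a_{\vec{n},k}=0$ for $n_k=0$ in~\eqref{tau} --- one has $S_{\vec{0}}=\sum_{j}\tau_{\vec{e}_j}\tau_{-\vec{e}_j}=0$. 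Therefore $G_{\vec{n}}-S_{\vec{n}}=G_{\vec{0}}-S_{\vec{0}}=0$ for all $\vec{n}$, and clearing $\tau_{\vec{n}}^2$ from $G_{\vec{n}}=S_{\vec{n}}$ yields
\[
\tau_{\vec{n}}\frac{d^2}{dt^2}\tau_{\vec{n}}=\left(\frac{d}{dt}\tau_{\vec{n}}\right)^2+\sum_{k=1}^r\tau_{\vec{n}+\vec{e}_k}\tau_{\vec{n}-\vec{e}_k},
\]
which is the asserted bilinear equation.

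Everything except the last step is routine logarithmic-derivative bookkeeping; the genuine point --- and the place I expect to have to be careful --- is pinning the $\vec{n}$-independent quantity $G_{\vec{n}}-S_{\vec{n}}$ down to $0$. This requires the boundary convention $\tau_{\vec{m}}=0$ for indices with a negative component together with $\tau_{\vec{0}}=1$, and also that the chain of indices $\vec{n}\to\vec{n}-\vec{e}_{k}\to\cdots\to\vec{0}$ consists of normal indices, so that all the $\tau$'s appearing are nonzero and the identities above are legitimate; this is automatic for a perfect system, and otherwise is part of the standing normality hypotheses. As a consistency check one can verify directly from~\eqref{tau} that the first equation of~\eqref{gtoda} is automatically satisfied by these $\tau$-expressions, so nothing is lost by using only the second equation.
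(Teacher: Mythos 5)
Your proposal is correct and follows essentially the same route as the paper: substitute the $\tau$-function expressions for $a_{\vec{n},j}$ and $b_{\vec{n},j}$ into the second equation of \eqref{gtoda} and read off the bilinear identity. The paper leaves the argument at that, whereas you make explicit the one genuinely needed detail it glosses over --- that the substitution only shows $\frac{d^2}{dt^2}\log\tau_{\vec{n}}-\tau_{\vec{n}}^{-2}\sum_k\tau_{\vec{n}+\vec{e}_k}\tau_{\vec{n}-\vec{e}_k}$ is independent of $\vec{n}$, so one must pin the constant to zero by telescoping to $\vec{n}=\vec{0}$ with the conventions $\tau_{\vec{0}}=1$ and $\tau_{\vec{m}}=0$ for indices with a negative component --- and this boundary argument is sound.
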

This bilinear equation is also a generalization of that of the ordinary Toda equation.

\subsection{Toda chains for the diagonal m-OPs}\label{subsec:2.4}
Recall that $r$-OPs (or diagonal, or step-line m-OPs) could be recovered from the lattice of m-OPs using~(\ref{dop}). One can thus derive another integrable system, especially related to $r$-OPs, from the m-Toda lattice~(\ref{gtoda}). Here we illustrate this taking the case $r=2$ for simplicity. Let us take in \eqref{orthogonality} two dif\/ferent linear functionals $\mathcal{L}_1$, $\mathcal{L}_2$ and denote the corresponding m-OPs by~$P_{m,n}$. Then the recurrence relations~\eqref{rec} of m-OPs $P_{m,n}$ take the form
\begin{gather}
xP_{m,n}(x) =P_{m+1,n}(x)+b_{m,n,1}P_{m,n}(x)+a_{m,n,1}P_{m-1,n}(x)+a_{m,n,2}P_{m,n-1}(x),\nonumber\\
xP_{m,n}(x) =P_{m,n+1}(x)+b_{m,n,2}P_{m,n}(x)+a_{m,n,1}P_{m-1,n}(x)+a_{m,n,2}P_{m,n-1}(x),\label{rec2}
\end{gather}
and the corresponding integrable system \eqref{I_2.2} is
\begin{gather}
\dot{b}_{m,n,1} =a_{m+1,n,1}-a_{m,n,1}+a_{m+1,n,2}-a_{m,n,2},\nonumber\\
\dot{b}_{m,n,2} =a_{m,n+1,1}-a_{m,n,1}+a_{m,n+1,2}-a_{m,n,2},\nonumber\\
\dot{a}_{m,n,1} =a_{m,n,1}(b_{m,n,2}-b_{m-1,n,2}), \qquad \dot{a}_{m,n,2}=a_{m,n,2}(b_{m,n,2}-b_{m,n-1,2}),\label{gtoda2}
\end{gather}
with the contiguous relations for the initial values
\begin{gather}
b_{m,n+1,1}-b_{m,n,1}=b_{m+1,n,2}-b_{m,n,2},\nonumber\\
a_{m+1,n,1}-a_{m,n+1,1}+a_{m+1,n,2}-a_{m,n+1,2}= b_{m,n,1}b_{m+1,n,2}-b_{m,n+1,1}b_{m,n,2},\nonumber\\
a_{m,n+1,1}(b_{m-1,n,1}-b_{m-1,n,2})=a_{m,n,1}(b_{m,n,1}-b_{m,n,2}),\nonumber\\
a_{m+1,n,1}(b_{m,n-1,1}-b_{m,n-1,2})=a_{m,n,2}(b_{m,n,1}-b_{m,n,2}).\label{iv:gtoda2}
\end{gather}
We get the 2-orthogonal polynomials $\{ q_n\}$ in the following manner
\begin{gather*}
q_{2n}(x)=p_{n,n}(x),\qquad q_{2n+1}(x)=p_{n+1,n}(x).
\end{gather*}
It can easily be checked that $\{ q_n(x)\}$ satisfy the four-term recurrence relation
\begin{align}\label{rec2op}
xq_{n}(x)=q_{n+1}(x)+\alpha^{(0)}_nq_{n}(x)+\alpha^{(1)}_n q_{n-1}(x)+\alpha^{(2)}_nq_{n-2}(x),
\end{align}
where the coef\/f\/icients are
\begin{gather}
\alpha^{(0)}_{2n}=b_{n,n,1},\qquad \alpha^{(0)}_{2n+1}=b_{n+1,n,2},\nonumber\\
\alpha^{(1)}_{2n}=a_{n,n,1}+a_{n,n,2},\qquad \alpha^{(1)}_{2n+1}=a_{n+1,n,1}+a_{n+1,n,2},\nonumber\\
\alpha^{(2)}_{2n}=a_{n,n,1}(b_{n-1,n-1,1}-b_{n-1,n-1,2}),\nonumber\\
\alpha^{(2)}_{2n+1}=a_{n+1,n,2}(b_{n,n-1,1}-b_{n,n-1,2}).\label{miura}
\end{gather}
Taking all this into account, it is straightforward to f\/ind the spectral transformation of the 2-orthogonal polynomials $q_n$ as follows
\begin{gather}\label{spec2op}
\dot{q}_n(x)=-\alpha^{(1)}_n q_{n-1}(x)-\alpha^{(2)}_n q_{n-2}(x).
\end{gather}
The two equations \eqref{rec2op} and \eqref{spec2op} are exactly the Lax pair of 2-orthogonal polynomials and then the integrable system
associated with 2-orthogonal polynomials is directly derived.
\begin{Theorem} We have that the following system
\begin{gather}
\frac{d}{dt}\alpha^{(0)}_n = \alpha^{(1)}_{n+1}-\alpha^{(1)}_n,\nonumber\\
\frac{d}{dt}\alpha^{(1)}_n =\alpha^{(1)}_n\big(\alpha^{(0)}_n-\alpha^{(0)}_{n-1}\big)+\alpha^{(2)}_{n+1}-\alpha^{(2)}_n,\nonumber\\
\frac{d}{dt}\alpha^{(2)}_n =\alpha^{(2)}_n\big(\alpha^{(0)}_n-\alpha^{(0)}_{n-2}\big),\label{ktoda}
\end{gather}
is satisfied. Moreover, one can rewrite the system in the Lax form
\begin{gather*}
\frac{d}{dt}L:=[L,(L)_{-}],\qquad
L:=\begin{pmatrix}
\alpha^{(0)}_0 & 1 & & & \\
\alpha^{(1)}_1 & \alpha^{(0)}_1 & 1 & & \\
\alpha^{(2)}_2 & \alpha^{(1)} _2 & \alpha^{(0)}_2 & 1 & \\
 & \ddots & \ddots & \ddots & \ddots
\end{pmatrix},
\end{gather*}
where $(X)_-$ denotes the strictly lower part of the semi-infinite matrix $X$.
\end{Theorem}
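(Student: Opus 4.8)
The plan is to obtain \eqref{ktoda} as the compatibility condition of the Lax pair \eqref{rec2op}--\eqref{spec2op}, exactly as in the scalar Toda case, and then to repackage it as the matrix equation $\frac{d}{dt}L=[L,(L)_{-}]$.

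First I would differentiate the four-term recurrence \eqref{rec2op} with respect to $t$. Since $\frac{d}{dt}q_n$ is a polynomial of degree at most $n-1$ given by \eqref{spec2op}, I can replace $\frac{d}{dt}q_m$ by $-\alpha^{(1)}_mq_{m-1}-\alpha^{(2)}_mq_{m-2}$ for every $m\in\{n+1,n,n-1,n-2\}$ occurring in the differentiated identity, and use \eqref{rec2op} once more to re-expand the terms $xq_{n-1}$ and $xq_{n-2}$ that this produces. The differentiated relation then becomes a linear combination of $q_n,q_{n-1},q_{n-2},q_{n-3},q_{n-4}$ equal to zero; these polynomials have pairwise distinct degrees, hence are linearly independent, so every coefficient vanishes.

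Collecting the coefficient of $q_n$ gives $\frac{d}{dt}\alpha^{(0)}_n=\alpha^{(1)}_{n+1}-\alpha^{(1)}_n$; the coefficient of $q_{n-1}$ gives $\frac{d}{dt}\alpha^{(1)}_n=\alpha^{(1)}_n\big(\alpha^{(0)}_n-\alpha^{(0)}_{n-1}\big)+\alpha^{(2)}_{n+1}-\alpha^{(2)}_n$; and the coefficient of $q_{n-2}$ gives $\frac{d}{dt}\alpha^{(2)}_n=\alpha^{(2)}_n\big(\alpha^{(0)}_n-\alpha^{(0)}_{n-2}\big)$, which are the three equations of \eqref{ktoda}. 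The coefficients of $q_{n-3}$ and $q_{n-4}$ collapse to the trivial identities $-\alpha^{(1)}_n\alpha^{(2)}_{n-1}-\alpha^{(2)}_n\alpha^{(1)}_{n-2}=-\alpha^{(1)}_n\alpha^{(2)}_{n-1}-\alpha^{(2)}_n\alpha^{(1)}_{n-2}$ and $-\alpha^{(2)}_n\alpha^{(2)}_{n-2}=-\alpha^{(2)}_n\alpha^{(2)}_{n-2}$, so the seemingly overdetermined system is in fact consistent. Alternatively one can bypass this by substituting the Miura-type formulas \eqref{miura} into \eqref{ktoda} and checking them against the $2$-Toda equations \eqref{gtoda2}, handling the even and odd index cases separately; this is immediate for the $\alpha^{(0)}$-equation but for the $\alpha^{(2)}$-equation it also invokes the contiguity relations \eqref{iv:gtoda2}.

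For the Lax form, note that \eqref{rec2op} reads $x\vec q=L\vec q$ and \eqref{spec2op} reads $\frac{d}{dt}\vec q=-(L)_{-}\vec q$, where $\vec q=(q_0,q_1,q_2,\dots)^T$ and $L$ is the banded lower Hessenberg matrix of the statement (its two subdiagonals carry the sequences $\{\alpha^{(1)}_n\}$ and $\{\alpha^{(2)}_n\}$, which is exactly the part isolated by $(\cdot)_{-}$). Differentiating $x\vec q=L\vec q$ in $t$ and substituting the second relation gives $\frac{d}{dt}L\,\vec q=L(L)_{-}\vec q-x(L)_{-}\vec q=L(L)_{-}\vec q-(L)_{-}(x\vec q)=[L,(L)_{-}]\vec q$, using $(L)_{-}(x\vec q)=(L)_{-}L\vec q$. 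Each row of this identity is a polynomial identity in $x$ among finitely many $q_m$, so by their linear independence it holds entrywise; thus $\frac{d}{dt}L=[L,(L)_{-}]$, and its three nonzero diagonals reproduce \eqref{ktoda}. I expect the only friction to be the index bookkeeping in the first part — keeping the $q_{n-3}$ and $q_{n-4}$ cancellations straight — rather than any conceptual obstacle.
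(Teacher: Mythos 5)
Your proposal is correct and follows essentially the paper's own route: the system \eqref{ktoda} is obtained as the compatibility condition of the Lax pair \eqref{rec2op}--\eqref{spec2op} (differentiate the four-term recurrence, substitute \eqref{spec2op}, re-expand with \eqref{rec2op}, and compare coefficients of the linearly independent $q_m$), and the matrix form $\frac{d}{dt}L=[L,(L)_{-}]$ is just the repackaging of these relations, exactly as intended in Subsection~2.4. Your explicit check that the coefficients of $q_{n-3}$ and $q_{n-4}$ cancel identically, and the remark on the alternative verification via the Miura formulas \eqref{miura}, merely spell out details the paper leaves implicit.
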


This system is exactly the special case of the full Kostant--Toda lattice investigated in \cite{Rolania} (see also \cite{AFM2011}). It should be noted that the case $r=2$ is discussed in \cite{Rolania}, while our method is valid for the general case $r \geq 2$.
\begin{Remark}
The correspondence (\ref{miura}) is nothing but the Miura transformation from the m-Toda lattice (\ref{gtoda2}) to the special case of the
 full Kostant--Toda lattice (\ref{ktoda}).
\end{Remark}

\subsection{Example}\label{subsec:2.5}
We shall exhibit an interesting exact solution to the m-Toda lattice \eqref{gtoda2}. Let us introduce the m-OPs $P_{m,n}$ which satisfy the following multiple orthogonality relation
\begin{gather*}
\mathcal{L}_1\big[x^iP_{m,n}(x)\big] =\int _{0}^{\infty } x^{i+\delta }P_{m,n}(x)e^{-x(t+\kappa _1)}dx=0,\qquad i=0,\dots ,m-1 ,\\
\mathcal{L}_2\big[x^jP_{m,n}(x)\big] =\int _{0}^{\infty } x^{j+\delta }P_{m,n}(x)e^{-x(t+\kappa _2)}dx=0,\qquad j=0,\dots ,n-1,
\end{gather*}
where $\delta>-1$, $\kappa_1 \ne \kappa_2$ and $t>-\kappa _i$ is assumed for $i=1,2$. The corresponding m-OPs belong to the class of multiple Laguerre
polynomials of the second kind~\cite{ABVA} and they are shown~\cite{vanA2011} to satisfy the nearest-neighbor recurrence relation~(\ref{rec2}) with
\begin{gather}
b_{m,n,1} =\frac{2m+n+\delta +1}{\kappa _1+t}+\frac{n}{\kappa _2+t},\qquad
b_{m,n,2}=\frac{m+2n+\delta +1}{\kappa _2+t}+\frac{m}{\kappa _1+t}, \nonumber\\
a_{m,n,1} =\frac{m(m+n+\delta )}{(\kappa_1+t)^2}, \qquad a_{m,n,2}=\frac{n(m+n+\delta )}{(\kappa _2+t)^2}.\label{sol:gtoda2}
\end{gather}

It is easily verif\/ied that the corresponding linear functionals $\mathcal{L}_1$, $\mathcal{L}_2$ have the property~\eqref{1pt2}. Hence the coef\/f\/icients (\ref{sol:gtoda2}) directly give the special solutions to the m-Toda lattice~(\ref{gtoda2}). Furthermore, by using~(\ref{miura}), we can also obtain the corresponding solution to the special case of the full Kostant--Toda lattice \eqref{ktoda} as follows
\begin{gather*}
\alpha^{(0)}_{2n} =\frac{3n+\delta +1}{\kappa _1+t}+\frac{n}{\kappa _2+t},\qquad \alpha^{(0)}_{2n+1}=\frac{3n+\delta +2}{\kappa _2+t}+\frac{n+1}{\kappa _1+t},\\
\alpha^{(1)}_{2n} =\frac{n(2n+\delta) }{(\kappa_1 +t)^2}+\frac{n(2n+\delta) }{(\kappa_2 +t)^2 },\qquad \alpha^{(1)}_{2n+1}=\frac{(n+1)(2n+\delta+1) }{(\kappa_1 +t)^2}+\frac{n(2n+\delta+1) }{(\kappa_2 +t)^2 }, \\
\alpha^{(2)} _{2n} =(\kappa _2-\kappa _1)\frac{n(2n+\delta )(2n+\delta -1)}{(\kappa_1+t)^3 (\kappa _2 +t)},\qquad \alpha^{(2)} _{2n+1}=(\kappa _1-\kappa _2)\frac{n(2n+\delta)(2n+\delta+ 1)}{(\kappa_1+t) (\kappa _2 +t)^3}.
\end{gather*}
Interestingly, it is obvious that these solutions have a pole at $t=-\kappa_1$ and $t=-\kappa_2$, which shows the presence of singularities at certain f\/inite times.

\section{The discrete-time higher analogues of the Toda lattice}\label{sec:3}

\subsection{Discrete-time Toda equations and the q-d algorithm}\label{subsec:3.1}
In this section we recast two approaches presented in \cite{PGR1995, SNderK2011,Spiridonov2,SpZh}. The f\/irst method, which was presented in~\cite{PGR1995} and~\cite{Spiridonov2}, will then be generalized in Subsection~\ref{subsec:3.2} to the case of multiple orthogonal polynomials. As for the second one \cite{SNderK2011}, we will show in this section how to modify this to be applicable in the case of multiple orthogonal polynomials. Finally, following the discrete integrability approach proposed in \cite{BS2002}, we will adapt it to the settings in question in
Subsection~\ref{subsec:3.4}.

To start with, suppose we are given a positive measure $d\mu$ on $(0,+\infty)$ for which all the moments exist. Clearly, the measure $x^t d\mu(x)$, which is def\/ined on $(0,+\infty)$, is also positive for all $t\in{\mathbb Z}_+$. Let $P_n^{t}$ be the family of polynomials orthogonal with respect to the measure $x^t d\mu(x)$ on~$(0,+\infty)$. Introducing the moments $\mu_j$ of the measure~$d\mu$
\begin{gather*}
\mu_j=\int_0^{\infty} x^{j} d\mu(x), \qquad j=0,1,\dots,
\end{gather*}
one can easily check that the monic orthogonal polynomials $P_n^{t}$ can be presented in the following manner
\begin{gather*}
P_{n}^{t}(x)= \frac{1}{\tau_{n}^{t}}\left|\begin{matrix}
\mu_{t} & \ldots & \mu_{n+t-1} & \mu_{n+t} \\
\vdots & \vdots & \vdots & \vdots \\
\mu_{n+t-1} & \ldots & \mu_{2n+t-2} & \mu_{2n+t-1} \\
1 & \ldots & x^{n-1} & x^{n}
\end{matrix}\right|,
\end{gather*}
with the corresponding Hankel determinant
\begin{gather*}
\tau_{n+1}^{t}= \left|\begin{matrix}
\mu_{t} & \ldots & \mu_{n+t} \\
\vdots & \vdots & \vdots \\
\mu_{n+t} & \ldots & \mu_{2n+t}
\end{matrix}\right|.
\end{gather*}
It is well known that orthogonal polynomials are related by means of recurrence relations. To get those relations in the form that will be used here, let us recall the \textit{Sylvester identity}
\begin{gather*}
|A||A_{r,s;p,q}|=|A_{r;p}||A_{s;q}|-|A_{r;q}||A_{s;p}|,
\end{gather*}
where $|A|$ stands for the determinant of the $j\times j$ matrix $A$ and $A_{r,s;p,q}$ denotes the submatrix of~$A$ formed by deleting columns number~$r$, $s$ and rows number~$p$, $q$; $A_{\alpha;\beta}$ denotes the submatrix of~$A$ that is obtained from~$A$ by removing the~$\alpha$th column and~$\beta$th row.

Next, applying two dif\/ferent forms of the Sylvester identity
\begin{gather*}
|A||A_{1,n+1;1,n+1}|=|A_{1;1}||A_{n+1;n+1}|-|A_{1;n+1}||A_{n+1;1}|,\\
|A||A_{n,n+1;1,n+1}|=|A_{n;1}||A_{n+1;n+1}|-|A_{n;n+1}||A_{n+1;n+1}|
\end{gather*}
to the determinant $\tau_{n}^{t}P_{n}^{t}$ leads to the relations
\begin{gather}\label{PT_1}
P_{n+1}^{t}(x) = xP_{n}^{t+1}(x)-V_{n}^{t}P_{n}^{t}, \\
P_{n+1}^{t}(x) = xP_{n}^{t+2}(x)-W_{n}^{t}P_{n}^{t+1},\label{PT_2}
\end{gather}
where
\begin{gather*}
V_{n}^{t} = \frac{\tau_{n+1}^{t+1}\tau_{n}^{t}}{\tau_{n}^{t+1}\tau_{n+1}^{t}} ,\qquad
W_{n}^{t} = \frac{\tau_{n+1}^{t+1}\tau_{n}^{t+1}}{\tau_{n+1}^{t}\tau_{n}^{t+2}} .
\end{gather*}

The transformation \eqref{PT_1} from $P_{n}^{t}$ to $P_{n}^{t+1}$ is called the \textit{Christoffel transformation} (for instance see~\cite{BM04} that has a review of the theory of such transformations). The idea of the transformation is to construct polynomials orthogonal with respect to $x d\mu(x)$ provided
that the polynomials orthogonal with respect to $d\mu(x)$ are given. Usually, the Christof\/fel transformation appears in the context of Christof\/fel--Darboux kernels rather than the Sylvester identity and that is why this transformation is called Christof\/fel transformation. More precisely, $P_{n}^{t+1}$ can be represented by means of the Christof\/fel--Darboux kernel in the following manner
\begin{gather*}
P_{n}^{t+1}(x)=\frac{1}{P_{n}^{t}(0)}\frac{P_{n+1}^{t}(x)P_{n}^{t}(0)-P_{n+1}^{t}(0)P_{n}^{t}(x)}{x},
\end{gather*}
which is just another form of~\eqref{PT_1}.

The reciprocal to the Christof\/fel transformation is called the \textit{Geronimus transformation} and it has the following form (for instance see \cite{BM04} and \cite{SpZh})
\begin{gather}\label{GerTr}
P_{n}^{t}(x)=P_{n}^{t+1}(x)+B_{n}^tP_{n-1}^{t+1}(x).
\end{gather}
Setting
\begin{gather*}
A_n^t=-\frac{P_{n+1}^{t}(0)}{P_{n}^{t}(0)}
\end{gather*}
we see that the consistency of the Christof\/fel transformation
\begin{gather}\label{ChTr}
P_{n}^{t+1}(x)=\frac{P_{n+1}^{t}(x)+A_n^tP_{n}^{t}(x)}{x}
\end{gather}
and the Geronimus transformation \eqref{GerTr} leads to the q-d algorithm or, which is equivalent, to the discrete-time Toda equation~\eqref{dtoda}~\cite{PGR1995}. Indeed, on the one hand substituting~\eqref{ChTr} to~\eqref{GerTr} gives
\begin{gather*}
xP_{n}^{t}(x)=P_{n+1}^{t}(x)+\big(A_n^t+B_n^t\big)P_{n}^{t}(x)+A_{n-1}^tB_n^tP_{n-1}^{t}.
\end{gather*}
On the other hand, if we plug \eqref{GerTr} into \eqref{ChTr} we get
\begin{gather*}
xP_{n}^{t+1}(x)=P_{n+1}^{t+1}(x)+\big(A_n^t+B_{n+1}^t\big)P_{n}^{t+1}(x)+A_{n}^tB_n^tP_{n-1}^{t+1}.
\end{gather*}
Now, comparing the corresponding coef\/f\/icients leads to \eqref{dtoda}:
\begin{gather*}
 A_{n}^{t+1}+B_n^{t+1}=A_{n}^t+B_{n+1}^t, \qquad
 A_{n-1}^{t+1}B_n^{t+1}=A_n^tB_n^t, \qquad t, n\in{\mathbb Z}_+.
\end{gather*}
In fact, this idea is the f\/irst scheme that we are going to extend to the setting of multiple orthogonal polynomials in the next subsection.

The second approach we mention here was presented in \cite{SNderK2011} for orthogonal polynomials. The authors propose to use the transformation \eqref{PT_2} rather than~\eqref{GerTr} and they show that the relations~\eqref{PT_1} and~\eqref{PT_2} give a Lax pair for the discretization of the Toda chain. Roughly speaking, the consistency of \eqref{PT_1} and \eqref{PT_2} leads to a discrete zero curvature condition (for more information about discrete integrability and zero curvature conditions see \cite{Adler2001,BS2002,PGR1995,SNderK2011}).

Now let us quickly see how it works. At f\/irst, introduce the wave function
\begin{gather*}
\widetilde{\Psi}_{n,t}(x)=\big(P^{t}_n(x),P^{t+1}_n(x)\big)^{\top}.
\end{gather*}
Next, using \eqref{PT_1} and \eqref{PT_2} we derive
\begin{gather}\label{DisToda1}
\widetilde{\Psi}_{n+1,t}=\widetilde{L}_{n,t}\widetilde{\Psi}_{n,t},\qquad
\widetilde{\Psi}_{n,t+1}=\widetilde{M}_{n,t}\widetilde{\Psi}_{n,t},
\end{gather}
where the transition matrices are def\/ined as follows
\begin{gather*}
\widetilde{L}_{n,t} = \left( \begin{matrix}
-V_{n}^{t} & x \\
-V_{n}^{t} & x + W_{n}^{t} -V_{n}^{t+1}
\end{matrix} \right), \qquad
\widetilde{M}_{n,t} = \frac{1}{x}\left( \begin{matrix}
0 & x \\
- V_{n}^{t} & x+ W_{n}^{t}
\end{matrix} \right).
\end{gather*}
The consistency of the linear systems~\eqref{DisToda1} is then equivalent to the zero curvature condition
\begin{gather}\label{DiscreteToda}
0 = \widetilde{L}_{n,t+1}\widetilde{M}_{n,t} - \widetilde{M}_{n+1,t} \widetilde{L}_{n,t},
\end{gather}
which can be simplif\/ied to the quotient-dif\/ference scheme
\begin{gather}\label{QDToda}
V_{n}^{t+2} + W_{n+1}^{t} = V_{n+1}^{t} + W_{n}^{t+1}, \qquad
W_{n}^{t} V_{n+1}^{t} = V_{n}^{t+1} W_{n+1}^{t},
\end{gather}
which, as was already mentioned, can be considered as the discrete-time Toda equa\-tion~\cite{PGR1995,SNderK2011}. While on the subject, let us mention that the quotient-dif\/ference scheme, along with many other relations between orthogonal polynomials, naturally occur in the context of Pad\'e tables~\cite{Gragg}.

Now we are in the position to modify the approach we just recalled. The reason to do that is the fact that instead of having \eqref{PT_1} and \eqref{PT_2} one is usually given the three-term recurrence relation
\begin{gather}\label{3termT}
x P_{n}^{t}(x)= P_{n+1}^{t}(x)+b_{n}^{t}P_{n}^{t}(x)+a_{n}^{t}P_{n-1}^{t}(x).
\end{gather}
At the same time, it is not so hard to check that the combination of~\eqref{PT_1} and~\eqref{PT_2} leads to the monic version of the three-term recurrence relation (see~\cite{SNderK2011})
\begin{gather*}
x P_{n}^{t}= P_{n+1}^{t}+\big(V_{n}^{t}+V_{n-1}^{t+1}-W_{n}^{t}\big)P_{n}^{t}+\big(V_{n-1}^{t+1}-W_{n-1}^{t}\big)V_{n-1}^{t}P_{n-1}^{t}.
\end{gather*}
Hence, we also have formulas for the coef\/f\/icients $a_{n}^{t}$ and $b_{n}^{t}$ in terms of Hankel determinants
\begin{gather}\label{OPRecFromDet}
a_{n}^{t}=\big(V_{n-1}^{t+1}-W_{n-1}^{t}\big)V_{n-1}^{t},\qquad b_{n}^{t}=V_{n}^{t}+V_{n-1}^{t+1}-W_{n}^{t}.
\end{gather}
Once we have the coef\/f\/icients of the three-term recurrence relation, it is in many cases a simple task to reconstruct the coef\/f\/icients of the Christof\/fel transformation (we can use either the determinant formula given in~\eqref{PT_1} or the formula in terms of the polynomials based on~\eqref{ChTr}). Thus, in order to f\/ind the Lax pair, it is preferable to use~\eqref{3termT} and~\eqref{PT_1}.
\begin{Proposition} \label{LaxPairToda} Let us consider the following vector-valued wave function
\begin{gather*}
{\Psi}_{n,t}(x)=\big(P^{t}_n(x),P^{t}_{n-1}(x)\big)^{\top}.
\end{gather*}
Then the corresponding transition matrices are
\begin{gather}\label{LP1}
L_{n,t} = \left( \begin{matrix}
b_n^{t}-x & a_n^{t} \\
1 & 0
\end{matrix} \right), \\
\label{LP2} M_{n,t} = \frac{1}{x}\left( \begin{matrix}
b_n^{t}+V_{n}^{t}-x & a_n^{t} \\
 1 & V_{n-1}^{t}
\end{matrix} \right),
\end{gather}
and they give another Lax pair for the discrete time Toda equation~\eqref{QDToda}.
\end{Proposition}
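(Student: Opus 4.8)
The plan is to verify directly that the transition matrices $L_{n,t}$ in~\eqref{LP1} and $M_{n,t}$ in~\eqref{LP2} implement the recurrences satisfied by the wave function ${\Psi}_{n,t}(x)=\big(P^{t}_n(x),P^{t}_{n-1}(x)\big)^{\top}$, and then to compute the compatibility condition. First I would establish the spatial evolution ${\Psi}_{n+1,t}=L_{n,t}{\Psi}_{n,t}$: the top component is exactly the three-term recurrence~\eqref{3termT} rewritten as $P_{n+1}^{t}=(b_n^{t}-x)P_n^{t}+a_n^{t}P_{n-1}^{t}$, and the bottom component is the trivial identity $P_n^{t}=P_n^{t}$. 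This is immediate. Next I would establish the time evolution ${\Psi}_{n,t+1}=M_{n,t}{\Psi}_{n,t}$. The top component asserts $xP_n^{t+1}=(b_n^{t}+V_n^{t}-x)P_n^{t}+a_n^{t}P_{n-1}^{t}$; using~\eqref{OPRecFromDet} and the Christoffel relation~\eqref{PT_1} in the form $xP_n^{t+1}=P_{n+1}^{t}+V_n^{t}P_n^{t}$, together with the three-term recurrence to eliminate $P_{n+1}^{t}$, one checks this holds. The bottom component asserts $xP_{n-1}^{t+1}=P_n^{t}+V_{n-1}^{t}P_{n-1}^{t}$, which is precisely~\eqref{PT_1} with $n$ replaced by $n-1$; so again it reduces to a known identity.

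Having identified the Lax pair, the compatibility (discrete zero curvature) condition is $L_{n,t+1}M_{n,t}=M_{n+1,t}L_{n,t}$, obtained from ${\Psi}_{n+1,t+1}$ computed two ways. I would multiply out both $2\times 2$ matrix products (clearing the common factor $1/x$) and compare entries. Matching the four entries yields a system of scalar relations among $a_n^t$, $b_n^t$, $V_n^t$; using~\eqref{OPRecFromDet} to express $a_n^t$ and $b_n^t$ through the Hankel-determinant quantities $V_n^t$ and $W_n^t$, these relations should collapse exactly to the quotient-difference scheme~\eqref{QDToda}, namely $V_n^{t+2}+W_{n+1}^t=V_{n+1}^t+W_n^{t+1}$ and $W_n^t V_{n+1}^t=V_n^{t+1}W_{n+1}^t$. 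Conversely, assuming~\eqref{QDToda}, the same computation shows the zero curvature condition holds, so the pair is genuinely a Lax pair for the discrete-time Toda equation.

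The main obstacle I anticipate is purely bookkeeping: the entries of $M_{n,t}$ mix data at time levels $t$, $t+1$, $t+2$ (through $V$ and $W$), and $L_{n,t+1}$ shifts the time index, so one must carefully track which of $V_n^t$, $V_n^{t+1}$, $W_n^t$, $W_{n-1}^t$ appears in each slot before the cancellations become visible. In particular the $(1,1)$-entry comparison produces a relation that only simplifies to the additive q-d equation after substituting $b_n^{t+1}-b_n^t$ and $a_{n+1}^t/a_n^t$ via~\eqref{OPRecFromDet} and using the definitions of $V_n^t=\tau_{n+1}^{t+1}\tau_n^t/(\tau_n^{t+1}\tau_{n+1}^t)$ and $W_n^t=\tau_{n+1}^{t+1}\tau_n^{t+1}/(\tau_{n+1}^t\tau_n^{t+2})$; it is worth recording the telescoping identities $V_n^t W_{n}^{t}= \tau$-ratios that make the multiplicative q-d equation transparent. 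Once the substitutions are organized, the verification is routine, and the statement follows.
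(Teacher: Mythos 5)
Your proposal is correct in substance, and its first half coincides with the paper's argument: the paper likewise establishes ${\Psi}_{n+1,t}=L_{n,t}{\Psi}_{n,t}$ and ${\Psi}_{n,t+1}=M_{n,t}{\Psi}_{n,t}$ by ``manipulations with \eqref{3termT} and \eqref{PT_1}'', exactly the eliminations you describe. Where you genuinely diverge is in the second half. You propose to expand the zero curvature condition $L_{n,t+1}M_{n,t}-M_{n+1,t}L_{n,t}=0$ entrywise and then use \eqref{OPRecFromDet} together with the Hankel-determinant expressions for $V_n^t$, $W_n^t$ to collapse the resulting scalar relations to \eqref{QDToda}; this is self-contained but computational, and it forces you to track data at three time levels, as you anticipate. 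The paper instead avoids the computation altogether: it exhibits the explicit gauge matrix relating the two wave functions,
\begin{gather*}
\widetilde{\Psi}_{n,t}(x)=
\left( \begin{matrix}
1& 0 \\
\frac{b_n^{t}+V_{n}^{t}}{x}-1 & \frac{a_n^{t}}{x}
\end{matrix} \right){\Psi}_{n,t}(x),
\end{gather*}
so that the new zero curvature condition is equivalent to the one already established for $(\widetilde{L}_{n,t},\widetilde{M}_{n,t})$ in \eqref{DiscreteToda}, which was shown earlier to reduce to \eqref{QDToda}. The gauge argument is shorter, reuses the previous result, and makes transparent in what sense \eqref{LP1}--\eqref{LP2} is ``another'' Lax pair for the same system; your direct route buys independence from the tilde pair at the cost of the bookkeeping you describe. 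One practical caveat if you carry the computation out: as printed, the first rows of \eqref{LP1} and \eqref{LP2} differ by signs from what \eqref{3termT} and \eqref{PT_1} literally yield (e.g., the recurrence gives $P_{n+1}^{t}=(x-b_n^{t})P_n^{t}-a_n^{t}P_{n-1}^{t}$, not $(b_n^{t}-x)P_n^{t}+a_n^{t}P_{n-1}^{t}$), so you will need to fix a consistent sign convention before the entries of the two matrix products match; this is a typo-level issue inherited from the statement rather than a gap in your plan.
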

\begin{proof}
To see that the statement holds, we notice that, after some manipulations with~\eqref{3termT} and~\eqref{PT_1}, one can get the following equalities
\begin{gather*}
{\Psi}_{n+1,t}=L_{n,t}{\Psi}_{n,t},\qquad {\Psi}_{n,t+1}=M_{n,t}{\Psi}_{n,t},
\end{gather*}
where the transition matrices $L_{n,t}$ and $M_{n,t}$ are given by \eqref{LP1} and \eqref{LP2}, respectively.
Next, since we have the relation
\begin{gather*}
\widetilde{\Psi}_{n,t}(x)=
\left( \begin{matrix}
1& 0 \\
\frac{b_n^{t}+V_{n}^{t}}{x}-1 & \frac{a_n^{t}}{x}
\end{matrix} \right){\Psi}_{n,t}(x),
\end{gather*}
it is clear that
\begin{gather*}
0 = {L}_{n,t+1}{M}_{n,t} - {M}_{n+1,t} {L}_{n,t}
\end{gather*}
is equivalent to \eqref{DiscreteToda} and, in turn, reduces to~\eqref{QDToda}.
\end{proof}

\subsection{Christof\/fel and Geronimus transformations for m-OPs}~\label{subsec:3.2}
In this section, we will use the f\/irst method from the previous subsection to get an integrable discretization of the m-Toda lattice~\eqref{gtoda}. As for the discretization of integrable systems, many techniques have been proposed and investigated (see for the details, e.g., \cite{Grammaticos,Suris}). Nonetheless, it is quite convenient to construct the discretization by means of the discrete spectral transformation of m-OPs as was done in \cite{PGR1995,Spiridonov2,SpZh}.

Let us work on the discrete spectral transformations of m-OPs, a~mapping from m-OPs to another m-OPs. It is not so dif\/f\/icult to get to a generalization of the Christof\/fel transformation.

\begin{Proposition}\label{CTforMPh}
Let $\{ P_{\vec{n}}^t(x) \}$ be m-OPs at some time $t$ with respect to $\mathcal{L}_1^t,\dots, \mathcal{L}_r^t$ and define the new sequence of
 polynomials $\{ P_{\vec{n},j}^{t+1}(x)\}$ for $j=1,\dots ,r $ by
\begin{gather}\label{mct}
P_{\vec{n},j}^{t+1}(x)=\frac{1}{x-\lambda _t}\big( P_{\vec{n}+\vec{e}_j}^t(x)+A_{\vec{n},j}^tP_{\vec{n}}^t(x) \big),\qquad A_{\vec{n},j}^t=-\frac{P_{\vec{n}+\vec{e}_j}^t(\lambda _t)}{ P_{\vec{n}}^t(\lambda _t) },
\end{gather}
where $\lambda_t \in \{ z \in \mathbb{R} \,|\, P_{\vec{n}}^t(z) \ne 0,\, \textrm{for all}\,\, \vec{n} \in \mathbb{Z}_{+}^r \}$. Then
\begin{gather}\label{uniqueness}
P_{\vec{n},1}^{t+1}(x)=\cdots =P_{\vec{n},r}^{t+1}(x) = P_{\vec{n}}^{t+1}(x)
\end{gather}
holds and $\{ P_{\vec{n}}^{t+1}(x) \}$ are again m-OPs with respect to the new linear functional $\mathcal{L}_1^{t+1},\dots ,\mathcal{L}_r^{t+1}$ defined by
\begin{gather}\label{d1pt}
\mathcal{L}_j^{t+1}[\,\cdot \,]:=\mathcal{L}_j^t[(x-\lambda _t)\,\cdot\, ],\qquad j=1,\dots ,r.
\end{gather}
\end{Proposition}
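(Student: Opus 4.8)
The plan is to prove \eqref{mct}--\eqref{d1pt} in three stages: first establish that each $P_{\vec n,j}^{t+1}$ is genuinely a monic polynomial of degree $|\vec n|$; then show it is orthogonal with respect to the shifted functionals \eqref{d1pt}; and finally invoke normality (perfectness) to conclude that all $r$ candidates coincide with the unique monic m-OP $P_{\vec n}^{t+1}$. For the first stage, observe that the numerator $P_{\vec n+\vec e_j}^t(x)+A_{\vec n,j}^t P_{\vec n}^t(x)$ is a polynomial of degree $|\vec n|+1$ which, by the very choice of the constant $A_{\vec n,j}^t=-P_{\vec n+\vec e_j}^t(\lambda_t)/P_{\vec n}^t(\lambda_t)$, vanishes at $x=\lambda_t$ (this requires $P_{\vec n}^t(\lambda_t)\neq 0$, which is exactly the condition imposed on $\lambda_t$). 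Hence it is divisible by $x-\lambda_t$ and the quotient is a polynomial of degree $|\vec n|$; since $P_{\vec n+\vec e_j}^t$ is monic of degree $|\vec n|+1$ and $A_{\vec n,j}^t P_{\vec n}^t$ has lower degree, the quotient is monic.

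For the orthogonality stage, fix $j$ and take any $i$ with $0\le i\le n_\ell-1$ for some $1\le \ell\le r$; I want to show $\mathcal L_\ell^{t+1}[x^i P_{\vec n,j}^{t+1}(x)]=0$. By \eqref{d1pt} this equals $\mathcal L_\ell^t[(x-\lambda_t)x^i P_{\vec n,j}^{t+1}(x)]$, and by \eqref{mct} the factor $(x-\lambda_t)P_{\vec n,j}^{t+1}(x)$ is precisely $P_{\vec n+\vec e_j}^t(x)+A_{\vec n,j}^t P_{\vec n}^t(x)$. So the quantity reduces to $\mathcal L_\ell^t[x^i P_{\vec n+\vec e_j}^t(x)]+A_{\vec n,j}^t\,\mathcal L_\ell^t[x^i P_{\vec n}^t(x)]$. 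Now I use the type~II orthogonality \eqref{orthogonality} for the m-OPs at time $t$: the second term vanishes because $i\le n_\ell-1\le (\vec n)_\ell-1$ for the relevant index, and the first vanishes as long as $i\le (\vec n+\vec e_j)_\ell-1$ — which is automatic since adding $\vec e_j$ can only increase the $\ell$-th component. The only subtlety is the case $\ell=j$: then the orthogonality range for $P_{\vec n+\vec e_j}^t$ with respect to $\mathcal L_j^t$ runs over $i=0,\dots,n_j$, which still covers $i=0,\dots,n_j-1$, so again both terms vanish. Therefore $P_{\vec n,j}^{t+1}$ is a monic polynomial of degree $|\vec n|$ satisfying all $|\vec n|$ orthogonality conditions defining the m-OP $P_{\vec n}^{t+1}$ for the system $\{\mathcal L_k^{t+1}\}$.

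The final stage is immediate: the hypothesis that $\lambda_t$ avoids all zeros $P_{\vec n}^t(z)=0$ together with the perfectness assumption means every $\vec n\in\mathbb Z_+^r$ remains normal for the shifted system $\{\mathcal L_k^{t+1}\}$ (one can also see this directly from the determinant formula \eqref{detmops}, since the shifted moments are $\mathcal L_j^t[(x-\lambda_t)x^i]=\mu_{i+1,j}-\lambda_t\mu_{i,j}$, a change obtained by elementary column operations that does not kill the relevant $\tau$-determinants when $\lambda_t$ is chosen generically). Normality gives a \emph{unique} monic m-OP $P_{\vec n}^{t+1}$ of degree $|\vec n|$; since each $P_{\vec n,j}^{t+1}$ is such a polynomial, uniqueness forces $P_{\vec n,1}^{t+1}=\cdots=P_{\vec n,r}^{t+1}=P_{\vec n}^{t+1}$, which is \eqref{uniqueness}. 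I expect the main obstacle to be the careful bookkeeping of index ranges in the orthogonality step — particularly verifying that the degree-raising shift $\vec n\mapsto\vec n+\vec e_j$ never shrinks the set of orthogonality conditions one needs, so that the two terms in $\mathcal L_\ell^t[x^i P_{\vec n+\vec e_j}^t]+A_{\vec n,j}^t\mathcal L_\ell^t[x^i P_{\vec n}^t]$ both vanish simultaneously for every admissible $(i,\ell)$; everything else is a short determinant or divisibility argument.
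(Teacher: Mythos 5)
Your argument is correct and follows essentially the same route as the paper: pull the new orthogonality conditions back through $\mathcal{L}_\ell^{t+1}[\,\cdot\,]=\mathcal{L}_\ell^t[(x-\lambda_t)\,\cdot\,]$, use the time-$t$ orthogonality of $P_{\vec{n}+\vec{e}_j}^t$ and $P_{\vec{n}}^t$, and conclude by uniqueness of the monic m-OP. The extra details you supply — divisibility by $x-\lambda_t$, monicity, and normality of the shifted system (which needs no genericity argument: it follows exactly from $P_{\vec{n}}^t(\lambda_t)=(-1)^{|\vec{n}|+1}\tau_{\vec{n}}^{t+1}/\tau_{\vec{n}}^t$, i.e., the identity~\eqref{ddetmops} derived later in the paper) — are points the paper's proof leaves implicit.
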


\begin{proof}
From the multiple orthogonality, it is easy to verify
\begin{gather*}
\mathcal{L}_i^{t+1}\big[x^kP_{\vec{n},j}^{t+1}(x)\big]
 =\mathcal{L}_i^t\big[x^k(x-\lambda _t)P_{\vec{n},j}^{t+1}(x)\big]
 =\mathcal{L}_i^t\big[x^k\big( P_{\vec{n}+\vec{e}_j}^t(x)+A_{\vec{n},j}^t P_{\vec{n},j}^t(x)\big)\big] \\
\hphantom{\mathcal{L}_i^{t+1}\big[x^kP_{\vec{n},j}^{t+1}(x)\big]}{} =0,\qquad k=0,1,\dots ,n_i-1,
\end{gather*}
for all $i,j=1,\dots ,r$. Then the uniqueness of monic m-OPs gives us~\eqref{uniqueness} and this completes the proof.
\end{proof}

In case $r=1$, the transformation (\ref{mct}) coincides with the Christof\/fel transformation for OPs. We shall refer to the transformation (\ref{mct}) as the Christof\/fel transformation for m-OPs. Setting the initial time $t=0$ and iterating the Christof\/fel transformation
\begin{gather}
P_{\vec{n}}^{t+1}(x)=\frac{1}{x-\lambda _t}\big( P_{\vec{n}+\vec{e}_j}^t(x)+A_{\vec{n},j}^tP_{\vec{n}}^t(x) \big),\nonumber\\
A_{\vec{n},j}^t=-\frac{P_{\vec{n}+\vec{e}_j}^t(\lambda _t)}{ P_{\vec{n}}^t(\lambda _t) },\qquad j=1,\dots ,r,\label{mct2}
\end{gather}
we can obtain the chain of m-OPs
\begin{gather*}
\big\{ P_{\vec{n}}^0(x)\big\} \rightarrow \big\{ P_{\vec{n}}^1(x)\big\} \rightarrow \cdots \rightarrow \big\{ P_{\vec{n}}^t(x)\big\} \rightarrow \big\{ P_{\vec{n}}^{t+1}(x)\big\} \rightarrow \cdots .
\end{gather*}
\begin{Remark}
We can derive the nonlinear equations from the compatibility condition of the relations~\eqref{mct2} themselves
\begin{gather}\label{mkp}
A_{\vec{n},i}^{t+1}-A_{\vec{n},j}^{t+1}=A_{\vec{n}+\vec{e_j},i}^t-A_{\vec{n}+\vec{e}_i,j}^t,\qquad
A_{\vec{n},i}^tA_{\vec{n}+\vec{e}_i,j}^t=A_{\vec{n},j}^tA_{\vec{n}+\vec{e}_j,i}^t.
\end{gather}
\end{Remark}

As for the discrete time m-Toda lattice, the Geronimus transformation, which in a way is the reciprocal to the Christof\/fel transformation, plays a key role.
The Geronimus transformation for m-OPs is given in the following statement.
\begin{Theorem}
The sequence of m-OPs $\{ P_{\vec{n}}^{t}(x) \}$ connected by the relation~\eqref{mct2} satisfies the following contiguous relations
\begin{gather}\label{mgt}
P_{\vec{n}}^{t}(x)=P_{\vec{n}}^{t+1}(x)+\sum_{k=1}^rB_{\vec{n},k}^tP_{\vec{n}-\vec{e}_k}^{t+1}(x),\qquad B_{\vec{n},k}^t=-a_{\vec{n},k}^t\frac{P_{\vec{n}-\vec{e}_k}^t(\lambda _t)}{P_{\vec{n}}^t(\lambda _t)},
\end{gather}
where $a_{\vec{n},k}^t$ are the coefficients of the nearest neighbor recurrence relation
\begin{gather}\label{trec}
xP_{\vec{n}}^t(x)=P_{\vec{n}+\vec{e}_j}^t(x)+b_{\vec{n},j}^tP_{\vec{n}}^t(x)+\sum_{k=1}^ra_{\vec{n},k}^tP_{\vec{n}-\vec{e}_k}^t(x), \qquad j=1,\dots ,r.
\end{gather}
\end{Theorem}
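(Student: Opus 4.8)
The plan is to show that the Geronimus-type relation \eqref{mgt} holds by expanding $P_{\vec n}^t$ in the basis $\{P_{\vec m}^{t+1}\}$ of polynomials of appropriate degree and then identifying the coefficients. First I would note that $P_{\vec n}^t$ is a monic polynomial of degree $|\vec n|$, and so is $P_{\vec n}^{t+1}$; hence $P_{\vec n}^t - P_{\vec n}^{t+1}$ has degree at most $|\vec n|-1$. Since the m-OPs $\{P_{\vec m}^{t+1}\}$ with $|\vec m|\le |\vec n|-1$ span the polynomials of degree at most $|\vec n|-1$ (this is where one uses that the indices involved are normal, so these polynomials are well defined and, by the nearest-neighbor recurrence, give a full basis), we may write
\begin{gather*}
P_{\vec n}^t(x)=P_{\vec n}^{t+1}(x)+\sum_{k=1}^r B_{\vec n,k}^t P_{\vec n-\vec e_k}^{t+1}(x)+(\text{lower-order terms})
\end{gather*}
and the first task is to show that only the $r$ terms $P_{\vec n-\vec e_k}^{t+1}$ actually survive. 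This should follow from multiple orthogonality: testing $P_{\vec n}^t$ against $x^i\,d\mu_j^{t+1}$ for suitable ranges of $i$, and using $d\mu_j^{t+1}=(x-\lambda_t)\,d\mu_j^t$ together with the orthogonality of $P_{\vec n}^t$ with respect to $d\mu_j^t$, kills all contributions from $P_{\vec m}^{t+1}$ with $|\vec m|<|\vec n|-1$, forcing their coefficients to vanish.

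Next I would pin down the value of the surviving coefficients $B_{\vec n,k}^t$. The cleanest route is to substitute the Christoffel relation \eqref{mct2} — in the form $(x-\lambda_t)P_{\vec m}^{t+1}(x)=P_{\vec m+\vec e_j}^t(x)+A_{\vec m,j}^t P_{\vec m}^t(x)$ — into the candidate identity \eqref{mgt} after multiplying through by $(x-\lambda_t)$. This turns the right-hand side into a combination of $P_{\vec n}^t$, $P_{\vec n+\vec e_j}^t$, $P_{\vec n-\vec e_k}^t$, $P_{\vec n-\vec e_k+\vec e_j}^t$, which by the nearest-neighbor recurrence \eqref{trec} can be compared with $(x-\lambda_t)P_{\vec n}^t(x)=xP_{\vec n}^t(x)-\lambda_t P_{\vec n}^t(x)$, itself rewritten via \eqref{trec}. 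Matching coefficients of $P_{\vec n-\vec e_k}^t$ on both sides yields an expression for $B_{\vec n,k}^t$ in terms of $a_{\vec n,k}^t$ and the $A$-coefficients; evaluating the whole identity \eqref{mgt} at $x=\lambda_t$ (where the leading $P_{\vec n}^{t+1}$, $P_{\vec n-\vec e_k}^{t+1}$ are finite by the Christoffel construction) gives the compact closed form $B_{\vec n,k}^t=-a_{\vec n,k}^t P_{\vec n-\vec e_k}^t(\lambda_t)/P_{\vec n}^t(\lambda_t)$ stated in \eqref{mgt}, provided one checks consistency of the two computations.

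The main obstacle I anticipate is the basis/spanning step: one must be careful that the polynomials $\{P_{\vec m}^{t+1}: |\vec m|\le |\vec n|-1\}$ together with $P_{\vec n}^{t+1}$ genuinely form a basis adapted to the orthogonality conditions, and that the normality hypotheses in the statement (indices $\vec n$ and $\vec n\pm\vec e_k$ normal, plus implicitly the indices on the step-line down to $\vec 0$) are exactly what guarantees this. A secondary subtlety is the placement of the shift index $k$ versus $j$: the coefficient $a_{\vec n,k}^t$ in \eqref{mgt} and \eqref{trec} should be checked to be independent of the direction $j$ used in the recurrence, which is precisely the content of the consistency relations \eqref{diffcoeff}/\eqref{1.2}; invoking \eqref{1.2} one sees $a_{\vec n,k}^t$ is intrinsically defined, so the formula for $B_{\vec n,k}^t$ is unambiguous. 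Once these structural points are settled, the remaining verification is a routine determinant or recurrence manipulation, so I would keep that part brief and refer back to \eqref{detmops}--\eqref{tau} if a determinantal proof is preferred over the recurrence-based one.
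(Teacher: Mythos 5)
Your first step coincides with the paper's: both arguments start from the observation that $P_{\vec n}^t-P_{\vec n}^{t+1}$ has degree at most $|\vec n|-1$ and, because $\mathcal{L}_i^{t+1}[x^jP_{\vec n}^t]=\mathcal{L}_i^{t}[(x-\lambda_t)x^jP_{\vec n}^t]=0$ for $j\le n_i-2$ (this is \eqref{change}), that this difference satisfies the reduced orthogonality conditions and therefore lies in the span of $P_{\vec n-\vec e_1}^{t+1},\dots,P_{\vec n-\vec e_r}^{t+1}$. Be aware that the clean way to justify this is the structural lemma "any polynomial of degree $<|\vec n|$ orthogonal to $x^i$, $i\le n_j-2$, with respect to every $\mathcal{L}_j^{t+1}$ is a combination of the $P_{\vec n-\vec e_k}^{t+1}$", rather than your expansion in the overcomplete family $\{P_{\vec m}^{t+1}:|\vec m|\le|\vec n|-1\}$, where "their coefficients" are not even well defined until you fix a genuine basis. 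Where you genuinely diverge is the identification of $B_{\vec n,k}^t$: the paper applies $\mathcal{L}_k^{t+1}[x^{n_k-1}\,\cdot\,]$ to the decomposition, which isolates the $k$-th term and yields \eqref{p1}, and then combines \eqref{p2}--\eqref{p3} with the functional formula \eqref{p4} for $a_{\vec n,k}^t$; you instead multiply \eqref{mgt} by $(x-\lambda_t)$, substitute the Christoffel relation \eqref{mct2} (with index $\vec n-\vec e_k$ and direction $k$, so the relevant coefficient is $A_{\vec n-\vec e_k,k}^t$), rewrite $(x-\lambda_t)P_{\vec n}^t$ via \eqref{trec}, and match coefficients, obtaining $B_{\vec n,k}^t=a_{\vec n,k}^t/A_{\vec n-\vec e_k,k}^t$ and, as a by-product, the relation $b_{\vec n,j}^t=A_{\vec n,j}^t+\sum_k B_{\vec n,k}^t+\lambda_t$ of \eqref{tcoeff}. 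This works, but it needs the linear independence of $P_{\vec n-\vec e_1}^t,\dots,P_{\vec n-\vec e_r}^t$ (all of degree $|\vec n|-1$), which follows from normality by the same argument the paper spells out in Subsection~\ref{subsec:3.4}; the paper's functional computation avoids this. Two small corrections to your wording: evaluating \eqref{mgt} itself at $x=\lambda_t$ gives only one scalar equation and does not determine the $r$ coefficients -- the closed form comes from $B_{\vec n,k}^t=a_{\vec n,k}^t/A_{\vec n-\vec e_k,k}^t$ together with the definition $A_{\vec n-\vec e_k,k}^t=-P_{\vec n}^t(\lambda_t)/P_{\vec n-\vec e_k}^t(\lambda_t)$; and the independence of $a_{\vec n,k}^t$ from the direction $j$ is already built into the nearest-neighbor recurrence \eqref{trec}/\eqref{1.2}, so no appeal to \eqref{diffcoeff} is needed for that point.
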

\begin{proof}
First, consider the polynomial $P_{\vec{n}}^t(x)-P_{\vec{n}}^{t+1}(x)$, which is a polynomial of degree $|\vec{n}|-1$. From the multiple orthogonality relation and \eqref{d1pt}, we can easily check
\begin{gather}\label{change}
\mathcal{L}_i^{t+1}\big[x^jP_{\vec{n}}^t(x)\big] =\mathcal{L}_i^t\big[(x-\lambda _t)x^jP_{\vec{n}}^t(x)\big]=0,\qquad j=0,\dots, n_i-2.
\end{gather}
This readily shows $P_{\vec{n}}^t(x)-P_{\vec{n}}^{t+1}(x)$ is orthogonal to all polynomials of degree less than $n_j-1$ with respect to the linear functional $\mathcal{L}_j^{t+1}$. Hence, we can write $P_{\vec{n}}^t(x)-P_{\vec{n}}^{t+1}(x)$ as a linear combination of the polynomials $P^{t+1}_{\vec{n}-\vec{e}_j}$, $j=1,\dots ,r$, which form a basis for the linear space of all polynomials of degree less than $|\vec{n}|$ and satisfy the multiple orthogonality conditions
\begin{gather*}
\mathcal{L}_j^{t+1}\big[x^kP_{\vec{n}}^{t+1}(x)\big]=0,\qquad j=1,\dots ,r,
\end{gather*}
for $k\le n_j-2$. We write $P_{\vec{n}}^t(x)-P_{\vec{n}}^{t+1}(x)=\sum\limits_{k=1}^r B_{\vec{n},k}^tP_{\vec{n}-\vec{e}_k}^{t+1}(x)$.
From \eqref{change}, we can easily f\/ind
\begin{gather}\label{p1}
B_{\vec{n},k}^t=\frac{\mathcal{L}_k^{t+1}[x^{n_k-1}P_{\vec{n}}^t(x)]}{\mathcal{L}_k^{t+1}[x^{n_k-1}P_{\vec{n}-\vec{e}_k}^{t+1}(x)]},
\qquad k=1,\dots ,r.
\end{gather}
Here, some calculations show
\begin{gather}\label{p2}
\mathcal{L}_k^{t+1}\big[x^{n_k-1}P_{\vec{n}}^t(x)\big]=\mathcal{L}_k^{t}\big[(x-\lambda _t)x^{n_k-1}P_{\vec{n}}^t(x)\big]=\mathcal{L}_k^{t}\big[x^{n_k}P_{\vec{n}}^t(x)\big],
\end{gather}
and
\begin{gather}
\mathcal{L}_k^{t+1}\big[x^{n_k-1}P_{\vec{n}-\vec{e}_k}^{t+1}(x)\big]
 =\mathcal{L}_k^{t}\big[(x-\lambda _t)x^{n_k-1}P_{\vec{n}-\vec{e}_k}^{t+1}(x)\big]
 =\mathcal{L}_k^{t}\big[x^{n_k-1}\big(P_{\vec{n}}^t(x)+A_{\vec{n},k}^tP_{\vec{n}-\vec{e}_k}^t(x)\big)\big]\nonumber\\
 \hphantom{\mathcal{L}_k^{t+1}\big[x^{n_k-1}P_{\vec{n}-\vec{e}_k}^{t+1}(x)\big]}{}
 =A_{\vec{n},k}^t\mathcal{L}_k^t\big[x^{n_k-1}P_{\vec{n}-\vec{e}_k}^t(x)\big].\label{p3}
\end{gather}
We can also calculate the coef\/f\/icients of \eqref{trec} from the multiple orthogonality relation
\begin{gather}\label{p4}
a_{\vec{n},k}^t=\frac{\mathcal{L}_k^t[x^{n_k}P_{\vec{n}}^t(x)]}{\mathcal{L}_k^t[x^{n_k-1}P_{\vec{n}-\vec{e}_k}^t(x)]}.
\end{gather}
Combining \eqref{p1}--\eqref{p4}, we f\/inally arrive at the following result:
\begin{gather*}
B_{\vec{n}}^t=\frac{a_{\vec{n},k}^t}{A_{\vec{n},k}^t}=-a_{\vec{n},k}^t\frac{P_{\vec{n}-\vec{e}_k}^t(\lambda _t)}{P_{\vec{n}}^t(\lambda _t)},
\qquad k=1,\dots ,r.
\end{gather*}
This completes the proof.
\end{proof}

\begin{Remark}
Although we can formally derive all the relations in Section~\ref{sec:3} for $\lambda_t$ that changes with the discrete time $t$, we are only concerned with the case $\lambda_t=\lambda$ and, particularly, $\lambda_t=0$. The reason is that the basis for our construction of integrable systems is a perfect system of measures $d\mu_1, \dots, d\mu_r$ and it has to be perfect at all times. In other words, we need to have all the multiple orthogonal polynomials to exist for any multi-index for all values of the time. However, it is still an open question when the system remains perfect under Christof\/fel or Geronimus transformations. Moreover, the only examples we know at the moment correspond to the case $\lambda_t=0$, which can be easily modif\/ied to $\lambda_t=\lambda$ (see Subsection~\ref{subsec:3.5}).
\end{Remark}

From \eqref{mct2} and \eqref{mgt}, we can reproduce the nearest neighbor recurrence relation \eqref{trec} and the coef\/f\/icients can explicitly be written in terms of $\big\{ A_{\vec{n},j}^t,B_{\vec{n},j}^t\big\}$
\begin{gather}\label{tcoeff}
b_{\vec{n},j}^t=A_{\vec{n},j}^t+\sum_{k=1}^rB_{\vec{n},k}^t+\lambda _t,\qquad a_{\vec{n},j}^t=A_{\vec{n}-\vec{e}_j,j}^tB_{\vec{n},j}^t,\qquad j=1,\dots ,r.
\end{gather}
Substituting~\eqref{tcoeff} into~\eqref{diffcoeff} and also using the relation~\eqref{mkp}, we obtain, after some calculations and simplif\/ications,
the contiguous relations for $\{ A_{\vec{n},j}^t,B_{\vec{n},j}^t\} $.
\begin{Corollary}
If we put $\lambda_t=0$, the coefficients $\big\{ A_{\vec{n},j}^t,B_{\vec{n},j}^t\big\}$ in~\eqref{mct2} and \eqref{mgt} satisfy the following difference equations on~$\vec{n}$
\begin{gather}
A_{\vec{n},i}^tA_{\vec{n}+\vec{e}_i,j}^t=A_{\vec{n},j}^tA_{\vec{n}+\vec{e}_j,i}^t, \nonumber\\
A_{\vec{n}+\vec{e}_j,i}^t-A_{\vec{n}+\vec{e_i},j}^t+A_{\vec{n},j}^t-A_{\vec{n},i}^t=\sum_{k=1}^r\big(B_{\vec{n}+\vec{e}_i,k}^t-B_{\vec{n}+\vec{e}_j,k}^t\big),\nonumber\\
\frac{B_{\vec{n},i}^t}{B_{\vec{n}+\vec{e}_j,i}^t}=\frac{A_{\vec{n},j}^t-A_{\vec{n}+\vec{e}_j-\vec{e}_i,i}^t}{A_{\vec{n},j}^t-A_{\vec{n},i}^t}\label{div}
\end{gather}
for all $t$ and $i,j=1,\dots ,r$.
\end{Corollary}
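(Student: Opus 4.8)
The plan is to derive \eqref{div} purely algebraically by substituting the expressions \eqref{tcoeff} for the nearest-neighbor recurrence coefficients $b_{\vec{n},j}^t$ and $a_{\vec{n},j}^t$ (with $\lambda_t=0$) into the consistency relations \eqref{diffcoeff}, and then simplifying with the help of the compatibility relations \eqref{mkp} for the Christoffel coefficients $A_{\vec{n},j}^t$. The first equation of \eqref{div} is immediate: it is nothing but the second equation in \eqref{mkp}, recorded here for completeness. So the real work is in the remaining two equations.

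First I would treat the second equation of \eqref{div}. Starting from the first equation in \eqref{diffcoeff}, I substitute $b_{\vec{n},j}^t=A_{\vec{n},j}^t+\sum_{k=1}^r B_{\vec{n},k}^t$ and $a_{\vec{n},j}^t=A_{\vec{n}-\vec{e}_j,j}^t B_{\vec{n},j}^t$. On the right-hand side, the factor $b_{\vec{n},j}-b_{\vec{n},i}$ becomes $A_{\vec{n},j}^t-A_{\vec{n},i}^t$ (the $\sum_k B_{\vec{n},k}$ terms cancel), and the factor $b_{\vec{n}+\vec{e}_j,i}-b_{\vec{n},i}$ becomes $A_{\vec{n}+\vec{e}_j,i}^t-A_{\vec{n},i}^t+\sum_k(B_{\vec{n}+\vec{e}_j,k}^t-B_{\vec{n},k}^t)$. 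On the left-hand side one rewrites $\sum_k(a_{\vec{n}+\vec{e}_j,k}-a_{\vec{n}+\vec{e}_i,k})$ using the product formula for $a$. The key is then to expand the right-hand product and to recognize, after reorganizing, that the $\sum_k B$ differences can be eliminated using the first equation in \eqref{mkp}, $A_{\vec{n},i}^{t+1}-A_{\vec{n},j}^{t+1}=A_{\vec{n}+\vec{e}_j,i}^t-A_{\vec{n}+\vec{e}_i,j}^t$, applied at appropriate shifted indices, so that everything collapses to the stated linear relation among the $A$'s. I expect this to be the bookkeeping-heavy step.

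Next I would handle the third equation of \eqref{div}, which comes from the second (quotient) equation in \eqref{diffcoeff}. Writing $\dfrac{a_{\vec{n},i}^t}{a_{\vec{n}+\vec{e}_j,i}^t}=\dfrac{A_{\vec{n}-\vec{e}_i,i}^t B_{\vec{n},i}^t}{A_{\vec{n}+\vec{e}_j-\vec{e}_i,i}^t B_{\vec{n}+\vec{e}_j,i}^t}$ and $\dfrac{b_{\vec{n}-\vec{e}_i,j}-b_{\vec{n}-\vec{e}_i,i}}{b_{\vec{n},j}-b_{\vec{n},i}}=\dfrac{A_{\vec{n}-\vec{e}_i,j}^t-A_{\vec{n}-\vec{e}_i,i}^t}{A_{\vec{n},j}^t-A_{\vec{n},i}^t}$, I would solve for $B_{\vec{n},i}^t/B_{\vec{n}+\vec{e}_j,i}^t$. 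The surviving factor $A_{\vec{n}-\vec{e}_i,i}^t/A_{\vec{n}+\vec{e}_j-\vec{e}_i,i}^t$ then has to be converted, via the quadratic relation $A_{\vec{n}',i}^tA_{\vec{n}'+\vec{e}_i,j}^t=A_{\vec{n}',j}^tA_{\vec{n}'+\vec{e}_j,i}^t$ from \eqref{mkp} with $\vec{n}'=\vec{n}-\vec{e}_i$, into a ratio that combines with $A_{\vec{n}-\vec{e}_i,j}^t-A_{\vec{n}-\vec{e}_i,i}^t$ to produce the numerator $A_{\vec{n},j}^t-A_{\vec{n}+\vec{e}_j-\vec{e}_i,i}^t$ on the right-hand side of \eqref{div}. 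This index-chasing is delicate but purely formal.

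The main obstacle will be keeping the index shifts consistent while applying \eqref{mkp}: the relations \eqref{diffcoeff} involve $\vec{n}$, $\vec{n}\pm\vec{e}_i$, $\vec{n}+\vec{e}_j$, $\vec{n}+\vec{e}_i+\vec{e}_j$, etc., and one must apply \eqref{mkp} at precisely the right shifted base point for the cancellations to occur. Once the correct substitutions are identified, the verification is routine. I would therefore present the computation for a single pair $i\ne j$ (the general case being identical by symmetry of the notation), indicate which instance of \eqref{mkp} is invoked at each step, and leave the remaining algebraic simplifications to the reader, as is done elsewhere in the paper.
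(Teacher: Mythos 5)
Your overall route is the paper's: substitute \eqref{tcoeff} (with $\lambda_t=0$) into \eqref{diffcoeff} and simplify using \eqref{mkp}. Your treatment of the first and third equations of \eqref{div} is correct; in particular, for the third one, invoking the quadratic relation of \eqref{mkp} at the shifted base point $\vec{n}-\vec{e}_i$, namely $A_{\vec{n}-\vec{e}_i,j}^tA_{\vec{n}-\vec{e}_i+\vec{e}_j,i}^t=A_{\vec{n}-\vec{e}_i,i}^tA_{\vec{n},j}^t$, does convert the leftover factor $A_{\vec{n}+\vec{e}_j-\vec{e}_i,i}^t\big(A_{\vec{n}-\vec{e}_i,j}^t-A_{\vec{n}-\vec{e}_i,i}^t\big)/A_{\vec{n}-\vec{e}_i,i}^t$ into $A_{\vec{n},j}^t-A_{\vec{n}+\vec{e}_j-\vec{e}_i,i}^t$, exactly as you indicate.

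The gap is in your plan for the middle equation of \eqref{div}. The first relation in \eqref{mkp} links $A^{t+1}$ to $A^{t}$ and contains no $B$'s at all, so it cannot be used to eliminate the sums $\sum_k B^t$ from the single-time identity obtained by substituting \eqref{tcoeff} into the first equation of \eqref{diffcoeff}; moreover that identity is genuinely quadratic in the $A$'s and $B$'s, and no reorganization of one instance of it alone can collapse to the linear target. What is needed instead is to use the first equation of \eqref{diffcoeff} for both orderings $(i,j)$ and $(j,i)$ and add them, which yields $0=\big(b_{\vec{n},j}^t-b_{\vec{n},i}^t\big)\bigl[\big(b_{\vec{n}+\vec{e}_j,i}^t-b_{\vec{n},i}^t\big)-\big(b_{\vec{n}+\vec{e}_i,j}^t-b_{\vec{n},j}^t\big)\bigr]$. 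Since $b_{\vec{n},j}^t-b_{\vec{n},i}^t=A_{\vec{n},j}^t-A_{\vec{n},i}^t\neq 0$ for a perfect system (for $r=2$ this difference is a ratio of nonvanishing $\tau$-determinants, cf.~\eqref{CoeffDet}), one may cancel this factor and obtain $b_{\vec{n},j}^t-b_{\vec{n},i}^t=b_{\vec{n}+\vec{e}_i,j}^t-b_{\vec{n}+\vec{e}_j,i}^t$; rewriting both sides through \eqref{tcoeff} with $\lambda_t=0$ (the sums $\sum_k B_{\vec{n},k}^t$ cancel on the left, while on the right they survive with opposite signs) gives precisely the second equation of \eqref{div}, and \eqref{mkp} is not needed at this step. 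With this replacement your argument is complete and coincides with the derivation the paper sketches.
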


From the Christof\/fel and Geronimus transformations for m-OPs, we obtain the discrete Lax set for which the m-OPs appear as their eigenfunctions
\begin{gather}
(x-\lambda _t)P_{\vec{n}}^{t+1}(x)=P_{\vec{n}+\vec{e}_j}^t(x)+A_{\vec{n},j}^tP_{\vec{n}}^t(x),\qquad j=1,\dots ,r,\nonumber\\
P_{\vec{n}}^t(x)=P_{\vec{n}}^{t+1}(x)+\sum_{k=1}^rB_{\vec{n},k}^tP_{\vec{n}-\vec{e}_k}^{t+1}(x).\label{dlax}
\end{gather}
Next, the compatibility condition for the relations in~\eqref{dlax} gives us $r^2+r$ equations and, therefore, the obtained system seems overdetermined.
However, if we take into account the contiguous relation~\eqref{div}, then these $r^2+r$ equations are reduced to~$2r$ equations and the evolution is uniquely determined. Summing up these arguments, we get to Theorem~\ref{T2}:
\begin{gather}
A_{\vec{n},j}^{t+1}+\sum_{k=1}^rB_{\vec{n},k}^{t+1}=A_{\vec{n},j}^t+\sum_{k=1}^rB_{\vec{n}+\vec{e_j},k}^t,\nonumber\\
A_{\vec{n}-\vec{e}_j,j}^{t+1}B_{\vec{n},j}^{t+1}=A_{\vec{n},j}^tB_{\vec{n},j}^t,
\qquad 1\leq j \leq r.\label{dgtoda}
\end{gather}

\begin{Remark}
As with the continuous time m-Toda lattice, we see here that the system (\ref{dgtoda}) is solvable (integrable) only if the initial values are chosen so that \eqref{div} are satisf\/ied.
\end{Remark}
 One can also see that in the case $r=1$ the system \eqref{dgtoda} coincides with the discrete time Toda lattice \eqref{dtoda}. Therefore, it is natural to call (\ref{dgtoda}) a \textit{discrete multiple Toda} (dm-Toda) lattice.

\begin{Remark}
The dm-Toda lattice (\ref{dgtoda}) is exactly the discrete analogue of the continuous time m-Toda lattice~(\ref{gtoda}).
Indeed, let us introduce the new variables $a_{\vec{n},j}(t)$ and $b_{\vec{n},j}(t)$ by the following relation
\begin{gather}\label{var}
\lambda _t= \frac{1}{\delta},\qquad A_{\vec{n},j}^t=b_{\vec{n},j}(t\delta )-\frac{1}{\delta },\qquad B_{\vec{n},j}^t=\delta a_{\vec{n},j}(t\delta ).
\end{gather}
Then, substituting \eqref{var} into \eqref{dgtoda}, we can obtain the following equations
\begin{gather}
\frac{b_{\vec{n},j}(t\delta +\delta )-b_{\vec{n},j}(t\delta)}{\delta }=\sum_{k=1}^r a_{\vec{n}+\vec{e}_j,k}(t\delta )-a_{\vec{n},k}(t\delta +\delta),\nonumber\\
\frac{a_{\vec{n},j}(t\delta +\delta )-a_{\vec{n},j}(t\delta )}{\delta }=b_{\vec{n},j}(t\delta )a_{\vec{n},j}(t\delta )-b_{\vec{n}-\vec{e}_j,j}(t\delta+\delta )a_{\vec{n},j}(t\delta +\delta ).\label{dgtoda2}
\end{gather}
If we take $t\delta \rightarrow t$ and the continuous limit $\delta \rightarrow 0$, it is straightforward to see that the equa\-tions~\eqref{dgtoda2} go to the m-Toda lattice \eqref{gtoda}.
\end{Remark}

\begin{Remark}
We can also verify that the contiguous relation \eqref{div} reduce to \eqref{diffcoeff} in the continuous limit after some careful calculations and simplif\/ications.
\end{Remark}

In the previous section, we have seen that the m-Toda lattice~\eqref{gtoda} admits the determinant solution~\eqref{tau} with the dispersion relation~\eqref{1pt}. We shall now give the determinant solution to the dm-Toda lattice~\eqref{dgtoda}. Let us introduce the $\tau$-function $\tau_{\vec{n}}^t$ def\/ined by
\begin{gather*}
\tau_{\vec{n}}^t:
=\begin{vmatrix}
\mu_{0,1}^t & \mu_{1,1}^t & \cdots &\mu_{n_1-1,1}^t & \cdots & \mu_{0,r}^t & \cdots& \mu_{n_r-1,r}^t \\
\mu_{1,1}^t & \mu_{2,1}^t & \cdots &\mu_{n_1+1,1}^t & \cdots & \mu_{1,r}^t & \cdots & \mu_{n_r,r}^t \\
\vdots & \vdots & \cdots & \vdots & \cdots & \vdots & \cdots & \vdots \\
\mu_{|\vec{n}|-1,1}^t & \mu_{|\vec{n}|,1}^t & \cdots & \mu_{|\vec{n}|+n_1-2,1}^t & \cdots &\mu_{|\vec{n}|-1,r}^t & \cdots & \mu_{|\vec{n}|+n_r-2,r}^t
\end{vmatrix},
\end{gather*}
where $\mu_{i,j}^t:=\mathcal{L}_j^t[x^i]$. From \eqref{d1pt}, we get the following relation
\begin{gather*}
\mu_{i,j}^{t+1}=\mu_{i+1,j}^t-\lambda _t\mu_{i,j}^t,\qquad j=1,\dots ,r.
\end{gather*}
From the determinant expression of the m-OPs \eqref{detmops} we can get by means of elementary transformations of determinants
\begin{gather}
P_{\vec{n}}^t(\lambda _t) =\frac{1}{\tau_{\vec{n}}^t}
\begin{vmatrix}
\mu_{0,1}^t & \cdots &\mu_{n_1-1,1}^t & \cdots &\mu_{0,r}^t & \cdots& \mu_{n_r-1,r}^t & 1 \\
\mu_{0,1}^{t+1} & \cdots &\mu_{n_1-1,1}^{t+1} & \cdots &\mu_{0,r}^t & \cdots& \mu_{n_r-1,r}^{t+1} & 0 \\
\mu_{1,1}^{t+1} & \cdots &\mu_{n_1,1}^{t+1} & \cdots & \mu_{1,r}^t & \cdots & \mu_{n_r,r}^{t+1} & 0 \\
\vdots & \cdots & \vdots & \cdots & \vdots & \cdots & \vdots & \vdots \\
\mu_{|\vec{n}|-1,1}^{t+1} & \cdots & \mu_{|\vec{n}|+n_1-2,1}^{t+1} & \cdots &\mu_{|\vec{n}|-1,r}^{t+1} & \cdots & \mu_{|\vec{n}|+n_r-2,r}^{t+1} & 0
\end{vmatrix}\nonumber\\
\hphantom{P_{\vec{n}}^t(\lambda _t)}{} =(-1)^{|\vec{n}|+1}\frac{\tau_{\vec{n}}^{t+1}}{\tau_{\vec{n}}^t}.\label{ddetmops}
\end{gather}
Combining \eqref{tau}, \eqref{mct}, \eqref{mgt} and \eqref{ddetmops}, we arrive at the determinant expression of the solution:
\begin{gather}\label{dep-var}
A_{\vec{n},j}^t=\frac{\tau_{\vec{n}+\vec{e}_j}^{t+1}\tau_{\vec{n}}^t}{\tau_{\vec{n}+\vec{e}_j}^t\tau_{\vec{n}}^{t+1}},\qquad B_{\vec{n},j}^t=\frac{\tau_{\vec{n}-\vec{e}_j}^{t+1}\tau_{\vec{n}+\vec{e}_j}^t}{\tau_{\vec{n}}^t\tau_{\vec{n}}^{t+1}},\qquad j=1,\dots ,r.
\end{gather}

We shall consider the bilinear equations for dm-Toda lattice \eqref{dgtoda} for the case $\lambda_t =\lambda$. From~\eqref{mkp}, the dependent variable $A_{\vec{n},j}^t $ obeys a discrete KP equation, which reduces to the following Hirota--Miwa equation
\begin{gather}\label{hirota-miwa}
\tau_{\vec{n}+\vec{e}_i+\vec{e}_j}^{t}\tau_{\vec{n}}^{t+1}-\tau_{\vec{n}+\vec{e}_i}^{t}\tau_{\vec{n}+\vec{e}_j}^{t+1}
+\tau_{\vec{n}+\vec{e}_j}^{t}\tau_{\vec{n}+\vec{e}_i}^{t+1}=0,\qquad i\ne j.
\end{gather}
Using the Hirota--Miwa equation and substituting \eqref{dep-var} into \eqref{dgtoda}, we obtain another bilinear equation of $\tau_{\vec{n}}^t$:
\begin{gather}\label{dbilinear}
\tau_{\vec{n}}^{t+1}\tau_{\vec{n}}^{t-1}=\big(\tau_{\vec{n}}^t\big)^2+\sum_{k=1}^r \tau_{\vec{n}+\vec{e}_k}^{t-1}\tau_{\vec{n}-\vec{e}_k}^{t+1},
\end{gather}
which is the multiple generalization of the bilinear equation of the ordinary discrete Toda lattice.
Indeed we can derive the dm-Toda lattice~\eqref{dgtoda} from these bilinear equations. Using~\eqref{dbilinear}, one has
\begin{gather*}
\left(\frac{\tau_{\vec{n}+\vec{e}_j}^{t+1}}{\tau_{\vec{n}}^{t+1}} \right)^2 = \frac{\tau_{\vec{n}+\vec{e}_j}^{t+2}\tau_{\vec{n}+\vec{e}_j}^t-\tau_{\vec{n}+2\vec{e}_j}^t \tau_{\vec{n}}^{t+2}-\sum\limits_{k\ne j} \tau_{\vec{n}+\vec{e}_j+\vec{e}_k}^{t}\tau_{\vec{n}+\vec{e}_j-\vec{e}_k}^{t+2}}{\tau_{\vec{n}}^{t+2}\tau_{\vec{n}}^t-\tau_{\vec{n}+\vec{e}_j}^t \tau_{\vec{n}-\vec{e}_j}^{t+2}-\sum\limits_{k\ne j} \tau_{\vec{n}+\vec{e}_k}^{t}\tau_{\vec{n}-\vec{e}_k}^{t+2}},
\end{gather*}
which can be rewritten as
\begin{gather*}
A_{\vec{n},j}^{t+1}+B_{\vec{n},j}^{t+1}+
\sum_{k\ne j}\frac{\tau_{\vec{n}+\vec{e_k}}^t\tau_{\vec{n}+\vec{e}_j}^{t+1}}{\tau_{\vec{n}+\vec{e}_j}^t \tau_{\vec{n}+\vec{e}_k}^{t+1}}B_{\vec{n},k}^{t+1}
=A_{\vec{n},j}^t+B_{\vec{n}+\vec{e}_j,j}^t
+\sum_{k\ne j}\frac{\tau_{\vec{n}}^{t+1}\tau_{\vec{n}+\vec{e}_j-\vec{e}_k}^{t+2}}{\tau_{\vec{n}}^{t+2} \tau_{\vec{n}+\vec{e}_j-\vec{e}_k}^{t+1}}B_{\vec{n}+\vec{e}_j,k}^{t}.
\end{gather*}
For all $k\ne j$, one also has
\begin{gather*}
B_{\vec{n},k}^{t+1}-\frac{\tau_{\vec{n}+\vec{e_k}}^t\tau_{\vec{n}+\vec{e}_j}^{t+1}}{\tau_{\vec{n}+\vec{e}_j}^t \tau_{\vec{n}+\vec{e}_k}^{t+1}}B_{\vec{n},k}^{t+1}=B_{\vec{n}+\vec{e}_j,k}^t-\frac{\tau_{\vec{n}}^{t+1}
\tau_{\vec{n}+\vec{e}_j-\vec{e}_k}^{t+2}}{\tau_{\vec{n}}^{t+2} \tau_{\vec{n}+\vec{e}_j-\vec{e}_k}^{t+1}}B_{\vec{n}+\vec{e}_j,k}^{t},
\end{gather*}
which is easily verif\/ied from \eqref{hirota-miwa} and \eqref{dep-var}. Comparing the above two relations we thus have the dm-Toda lattice \eqref{dgtoda}. In other words, the dm-Toda lattice is a consequence of the two bilinear equations \eqref{hirota-miwa} and \eqref{dbilinear}.

\subsection{The discrete-time integrable system and diagonal m-OP}\label{subsec:3.3}
Here we will introduce the Miura transformation from the dm-Toda lattice to the discrete integrable system associated with $r$-orthogonal polynomials, as was done in the previous section for the continuous-time m-Toda lattice. For simplicity, we only consider the case $r=2$ and the autonomous case $\lambda_t=\lambda$. We denote the corresponding m-OPs by~$\{ P_{m,n}^t\} $ and we write the discrete Lax set of m-OPs as follows
\begin{gather}
(x-\lambda _t)P_{m,n}^{t+1}(x)=P_{m+1,n}^t(x)+A_{m,n,1}^tP_{m,n}^t(x),\nonumber\\
(x-\lambda _t)P_{m,n}^{t+1}(x)=P_{m,n+1}^t(x)+A_{m,n,2}^tP_{m,n}^t(x),\nonumber\\
P_{m,n}^{t}(x)=P_{m,n}^{t+1}(x)+B_{m,n,1}^tP_{m-1,n}^{t+1}(x)+B_{m,n,2}^tP_{m,n-1}^{t+1}(x).\label{d2lax}
\end{gather}
Thus the corresponding dm-Toda equation can be represented by
\begin{gather}
A_{m,n,1}^t+B_{m+1,n,1}^t+B_{m+1,n,2}^t=A_{m,n,1}^{t+1}+B_{m,n,1}^{t+1}+B_{m,n,2}^{t+1}, \nonumber\\
A_{m,n,2}^{t}+B_{m,n+1,1}^{t}+B_{m,n+1,2}^{t}=A_{m,n,2}^{t+1}+B_{m,n,1}^{t+1}+B_{m,n,2}^{t+1},\nonumber\\
A_{m,n,1}^tB_{m,n,1}^t=A_{m-1,n,1}^{t+1}B_{m,n,1}^{t+1},\qquad A_{m,n,2}^{t}B_{m,n,2}^{t}=A_{m,n-1,2}^{t+1}B_{m,n,2}^{t+1},\label{d2gtoda}
\end{gather}
with the contiguous relations of the initial values
\begin{gather*}
A_{m,n,1}^0A_{m+1,n,2}^0=A_{m,n+1,1}^0A_{m,n,2}^0,\\
A_{m,n+1,1}^0-A_{m,n,1}^0-A_{m+1,n,2}^0+A_{m,n,2}^0=B_{m+1,n,1}^0-B_{m,n+1,1}^0+B_{m+1,n,2}^0-B_{m,n+1,2}^0,\\
\frac{B_{m,n,1}^0}{B_{m,n+1,1}^0}=\frac{A_{m-1,n+1,1}^0-A_{m,n,2}^0}{A_{m,n,1}^0-A_{m,n,2}^0},\qquad \frac{B_{m,n,2}^0}{B_{m+1,n,2}^0}=\frac{A_{m,n,1}^0-A_{m+1,n-1,2}^0}{A_{m,n,1}^0-A_{m,n,2}^0}.
\end{gather*}
Let us introduce the sequence of $2$-orthogonal polynomials $\{q_n^t\} $ by the correspondence
\begin{gather*}
q_{2n}^t(x)=P_{n,n}^t(x),\qquad q_{2n+1}^t(x)=P_{n+1,n}^t(x).
\end{gather*}
In a fashion similar to the previous section, we can also directly obtain the discrete spectral transformation of 2-orthogonal polynomials as follows
\begin{gather*}
(x-\lambda _t)q_{n}^{t+1}(x)=q_{n+1}^t(x)+X_{n}^tq_{m,n}^t(x),\qquad
q_n^t(x)=q_n^{t+1}(x)+Y_{n}^t q_{n-1}^{t+1}(x)+Z_{n}^tq_{n-2}^{t+1}(x),
\end{gather*}
with
\begin{gather}
X_{2n}^t=A_{n,n,1}^t,\qquad X_{2n+1}^t=A_{n+1,n,2}^t,\nonumber\\
Y_{2n}^t=B_{n,n,1}^t+B_{n,n,2}^t,\qquad Y_{2n+1}^t=B_{n+1,n,1}^t+B_{n+1,n,2}^t,\nonumber\\
Z_{2n}^t=B_{n,n,1}^t\big(A_{n-1,n,1}^t-A_{n,n-1,2}^t\big),\qquad
 Z_{2n+1}^t=B_{n+1,n,2}^t\big(B_{n+1,n-1}^{t}-A_{n,n,1}^{t}\big).\label{dmiura}
\end{gather}
From the compatibility condition of \eqref{d2lax} it follows that the discrete integrable system asso\-ciated with 2-orthogonal polynomials is given by
\begin{gather}
X_{n}^t+Y_{n+1}^t+\lambda_t =X_n^{t+1}+Y_n^{t+1}+\lambda_{t+1},\nonumber\\
X_n^tY_n^t+Z_{n+1}^t=X_{n-1}^{t+1}Y_n^{t+1}+Z_n^{t+1},\qquad X_n^tZ_n^t=X_{n-2}^{t+1}Z_n^{t+1}.\label{dktoda}
\end{gather}
It is easy to verify that the equation \eqref{dktoda} is the integrable discretization of the special case of Kostant--Toda equation~\eqref{ktoda} and the Miura transformation from~\eqref{d2gtoda} to~\eqref{dktoda} is explicitly given by~\eqref{dmiura}.

\subsection[The consistency approach: the stationary equations and the discrete-time dynamics]{The consistency approach: the stationary equations\\ and the discrete-time dynamics} \label{subsec:3.4}
To simplify formulas and statements we restrict ourselves here to the case where the multiple orthogonal polynomials are generated by two measures. Nevertheless, it can straightforwardly be generalized to the case $r>2$.

Once again, recall that multiple orthogonal polynomials are a generalization of orthogonal polynomials where the polynomials
are required to be simultaneously orthogonal with respect to two given measures \cite{MI,vanA2011}.
Now we consider a multi-index $ (n,m) \in{\mathbb Z}^2_+$ and suppose that~$\mu_1$,~$\mu_2$
are given positive measures on the real line. Then, the
type II multiple orthogonal polynomial is the monic polynomial $P_{n,m}(x) = x^{n+m} + \cdots$
of degree $n+m$ for which
\begin{gather*} 
 \int P_{n,m}(x) x^j d\mu_1(x) = 0, \qquad j=0,1,\ldots,n-1, \\
 \int P_{n,m}(x) x^j d\mu_2(x) = 0, \qquad j=0,1,\ldots,m-1.
\end{gather*}
As in the case of ordinary orthogonal polynomials, one can introduce the moments
\begin{gather*}
\mu_{j,i} = \int x^j d\mu_i(x),\qquad i=1,2,
\end{gather*}
and the determinant of the moment matrix
\begin{gather*} 
 \tau_{n,m} = \left| \begin{matrix}
 \mu_{0,1} & \cdots & \mu_{n-1,1}\\
 \mu_{1,1} & \cdots & \mu_{n,1} \\
	 \vdots & \cdots & \vdots \\
 \mu_{n+m-1,1} & \cdots & \mu_{2n+m-2,1} \end{matrix} \ 		
 \begin{matrix}
 \mu_{0,2} & \cdots & \mu_{m-1,2} \\
 \mu_{1,2} & \cdots & \mu_{m,2} \\
	 \vdots & \cdots & \vdots \\
 \mu_{n+m-1,2} & \cdots & \mu_{n+2m-2,2} 		
 \end{matrix} \right|.
\end{gather*}
Now we see that the type II multiple orthogonal polynomial can be written as
\begin{gather*} P_{n,m}(x) = \frac{1}{\tau_{n,m}}
 \left|\begin{matrix}
 \mu_{0,1} & \cdots & \mu_{n-1,1}\\
 \mu_{1,1} & \cdots & \mu_{n,1} \\
	 \vdots & \vdots & \vdots \\
 \mu_{n+m,1} & \cdots & \mu_{2n+m-1,1}\end{matrix}	 \	
 \begin{matrix}
 \mu_{0,2} & \cdots & \mu_{m-1,2} \\
 \mu_{1,2} & \cdots & \mu_{m,2} \\
	 \vdots & \cdots & \vdots \\
 \mu_{n+m,2} & \cdots & \mu_{n+2m-1,2}		
 \end{matrix} \
 \begin{matrix} 1 \\ x \\ \vdots \\ x^{n+m} \end{matrix} \right|
\end{gather*}
provided that $\tau_{n,m}$ is nonvanishing. In the latter case the index $(n,m)$ is normal.
We assume that all multi-indices are normal.

In the case of multiple orthogonal polynomials, the three-term recurrence relations are replaced with the following relation for the nearest neighbors
\begin{gather}
 P_{n+1,m}(x) = (x-b_{n,m,1})P_{n,m}(x) - a_{n,m,1} P_{n-1,m}(x) - a_{n,m,2} P_{n,m-1}(x),\nonumber \\
 P_{n,m+1}(x) = (x-b_{n,m,2})P_{n,m}(x) - a_{n,m,1} P_{n-1,m}(x) - a_{n,m,2} P_{n,m-1}(x),\label{eq:2.6}
\end{gather}
with $a_{0,m,1}=0$ and $a_{n,0,2}=0$ for all $n,m \geq 0$.

Unlike the case of ordinary orthogonal polynomials, the coef\/f\/icients of the recurrence rela\-tions~\eqref{eq:2.6} are solutions of a discrete integrable system even without introducing the discrete time evolution. Here we follow the concept of discrete integrability given in~\cite{BS2002}.

\begin{Proposition}[\cite{ADvanA2015,ADvanA2014}] Let us consider the following vector-valued wave function
\begin{gather*}
\Psi_{n,m}(x)=\big(P_{n,m}(x),P_{n-1,m}(x),P_{n,m-1}(x)\big)^{\top}.
\end{gather*}
Then the corresponding transition matrices are
\begin{gather*} 
 L_{n,m} = \begin{pmatrix}
 z-b_{n,m,1} & -a_{n,m,1} & -a_{n,m,2} \\
 1 & 0 & 0 \\
 1 & 0 & b_{n,m-1,2}-b_{n,m-1,1}
 \end{pmatrix}
\end{gather*}
and
\begin{gather*} 
 M_{n,m} = \begin{pmatrix}
 z-b_{n,m,2} & -a_{n,m,1} & -a_{n,m,2} \\
 1 & b_{n-1,m,1}-b_{n-1,m,2} & 0 \\
 1 & 0 & 0
 \end{pmatrix},
\end{gather*}
and they give the non-trivial zero curvature condition
\begin{gather}\label{MOPDIS}
0 = L_{n,m+1} M_{n,m} - M_{n+1,m} L_{n,m}.
\end{gather}
\end{Proposition}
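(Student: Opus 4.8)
The plan is to recognize \eqref{MOPDIS} as the compatibility (discrete zero curvature) condition of two elementary one-step relations for the wave function $\Psi_{n,m}$, to establish those relations from the nearest-neighbor recurrences \eqref{eq:2.6}, and then to upgrade the resulting vector identity to the asserted matrix identity.

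First I would check that $\Psi_{n+1,m}(x)=L_{n,m}\Psi_{n,m}(x)$, component by component. The first component, $P_{n+1,m}(x)$, is exactly the first recurrence in \eqref{eq:2.6}; the second, $P_{n,m}(x)$, is tautological; and the third requires the contiguity relation
\[
P_{n+1,m-1}(x)=P_{n,m}(x)+\big(b_{n,m-1,2}-b_{n,m-1,1}\big)P_{n,m-1}(x),
\]
which falls out by subtracting the two recurrences in \eqref{eq:2.6} written at the index $(n,m-1)$ (their right-hand sides share the same $a$-terms, which cancel). In the same way $\Psi_{n,m+1}(x)=M_{n,m}\Psi_{n,m}(x)$: the first component is the second recurrence in \eqref{eq:2.6}, the third is tautological, and the second is the contiguity relation $P_{n-1,m+1}(x)=P_{n,m}(x)+\big(b_{n-1,m,1}-b_{n-1,m,2}\big)P_{n-1,m}(x)$ obtained by subtracting the two recurrences at $(n-1,m)$. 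The boundary conventions $a_{0,m,1}=a_{n,0,2}=0$ ensure all of this remains meaningful along the edges of the lattice.

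Having both one-step relations, I would compute $\Psi_{n+1,m+1}(x)$ in the two orders,
\[
L_{n,m+1}M_{n,m}\Psi_{n,m}(x)=\Psi_{n+1,m+1}(x)=M_{n+1,m}L_{n,m}\Psi_{n,m}(x),
\]
so that $D_{n,m}(x)\Psi_{n,m}(x)\equiv 0$ with $D_{n,m}:=L_{n,m+1}M_{n,m}-M_{n+1,m}L_{n,m}$. A direct matrix multiplication shows that the second and third rows of $D_{n,m}$ vanish identically (this is forced by the structure of $L$ and $M$ alone, with no condition on the coefficients), and that the $x^2$-part of the first row cancels as well, so $D_{n,m}$ has a single affine entry $D_{11}(x)$ and two constant entries $D_{12},D_{13}$. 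Thus $D_{n,m}\Psi_{n,m}\equiv 0$ reduces to the scalar identity $D_{11}(x)P_{n,m}(x)+D_{12}P_{n-1,m}(x)+D_{13}P_{n,m-1}(x)\equiv 0$. Since $P_{n,m}$ has degree $n+m$, strictly larger than the common degree $n+m-1$ of $P_{n-1,m}$ and $P_{n,m-1}$, matching the top two powers of $x$ forces $D_{11}\equiv 0$; and since $P_{n-1,m}$ and $P_{n,m-1}$ are distinct monic polynomials of the same degree — equality would make a polynomial of degree $n+m-1$ satisfy all $n+m$ orthogonality conditions that define $P_{n,m}$, contradicting normality — they are linearly independent, which forces $D_{12}=D_{13}=0$. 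Hence $D_{n,m}\equiv 0$, which is \eqref{MOPDIS}.

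The main obstacle is really this last promotion: a single nonzero vector identity does not by itself annihilate a polynomial matrix, so one must genuinely exploit the degree gap inside $\Psi_{n,m}$ together with normality. Equivalently (and this is the more computational route one could take instead), one expands the two triple products and checks that the three surviving scalar equations $D_{11}\equiv 0$ and $D_{12}=D_{13}=0$ reproduce exactly the stationary relations \eqref{diffcoeff} among $\{a_{n,m,i},b_{n,m,i}\}$ (i.e., the equations of \cite[Theorem~3.2]{vanA2011}). This last point is also what makes the zero curvature condition ``non-trivial'': the entries that survive are precisely those stationary constraints, not the identity $0=0$.
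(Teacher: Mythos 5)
Your proposal is correct and follows essentially the same route as the paper: it derives the transfer relations $\Psi_{n+1,m}=L_{n,m}\Psi_{n,m}$ and $\Psi_{n,m+1}=M_{n,m}\Psi_{n,m}$ from the nearest-neighbor recurrences \eqref{eq:2.6} and obtains \eqref{MOPDIS} as their compatibility, which is exactly what the paper does (deferring the detailed verification to the cited references and to the analogous three-dimensional theorem later in the same subsection). The extra details you supply -- the automatic vanishing of the second and third rows, and the promotion of the vector identity to the matrix identity via the degree gap and the linear independence of $P_{n-1,m}$ and $P_{n,m-1}$ under normality -- are sound and mirror the normality-based independence argument the paper carries out for its three-dimensional Lax triple, so no gap remains.
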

Indeed, it follows from \eqref{eq:2.6} that
\begin{gather} \label{eq:2.9}
 \Psi_{n+1,m} =L_{n,m} \Psi_{n,m}, \qquad \Psi_{n,m+1} = M_{n,m} \Psi_{n,m}.
\end{gather}
It is now clear that the consistency of \eqref{eq:2.9} gives \eqref{MOPDIS}, which is in fact a discrete integrable system \cite{ADvanA2015, ADvanA2014}. Namely, in \cite{ADvanA2014} and \cite{vanA2011} it is shown that the discrete zero curvature condi\-tion~\eqref{MOPDIS} is equivalent to the nonlinear system of dif\/ference equations~\eqref{diffcoeff} for the coef\/f\/icients of the recurrence relations~\eqref{eq:2.6}. Furthermore we have the following formulas for the recurrence coef\/f\/icients~\cite{vanA2011}
\begin{gather}
a_{n,m,1} = \frac{\tau_{n+1,m} \tau_{n-1,m}}{\big(\tau_{n,m}\big)^2}, \qquad
a_{n,m,2} = \frac{\tau_{n,m+1} \tau_{n,m-1}}{\big(\tau_{n,m}\big)^2}, \nonumber\\
b_{n,m,2}-b_{n,m,1} = \frac{\tau_{n,m} \tau_{n+1,m+1}}{\tau_{n+1,m} \tau_{n,m+1}}.\label{CoeffDet}
\end{gather}
Nevertheless, they do not determine the coef\/f\/icients of the recurrence relations~\eqref{eq:2.6} from the moments of the given measures. Still, this obstacle can easily be overcome.
\begin{Proposition} We have that
\begin{gather}
b_{n,m+1,1} = b_{n,0,1} +\sum_{i=1}^m \frac{(a_{n+1,i,1}+a_{n+1,i,2}) -
(a_{n,i+1,1}+a_{n,i+1,2})}{(b_{n,i,1}-b_{n,i,2})}, \nonumber\\
b_{n+1,m,2} = b_{n,0,2} +\sum_{i=1}^n \frac{(a_{i+1,m,1}+a_{i+1,m,2}) -
(a_{i,m+1,1}+a_{i,m+1,2})}{(b_{i,m,1}-b_{i,m,2})},\label{CoeffDetcd}
\end{gather}
where the right hand sides can be obtained from the moments by~\eqref{CoeffDet} and~\eqref{OPRecFromDet}.
\end{Proposition}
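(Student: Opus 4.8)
The plan is to read the first family of consistency equations \eqref{diffcoeff} (equivalently, the zero-curvature relation \eqref{MOPDIS} established in the preceding Proposition) as a discrete gradient identity for the recurrence coefficient $b_{n,m,1}$ in the direction of the second index, and then to telescope it down to the edge of the lattice, where the anchor is an ordinary three-term recurrence coefficient. The analogous argument in the direction of the first index then handles $b_{n,m,2}$.

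Concretely, I would fix $n$ and specialize the first equation of \eqref{diffcoeff} to $r=2$, indices $(i,j)=(1,2)$ and multi-index $(n,i)$, which reads
\begin{gather*}
(a_{n,i+1,1}+a_{n,i+1,2})-(a_{n+1,i,1}+a_{n+1,i,2})=(b_{n,i,2}-b_{n,i,1})(b_{n,i+1,1}-b_{n,i,1}).
\end{gather*}
By the last identity in \eqref{CoeffDet} one has $b_{n,i,1}-b_{n,i,2}=-\tau_{n,i}\tau_{n+1,i+1}/(\tau_{n+1,i}\tau_{n,i+1})$, which is nonzero since all multi-indices are assumed normal; dividing through gives
\begin{gather*}
b_{n,i+1,1}-b_{n,i,1}=\frac{(a_{n+1,i,1}+a_{n+1,i,2})-(a_{n,i+1,1}+a_{n,i+1,2})}{b_{n,i,1}-b_{n,i,2}}.
\end{gather*}
Summing over $i$ telescopes the left-hand side and expresses $b_{n,m+1,1}$ through the boundary value $b_{n,0,1}$ plus the displayed sum, which is the first formula in \eqref{CoeffDetcd}. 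The second formula comes out the same way, now using $(i,j)=(2,1)$ with multi-index $(i,m)$, so that the telescoping runs over the first index and terminates at the corresponding marginal value.

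It then remains to observe that the right-hand sides depend only on the moments. Each $a_{n,m,1}$, $a_{n,m,2}$ and each difference $b_{n,m,1}-b_{n,m,2}$ occurring in the sums is an explicit ratio of the Hankel-type determinants $\tau_{n,m}$ by \eqref{CoeffDet}; and the boundary term is nothing but the degree-$n$ coefficient of the ordinary three-term recurrence relation satisfied by the monic polynomials $P_{n,0}$ orthogonal with respect to $\mu_1$ (respectively $P_{0,m}$ orthogonal with respect to $\mu_2$), and is therefore recovered from the moments of $\mu_1$ (respectively $\mu_2$) through the Hankel-determinant formulas \eqref{OPRecFromDet}. Combining these observations closes the argument.

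The two things to watch are the non-vanishing of the denominators in the telescoped sum — which is exactly what normality secures via \eqref{CoeffDet} — and the validity of \eqref{diffcoeff} along the edge of the lattice, where one component of the multi-index vanishes and the corresponding recurrence coefficient is set to $0$; keeping the conventions $a_{0,m,1}=a_{n,0,2}=0$ in mind renders both points routine. I expect this boundary bookkeeping to be the only mildly delicate step, the core of the proof being a single telescoping sum built directly on \eqref{diffcoeff}.
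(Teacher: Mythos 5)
Your proposal is correct and coincides with the paper's own (very terse) proof: the relations \eqref{CoeffDetcd} are obtained there precisely by summing, over consecutive indices, the relations coming from the zero-curvature condition \eqref{MOPDIS} (equivalently \eqref{diffcoeff}), which is exactly your telescoping argument, including the use of normality via \eqref{CoeffDet} to keep the denominators $b_{n,i,1}-b_{n,i,2}$ nonzero and the identification of the boundary terms with ordinary three-term recurrence coefficients computable through \eqref{OPRecFromDet}. The only point worth flagging is the index bookkeeping at the edge: to anchor the telescoping at $b_{n,0,1}$ (rather than $b_{n,1,1}$) one must include the $i=0$ relation, with the convention $a_{n,0,2}=0$, which is the boundary care you already anticipate.
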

\begin{proof}
The relations \eqref{CoeffDetcd} are obtained from the discrete zero curvature condition~\eqref{MOPDIS} (see also~\eqref{diffcoeff}) by summation of the corresponding relations for consecutive indices.
\end{proof}

Now we re-derive the dm-Toda equations \eqref{gtoda} that we already obtained in Subsection~\ref{subsec:3.4}. However, in this case we follow the consistency approach from~\cite{BS2002} and~\cite{SNderK2011}. In particular, we get the Lax pair here by using a method that is the adaptation of the one from~\cite{SNderK2011} (see Proposi\-tion~\ref{LaxPairToda}).
Since we only consider the case of two measures, we need to consider the family of two measures $x^t d\mu_1(x)$ and $x^t d\mu_2(x)$, where $t\in{\mathbb Z}_+$ is the discrete time. In other words, we have two sequences of moments $\big\{s_j^{(1)}\big\}_{j=0}^{\infty}$
and $\big\{s_j^{(2)}\big\}_{j=0}^{\infty}$ and we consider their truncations
\begin{gather*}
\big\{s_{j+t}^{(1)}\big\}_{j=0}^{\infty},\qquad \big\{s_{j+t}^{(2)}\big\}_{j=0}^{\infty},\qquad t\in{\mathbb Z}_+,
\end{gather*}
which are actually given by the measures $x^t d\mu_1(x)$ and $x^t d\mu_2(x)$ Clearly, these sequences of moments generate a family of multiple orthogonal polynomials, which, as we have already seen, have the following form
\begin{gather*}
P_{n,m}^{t}(x) = \frac{1}{\tau_{n,m}^{t}}
 \left|\begin{matrix}
 \mu_{t,1} & \cdots & \mu_{t+n-1,1}\\
 \mu_{t+1,1} & \cdots & \mu_{t+n,1} \\
	 \vdots & \cdots & \vdots \\
 \mu_{t+n+m,1} & \cdots & \mu_{t+2n+m-1,1} \end{matrix}	\	
 \begin{matrix}
 \mu_{t,2} & \cdots & \mu_{t+m-1,2} \\
 \mu_{t+1,2} & \cdots & \mu_{t+m,2} \\
	 \vdots & \cdots & \vdots \\
 \mu_{t+n+m,2} & \cdots & \mu_{t+n+2m-1,2} 		
 \end{matrix} \
 \begin{matrix} 1 \\ x \\ \vdots \\ x^{n+m} \end{matrix} \right|,
\end{gather*}
with
\begin{gather*} 
 \tau_{n,m}^{t} = \left| \begin{matrix}
 \mu_{t,1} & \cdots & \mu_{t+n-1,1}\\
 \mu_{t+1,1} & \cdots & \mu_{t+n,1} \\
	 \vdots & \cdots & \vdots \\
 \mu_{t+n+m-1,1} & \cdots & \mu_{t+2n+m-2,1} \end{matrix}	\	
 \begin{matrix}
 \mu_{t,2} & \cdots & \mu_{t+m-1,2} \\
 \mu_{t+1,2} & \cdots & \mu_{t+m,2} \\
	 \vdots & \cdots & \vdots \\
 \mu_{t+n+m-1,2} & \cdots & \mu_{t+n+2m-2,2} 		
 \end{matrix} \right|.
\end{gather*}
As a matter of fact, we obtained an analogue of the Christof\/fel transformation in the case of multiple orthogonal polynomials in
Proposition~\ref{CTforMPh}. Nevertheless, let's do it again but this time we apply the following two dif\/ferent forms of the Sylvester identity
\begin{gather*}
|A||A_{1,n+m+1;n+m,n+m+1}|=|A_{1;n+m}||A_{n+m+1;n+m+1}|-|A_{n+m+1;n+m}||A_{1;n+m+1}|,\\
|A||A_{1,n+m+1;n,n+m+1}|=|A_{1;n}||A_{n+m+1;n+m+1}|-|A_{n+m+1;n}||A_{1;n+m+1}|,
\end{gather*}
to the determinant $\tau_{n,m}^{t}P_{n,m}^{t}(x)$. Evidently, this leads to the relations
\begin{gather}
P_{n,m}^{t}(x)=xP_{n,m-1}^{t+1}(x)-A_{n,m-1,2}^{t}P_{n,m-1}^{t}(x),\nonumber\\
P_{n,m}^{t}(x)=xP_{n-1,m}^{t+1}(x)-A_{n-1,m,1}^{t}P_{n-1,m}^{t}(x),\label{GPT}
\end{gather}
where the coef\/f\/icients are def\/ined by the formulas
\begin{gather}\label{AcoefForLT}
A_{n,m-1,2}^{t}=\frac{\tau_{n,m-1}^{t}\tau_{n,m}^{t+1}}{\tau_{n,m}^{t}\tau_{n,m-1}^{t+1}},\qquad
A_{n-1,m,1}^{t}=\frac{\tau_{n-1,m}^{t}\tau_{n,m}^{t+1}}{\tau_{n,m}^{t}\tau_{n-1,m}^{t+1}}.
\end{gather}
Now, based on the relations \eqref{eq:2.6} and \eqref{GPT}, we can extend Proposition~\ref{LaxPairToda} to the context of multiple orthogonal polynomials. Thus, we are in the position to complete the associated discrete integrable system~\eqref{MOPDIS} on ${\mathbb Z}_+^2$ to a discrete integrable system on ${\mathbb Z}_+^3$. To this end, we f\/irst obtain the following relations
\begin{gather}
 P_{n-1,m}^{t+1}(x)=\frac{1}{x}P_{n,m}^{t}(x) +\frac{A_{n-1,m,1}^{t}}{x}P_{n-1,m}^{t}(x),\nonumber\\
 P_{n,m-1}^{t+1}(x)=\frac{1}{x}P_{n,m}^{t}(x) +\frac{A_{n,m-1,2}^{t}}{x}P_{n,m-1}^{t}(x),\nonumber\\
 P_{n,m}^{t+1}(x)=\left(1-\frac{b_{n,m,2}^{t}-A_{n,m,2}^{t}}{x}\right)P_{n,m}^{t} -\frac{a_{n,m,1}^{t}}{x}P_{n-1,m}^{t}-\frac{a_{n,m,2}^{t}}{x}P_{n,m-1}^{t},\label{CT1}
\end{gather}
by manipulations with \eqref{GPT} and \eqref{eq:2.6}.

Now we see that we have a lot of options to travel over~${\mathbb Z}_+^3$ using the above-given relations. It is obvious that we don't have to use all of them to do that. However, applying dif\/ferent formulas when moving along the same path leads to consistency relations and the following statement contains all of them.

\begin{Theorem}
Let us consider the vector-valued wave function
\begin{gather*}
\Psi_{n,m,t}(x)=\big(P_{n,m}^{t}(x),P_{n-1,m}^{t}(x),P_{n,m-1}^{t}(x)\big)^{\top}.
\end{gather*}
Then three families of matrices given by the formulas
\begin{gather*} 
 L_{n,m,t} = \begin{pmatrix}
 x-b^{t}_{n,m,1} & -{a^{t}_{n,m,1}} & -{a^{t}_{n,m,2}} \\
 1 & 0 & 0 \\
 1 & 0 & {b^{t}_{n,m-1,2}}-b^{t}_{n,m-1,1}
 \end{pmatrix},
\\
 M_{n,m,t} = \begin{pmatrix}
 x-{b^{t}_{n,m,2}} & -{a^{t}_{n,m,1}} & -{a^{t}_{n,m,2}} \\
 1 & b^{t}_{n-1,m,1}-{b^{t}_{n-1,m,2}} & 0 \\
 1 & 0 & 0
 \end{pmatrix},
\end{gather*}
and
\begin{gather*} 
 N_{n,m,t} = \begin{pmatrix}
 1-\frac{{b^{t}_{n,m,2}}-A_{n,m,2}^{t}}{x} & -\frac{{a^{t}_{n,m,1}}}{x} & -\frac{{a^{t}_{n,m,2}}}{x} \\
 \frac{1}{x} & \frac{A_{n-1,m,1}^{t}}{x} & 0 \\
 \frac{1}{x} & 0 & \frac{A_{n,m-1,2}^{t}}{x}
 \end{pmatrix}
\end{gather*}
are the transition matrices for $\Psi_{n,m,k}$ and they satisfy the following relations
\begin{gather}
0 = L_{n,m+1,t} M_{n,m,t} - M_{n+1,m,t} L_{n,m,t},\nonumber\\
0= M_{n,m,t+1} N_{n,m,t} - N_{n,m+1,t} M_{n,m,t}, \nonumber\\
0= L_{n,m,t+1} N_{n,m,t} - N_{n+1,m,t} L_{n,m,t},\label{MOPTodaDis}
\end{gather}
which give the discrete zero curvature condition.
\end{Theorem}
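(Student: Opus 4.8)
The plan is to check two things: that $L_{n,m,t}$, $M_{n,m,t}$, $N_{n,m,t}$ really are transition matrices for $\Psi_{n,m,t}$, namely
\begin{gather*}
\Psi_{n+1,m,t}=L_{n,m,t}\Psi_{n,m,t},\qquad \Psi_{n,m+1,t}=M_{n,m,t}\Psi_{n,m,t},\qquad \Psi_{n,m,t+1}=N_{n,m,t}\Psi_{n,m,t},
\end{gather*}
and that the three identities in~\eqref{MOPTodaDis} are then forced by the pairwise compatibility of these relations. The first two transition relations are just the nearest-neighbor recurrence~\eqref{eq:2.6} in matrix form; they coincide with~\eqref{eq:2.9}, and the matrices $L_{n,m,t}$, $M_{n,m,t}$ are those of the Proposition preceding~\eqref{MOPDIS}. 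The third relation is the matrix transcription of~\eqref{CT1}: the three rows of $N_{n,m,t}\Psi_{n,m,t}$ reproduce, in order, the last, the first and the second equation of~\eqref{CT1}, and recall that~\eqref{CT1} was derived by combining the Christoffel-type relations~\eqref{GPT} (a consequence of the Sylvester identity) with the recurrence~\eqref{eq:2.6}.

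With the three transition relations available, each identity in~\eqref{MOPTodaDis} merely says that the two edge-paths around a unit square in~${\mathbb Z}_+^3$ produce the same vector from $\Psi_{n,m,t}$. Computing $\Psi_{n+1,m+1,t}$ as $L_{n,m+1,t}M_{n,m,t}\Psi_{n,m,t}$ and as $M_{n+1,m,t}L_{n,m,t}\Psi_{n,m,t}$ yields the first identity, which is exactly~\eqref{MOPDIS} and is already established; computing $\Psi_{n,m+1,t+1}$ in the orders $M_{n,m,t+1}N_{n,m,t}$ versus $N_{n,m+1,t}M_{n,m,t}$, and $\Psi_{n+1,m,t+1}$ in the orders $L_{n,m,t+1}N_{n,m,t}$ versus $N_{n+1,m,t}L_{n,m,t}$, yields the second and the third. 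In each case one is left with $\big(A(x)-B(x)\big)\Psi_{n,m,t}(x)=0$, valid identically in $x$, where $A$ and $B$ have entries rational in $x$ with at worst a simple pole at the origin inherited from $N$.

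The only step that genuinely requires care — and the one I expect to be the main obstacle — is the passage from $\big(A-B\big)\Psi_{n,m,t}\equiv 0$ to $A=B$. After clearing the factor $x$ this amounts to showing that a $3\times 3$ matrix of polynomials of bounded degree that annihilates $\big(P_{n,m}^{t},P_{n-1,m}^{t},P_{n,m-1}^{t}\big)^{\top}$ for all $x$ must vanish; since $P_{n,m}^{t}$ has degree $n+m$ while $P_{n-1,m}^{t}$ and $P_{n,m-1}^{t}$ both have degree $n+m-1$, a leading-coefficient and degree count carried out for all multi-indices simultaneously forces the matrix to be zero — exactly the mechanism used in~\cite{ADvanA2014,vanA2011} to prove that~\eqref{MOPDIS} is equivalent to~\eqref{diffcoeff}. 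The remaining work is routine: once the matrix identities are known to hold, comparing entries turns the first one into the stationary equations~\eqref{diffcoeff} and the other two into the discrete-time Toda equations~\eqref{d2gtoda} (equivalently, substituting the determinant expressions~\eqref{CoeffDet},~\eqref{AcoefForLT} and~\eqref{dep-var} into the products $M_{n,m,t+1}N_{n,m,t}$, $N_{n,m+1,t}M_{n,m,t}$, $L_{n,m,t+1}N_{n,m,t}$, $N_{n+1,m,t}L_{n,m,t}$ and invoking the Hirota--Miwa relation~\eqref{hirota-miwa} makes the three identities explicit).
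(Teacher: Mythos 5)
Your route is the paper's: the first two transition relations are the nearest-neighbor recurrences \eqref{eq:2.6} in matrix form, the third is the matrix transcription of \eqref{CT1} (your identification of the rows of $N_{n,m,t}\Psi_{n,m,t}$ with the last, first and second equations of \eqref{CT1} is correct), and the three identities of \eqref{MOPTodaDis} are obtained by computing the wave function at the opposite corner of each elementary square in ${\mathbb Z}_+^3$ in two ways, which yields $\bigl(A(x)-B(x)\bigr)\Psi_{n,m,t}(x)\equiv 0$ for each pair of products.

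The gap sits exactly in the step you flag as the main obstacle. A leading-coefficient and degree count cannot force the annihilating matrix to vanish: the second and third entries of $\Psi_{n,m,t}$, namely $P^{t}_{n-1,m}$ and $P^{t}_{n,m-1}$, are both monic of the same degree $n+m-1$, so from a relation $\alpha_1P^{t}_{n,m}+\alpha_2P^{t}_{n-1,m}+\alpha_3P^{t}_{n,m-1}=0$ degrees only give $\alpha_1=0$ and $\alpha_2+\alpha_3=0$, and nothing in a degree argument excludes $P^{t}_{n-1,m}=P^{t}_{n,m-1}$. What is needed, and what the paper actually uses, is multiple orthogonality together with normality: multiplying the relation by $x^{n-1}$ and integrating against $x^{t}d\mu_1$ kills the $P^{t}_{n,m-1}$-term by its orthogonality, while $\int x^{n-1}P^{t}_{n-1,m}(x)\,x^{t}d\mu_1(x)$ is a nonzero multiple of $\tau^{t}_{n,m}$, hence $\alpha_2=0$; the symmetric argument with $d\mu_2$ gives $\alpha_3=0$. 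This linear independence of the components of the wave function at a normal index is the paper's mechanism; it is not what the equivalence of \eqref{MOPDIS} and \eqref{diffcoeff} in \cite{ADvanA2014,vanA2011} provides, so the appeal to those references does not close the step. (A smaller point: after clearing the simple pole at $x=0$, each row of the difference of products has the form $\sum_i (c_{0,i}x+c_{-1,i})P_i(x)\equiv 0$, so both the $x$-part and the constant part must be disposed of; this again calls for orthogonality-based independence rather than degrees alone.) Your closing remark, that comparing entries turns the identities into \eqref{diffcoeff} and the discrete-time Toda equations, matches what the paper does after the theorem and is not needed for the statement itself.
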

\begin{proof}
To begin with, let us notice that the relations \eqref{eq:2.6}, and \eqref{CT1} can be used to get the following vector equalities
\begin{gather} \label{eq:2.9_3d}
 \Psi_{n+1,m,t} =L_{n,m,t} \Psi_{n,m,t}, \!\qquad \Psi_{n,m+1,t} = M_{n,m,t} \Psi_{n,m,t},\!\qquad \Psi_{n,m,t+1} =N_{n,m,t} \Psi_{n,m,t}.\!\!\!
\end{gather}
The latter system means that the matrices $L_{n,m,t}$, $M_{n,m,t}$, and $N_{n,m,t}$ are transition matrices for the wave function~$\Psi_{n,m,t}$.
Next, one can easily see that the consistency of~\eqref{eq:2.9_3d} leads to the following relations
\begin{gather}
L_{n,m+1,t} M_{n,m,t}\Psi_{n,m,t}=M_{n+1,m,t} L_{n,m,t}\Psi_{n,m,t},\nonumber\\
M_{n,m,t+1} N_{n,m,t}\Psi_{n,m,t}=N_{n,m+1,t} M_{n,m,t}\Psi_{n,m,t}, \nonumber\\
L_{n,m,t+1} N_{n,m,t}\Psi_{n,m,t}= N_{n+1,m,t} L_{n,m,t}\Psi_{n,m,t}.\label{help3Dproof}
\end{gather}
Now to get \eqref{MOPTodaDis} it remains to observe that the polynomials in the vector $\Psi_{n,m,t}$ are linearly independent whenever $ (n,m)$ is a normal index at the moment $t$. Indeed, if there are numbers $\alpha_1$, $\alpha_2$, and $\alpha_3$ such that
\begin{gather*}
\alpha_1 P_{n,m}^{t} + \alpha_2 P_{n-1,m}^{t} +\alpha_3 P_{n,m-1}^{t}= 0,
\end{gather*}
then it follows by comparing the leading coef\/f\/icients that $\alpha_1=0$. Next, if one multiplies that relation by $x^{n-1}$ and integrates the resulting relation with respect to $x^k d\mu_1(x)$ then one gets
\begin{gather*}
\alpha_2 \int x^{n-1} P_{n-1,m}^{t}(x) x^t d\mu_1(x) = 0
\end{gather*}
since $P_{n,m-1}^{t}$ is orthogonal to $x^{n-1}$ with respect to $x^k d\mu_1(x)$ by def\/inition. Hence, $\alpha_2=0$ because of the normality of the index $(n,m,t)$. In other words, normality means that the determi\-nant~$\tau_{n,m}^{t}$ is nonvanishing. On the other hand, it is not so hard to see that
\begin{gather*}
\int x^{n-1} P_{n-1,m}^{t}(x) x^t d\mu_1(x) = \epsilon \tau_{n,m}^{t},
\end{gather*}
where $\epsilon=\pm 1$. Analogously, one can get that $\alpha_3=0$.
The relations \eqref{help3Dproof} reduce to \eqref{MOPTodaDis}, which is a discrete integrable system on ${\mathbb Z}_+^3$.
\end{proof}

\begin{Remark}
The statement of the above theorem shows that the system of dif\/ference equations obtained from \eqref{MOPTodaDis} is an integrable system in the sense of \cite{BS2002}. However, one might still wonder about the relation between~\eqref{MOPTodaDis} and~\eqref{Int_DToda}. Basically, \eqref{MOPTodaDis} is a~representation of~\eqref{Int_DToda} by means of a certain Lax pair.
\end{Remark}
Indeed, this is the case and we are going to show how one can get representative equations of~\eqref{Int_DToda} from the Lax pair representation~\eqref{MOPTodaDis}. To this end, let us consider entry~$(1,3)$ of the relation
\begin{gather*}
0=M_{n,m,t+1} N_{n,m,t} - N_{n,m+1,t} M_{n,m,t}.
\end{gather*}
More precisely, the entry in question gives
\begin{gather*}
0=-\big(x-b^{t+1}_{n,m,2}\big)\frac{{a^{t}_{n,m,2}}}{x}-{a_{n,m,2}}^{t+1} \frac{A_{n,m-1,2}^{t}}{x} + \left(1-\frac{b^{t}_{n,m+1,2}-A_{n,m+1,2}^{t}}{x}\right){a^{t}_{n,m,2}},
\end{gather*}
which is equivalent to
\begin{gather*}
b^{t+1}_{n,m,2}-b^{t}_{n,m+1,2}+A_{n,m+1,2}^{t}=\frac{a^{t+1}_{n,m,2}}{a^{t}_{n,m,2}}A_{n,m-1,2}^{t}.
\end{gather*}
Now, taking into account the f\/irst relation in \eqref{AcoefForLT} and the second one in \eqref{CoeffDet}, one can see that
\begin{gather*}
\frac{a^{t+1}_{n,m,2}}{a^{t}_{n,m,2}}A_{n,m-1,2}^{t}=A_{n,m,2}^{t},
\end{gather*}
and, therefore, we arrive at
\begin{gather}\label{Entry31}
b^{t+1}_{n,m,2}-b^{t}_{n,m+1,2}=A_{n,m,2}^{t}-A_{n,m+1,2}^{t}.
\end{gather}

Next, we know from \eqref{tcoeff} that
\begin{gather*}
b^{t}_{n,m,2}=A_{n,m,2}^t+\sum_{k=1}^2B_{n,m,k}^t.
\end{gather*}
As a consequence, it follows from \eqref{Entry31} that
\begin{gather*}
A_{n,m,2}^{t+1}+\sum_{k=1}^2B_{n,m,k}^{t+1}=A_{n,m,2}^t+\sum_{k=1}^2B_{n,m+1,k}^t,
\end{gather*}
which is one of the equations from~\eqref{Int_DToda}. To get another equation in~\eqref{Int_DToda}, let us take a look at entry $(1,2)$ of the relation
\begin{gather*}
0=M_{n,m,t+1} N_{n,m,t} - N_{n,m+1,t} M_{n,m,t}.
\end{gather*}
The entry in question gives
\begin{gather*}
0=-\big(x-b^{t+1}_{n,m,2}\big)\frac{a^{t}_{n,m,1}}{x}-a^{t+1}_{n,m,1} \frac{A_{n-1,m,1}^{t}}{x}+\left(1-\frac{b^{t}_{n,m+1,2}-A_{n,m+1,2}^{t}}{x}\right)
a^{t}_{n,m,1}\\
\hphantom{0=}{}
- \frac{a^{t}_{n,m+1,1}}{x}\big(b^{t}_{n-1,m,1}-b^{t}_{n-1,m,2}\big),
\end{gather*}
which is equivalent to
\begin{gather*}
b^{t+1}_{n,m,2}-b^{t}_{n,m+1,2}+A_{n,m+1,2}^{t}=\frac{a^{t+1}_{n,m,1}}{a^{t}_{n,m,1}}A_{n-1,m,1}^{t}-
\frac{a^{t}_{n,m+1,1}}{a^{t}_{n,m,1}}\big(b^{t}_{n-1,m,1}-b^{t}_{n-1,m,2}\big).
\end{gather*}
Now, the second relation in \eqref{diffcoeff} gives
\begin{gather*}
\frac{a^{t}_{n,m+1,1}}{a^{t}_{n,m,1}}\big(b^{t}_{n-1,m,1}-b^{t}_{n-1,m,2}\big)=b^{t}_{n,m,1}-b^{t}_{n,m,2}.
\end{gather*}
Hence, we arrive at
\begin{gather}\label{Entry21}
b^{t+1}_{n,m,2}-b^{t}_{n,m+1,2}+A_{n,m+1,2}^{t}=\frac{a^{t+1}_{n,m,1}}{a^{t}_{n,m,1}}A_{n-1,m,1}^{t}+
b^{t}_{n,m,1}-b^{t}_{n,m,2}.
\end{gather}
Since by \eqref{tcoeff} we have
\begin{gather*}
b^{t}_{n,m,1}=A_{n,m,1}^t+\sum_{k=1}^2B_{n,m,k}^t, \qquad
b^{t}_{n,m,2}=A_{n,m,2}^t+\sum_{k=1}^2B_{n,m,k}^t,
\end{gather*}
due to \eqref{Entry31} the relation \eqref{Entry21} reduces to
\begin{gather*}
A_{n,m,2}^{t}-A_{n,m+1,2}^{t}+A_{n,m+1,2}^{t}=\frac{a^{t+1}_{n,m,1}}{a^{t}_{n,m,1}}A_{n-1,m,1}^{t}-A_{n,m,1}^{t}+A_{n,m,2}^{t}
\end{gather*}
or, equivalently,
\begin{gather}\label{TheSecond}
{a^{t+1}_{n,m,1}}=\frac{A_{n,m,1}^{t}}{A_{n-1,m,1}^{t}}a^{t}_{n,m,1}.
\end{gather}
Next, according to \eqref{tcoeff} we have
\begin{gather*}
a^{t}_{n,m,1}=A_{n-1,m,1}^{t}B_{n,m,1}^{t}.
\end{gather*}
Hence \eqref{TheSecond} reduces to
\begin{gather*}
A_{n-1,m,1}^{t+1}B_{n,m,1}^{t+1}=A_{n,m,1}^{t}B_{n,m,1}^{t},
\end{gather*}
which is clearly another one from \eqref{Int_DToda}. In other words, we have reached the following conclusion.
\begin{Proposition} The second and third relations in \eqref{MOPTodaDis} are one of the Lax representations of \eqref{Int_DToda}.
\end{Proposition}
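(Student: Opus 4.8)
The plan is to assemble, into one statement, the chain of scalar identities run through in the discussion preceding the proposition; no new idea is needed beyond organizing that computation. One extracts the relevant matrix entries from the last two zero curvature conditions in~\eqref{MOPTodaDis}, linearizes them using the determinant formulas~\eqref{AcoefForLT} and~\eqref{CoeffDet}, the stationary equations~\eqref{diffcoeff}, and the conversion rules~\eqref{tcoeff} between the pairs $\{a,b\}$ and $\{A,B\}$, and observes that what drops out is exactly~\eqref{Int_DToda} for $r=2$.

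Concretely, I would first read entry $(1,3)$ of $0=M_{n,m,t+1}N_{n,m,t}-N_{n,m+1,t}M_{n,m,t}$, whose $1/x$-coefficient yields a single scalar equation relating $b^{t+1}_{n,m,2}$, $b^{t}_{n,m+1,2}$, $A^{t}_{n,m+1,2}$ and the product $(a^{t+1}_{n,m,2}/a^{t}_{n,m,2})A^{t}_{n,m-1,2}$. The formula~\eqref{AcoefForLT} for $A^{t}_{n,m-1,2}$ together with~\eqref{CoeffDet} for $a^{t}_{n,m,2}$ makes that product collapse to $A^{t}_{n,m,2}$, leaving~\eqref{Entry31}; substituting $b^{t}_{n,m,2}=A^{t}_{n,m,2}+\sum_k B^{t}_{n,m,k}$ from~\eqref{tcoeff} then produces the first line of~\eqref{Int_DToda} with $j=2$. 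Next I would treat entry $(1,2)$ of the same matrix identity, which additionally involves $a^{t}_{n,m,1}$, $a^{t}_{n,m+1,1}$, $A^{t}_{n-1,m,1}$ and the difference $b^{t}_{n-1,m,1}-b^{t}_{n-1,m,2}$. The decisive input here is the second equation of~\eqref{diffcoeff}, which rewrites $(a^{t}_{n,m+1,1}/a^{t}_{n,m,1})(b^{t}_{n-1,m,1}-b^{t}_{n-1,m,2})$ as $b^{t}_{n,m,1}-b^{t}_{n,m,2}$; combining this with~\eqref{Entry31} and~\eqref{tcoeff} reduces the relation to $a^{t+1}_{n,m,1}=(A^{t}_{n,m,1}/A^{t}_{n-1,m,1})a^{t}_{n,m,1}$, and one more use of $a^{t}_{n,m,1}=A^{t}_{n-1,m,1}B^{t}_{n,m,1}$ turns this into the second line of~\eqref{Int_DToda} with $j=1$.

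The two remaining equations of~\eqref{Int_DToda} --- the first line with $j=1$ and the second line with $j=2$ --- then come out of the third zero curvature condition $0=L_{n,m,t+1}N_{n,m,t}-N_{n+1,m,t}L_{n,m,t}$ by the identical computation performed under the interchange $n\leftrightarrow m$ (equivalently, swapping the two measures), using the other halves of~\eqref{AcoefForLT},~\eqref{CoeffDet},~\eqref{diffcoeff} and~\eqref{tcoeff}. Running these manipulations backwards shows conversely that~\eqref{Int_DToda}, together with the stationary relations, forces the off-diagonal entries of the two matrix products to vanish, while the remaining entries reduce either to~\eqref{diffcoeff} or to the telescoping identities already used; hence the two systems of equations are genuinely equivalent, which is what the proposition asserts.

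The only real difficulty is clerical: one must track the $n$, $m$, $t$ shifts carefully through the $3\times 3$ matrix products and apply each of~\eqref{AcoefForLT},~\eqref{CoeffDet},~\eqref{diffcoeff},~\eqref{tcoeff} at precisely the step where it linearizes the relation. In particular~\eqref{diffcoeff} is indispensable, and it is exactly this use of the stationary equations that makes the $\mathbb{Z}_+^3$ zero curvature condition~\eqref{MOPTodaDis} carry no more information than~\eqref{Int_DToda} on the perfect-system locus; beyond that, the argument is a direct, if lengthy, verification.
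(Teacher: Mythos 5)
Your proposal follows essentially the same route as the paper: extracting the $(1,3)$ and $(1,2)$ entries of $0=M_{n,m,t+1}N_{n,m,t}-N_{n,m+1,t}M_{n,m,t}$, collapsing $\frac{a^{t+1}_{n,m,2}}{a^{t}_{n,m,2}}A^{t}_{n,m-1,2}$ via \eqref{AcoefForLT} and \eqref{CoeffDet}, invoking the second equation of \eqref{diffcoeff}, and converting through \eqref{tcoeff} to recover the equations of \eqref{Int_DToda}, exactly as in the text preceding the proposition. Your explicit appeal to the $n\leftrightarrow m$ symmetry for the $L$--$N$ relation and the remark on reversing the manipulations merely spell out what the paper leaves implicit, so the argument is correct and in the paper's spirit.
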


\subsection{The explicit solution of the dm-Toda equation}~\label{subsec:3.5}
To illustrate our approach and provide the reader with an explicit example when the scheme can be applied, let us recall that multiple Laguerre polynomials of the second kind are given by the orthogonality relations
\begin{gather*}
\int_0^\infty x^k L_{n,m}^\alpha(x) x^{\alpha} e^{-c_j x} dx = 0 , \qquad k = 0, 1, \ldots, n_j-1,
\end{gather*}
for $j=1,2$, where $\alpha > -1$, $c_1$, $c_2 > 0$ and $c_1 \neq c_2$. Evidently, putting $t=\alpha$ and denoting $P_{n,m}^{t}(x)=L_{n,m}^t(x)$ we get the polynomials with the discrete-time dynamics
\begin{gather*}
d\mu_1(t,x)=x^te^{-c_1 x} dx, \qquad d\mu_1(t,x)=x^te^{-c_2 x} dx,
\end{gather*}
and the corresponding coef\/f\/icients
\begin{gather*}
a^{t}_{n,m,1}=\frac{n+m+t}{c_1^2}m,\qquad a^{t}_{n,m,2}=\frac{n+m+t}{c_2^2}n,\\
b^{t}_{n,m,1}=\frac{n+m+t}{c_1}+\frac{n}{c_1}+\frac{m}{c_2},\qquad b^{t}_{n,m,2}=\frac{n+m+t}{c_2}+\frac{n}{c_1}+\frac{m}{c_2}
\end{gather*}
are a solution of \eqref{MOPTodaDis}. The multiple Laguerre polynomials $L_{n_1,n_2}^t$ of the second kind can be obtained using the Rodrigues formula
\begin{gather*}
 (-1)^{n_1+n_2} \left( \prod_{j=1}^2 c_j^{n_j} \right) x^t L_{n_1,n_2}^t(x)
 = \prod_{j=1}^2 \left( e^{c_jx} \frac{d^{n_j}}{dx^{n_j}} e^{-c_jx} \right) x^{n_1+n_2+t} ,
\end{gather*}
where the dif\/ferential operators in the product can be taken in any order \cite{ABVA}.

\subsection*{Acknowledgements} A.I.~Aptekarev was supported by grant RScF-14-21-00025. M.~Derevyagin thanks the hospitality of Department of Mathematics of KU Leuven, where his part of the research was initiated while he was a postdoc there. M.~Derevyagin and W.~Van Assche gratefully acknowledge the support of FWO Flanders project G.0934.13, KU Leuven research grant OT/12/073 and the Belgian Interuniversity Attraction Poles
programme P07/18. H.~Miki was supported by JSPS KAKENHI Grant Number 15K17561. Also, M.~Derevyagin and H.~Miki are grateful to S.~Tsujimoto, L.~Vinet, A.~Zhedanov for valuable discussions and comments. Finally, all the authors thank the anonymous referees for their careful reading of the manuscript and for their remarks that improved the presentation of the paper.

\pdfbookmark[1]{References}{ref}
\LastPageEnding


\begin{thebibliography}{99}
\footnotesize\itemsep=0pt

\bibitem{Adler}
Adler M., Horozov E., van Moerbeke P., The {P}faf\/f lattice and skew-orthogonal
 polynomials, \href{http://dx.doi.org/10.1155/S107379289900029X}{\textit{Int. Math. Res. Not.}} \textbf{1999} (1999), 569--588,
 \href{http://arxiv.org/abs/solv-int/9903005}{solv-int/9903005}.

\bibitem{Adler2}
Adler M., van Moerbeke P., Generalized orthogonal polynomials, discrete {KP}
 and {R}iemann--{H}ilbert problems, \href{http://dx.doi.org/10.1007/s002200050738}{\textit{Comm. Math. Phys.}} \textbf{207}
 (1999), 589--620, \href{http://arxiv.org/abs/nlin.SI/0009002}{nlin.SI/0009002}.

\bibitem{Adler2001}
Adler V.E., Discrete equations on planar graphs, \href{http://dx.doi.org/10.1088/0305-4470/34/48/310}{\textit{J.~Phys.~A: Math.
 Gen.}} \textbf{34} (2001), 10453--10460.

\bibitem{AFM2011}
{\'A}lvarez-Fern{\'a}ndez C., Fidalgo~Prieto U., Ma{\~n}as M., Multiple
 orthogonal polynomials of mixed type: {G}auss--{B}orel factorization and the
 multi-component 2{D} {T}oda hierarchy, \href{http://dx.doi.org/10.1016/j.aim.2011.03.008}{\textit{Adv. Math.}} \textbf{227}
 (2011), 1451--1525, \href{http://arxiv.org/abs/1004.3916}{arXiv:1004.3916}.

\bibitem{21}
Angelesco A., Sur deux extensions des fractions continues alg\'ebriques,
 \textit{C.~R.~Acad. Sci. Paris} \textbf{168} (1919), 262--265.

\bibitem{6}
Aptekarev A.I., Multiple orthogonal polynomials, \href{http://dx.doi.org/10.1016/S0377-0427(98)00175-7}{\textit{J.~Comput. Appl.
 Math.}} \textbf{99} (1998), 423--447.

\bibitem{Aptekar}
Aptekarev A.I., Spectral problems of high-order recurrences, in Spectral theory
 and dif\/ferential equations, \textit{Amer. Math. Soc. Transl. Ser.~2}, Vol.~233, Amer. Math. Soc., Providence, RI, 2014, 43--61.

\bibitem{Apt2016}
Aptekarev A.I., The {M}haskar--{S}af\/f variational principle and location of the
 shocks of certain hyperbolic equations, in Modern Trends in Constructive
 Function Theory, \href{http://dx.doi.org/10.1090/conm/661/13281}{\textit{Contemporary Mathematics}}, Vol.~661, Amer. Math.
 Soc., Providence, RI, 2016, 167--186.

\bibitem{ABK}
Aptekarev A.I., Bleher P.M., Kuijlaars A.B.J., Large {$n$} limit of {G}aussian
 random matrices with external source.~{II}, \href{http://dx.doi.org/10.1007/s00220-005-1367-9}{\textit{Comm. Math. Phys.}}
 \textbf{259} (2005), 367--389, \href{http://arxiv.org/abs/math-ph/0408041}{math-ph/0408041}.

\bibitem{Aptekarev2}
Aptekarev A.I., Branquinho A., Pad\'e approximants and complex high order
 {T}oda lattices, \href{http://dx.doi.org/10.1016/S0377-0427(02)00866-X}{\textit{J.~Comput. Appl. Math.}} \textbf{155} (2003),
 231--237.

\bibitem{Aptekarev}
Aptekarev A.I., Branquinho A., Marcell{\'a}n F., Toda-type dif\/ferential
 equations for the recurrence coef\/f\/icients of orthogonal polynomials and
 {F}reud transformation, \href{http://dx.doi.org/10.1016/S0377-0427(96)00138-0}{\textit{J.~Comput. Appl. Math.}} \textbf{78} (1997),
 139--160.

\bibitem{ABVA}
Aptekarev A.I., Branquinho A., Van~Assche W., Multiple orthogonal polynomials
 for classical weights, \href{http://dx.doi.org/10.1090/S0002-9947-03-03330-0}{\textit{Trans. Amer. Math. Soc.}} \textbf{355} (2003),
 3887--3914.

\bibitem{ADvanA2015}
Aptekarev A.I., Derevyagin M., Van~Assche W., On 2{D} discrete {S}chr\"odinger
 operators associated with multiple orthogonal polynomials,
 \href{http://dx.doi.org/10.1088/1751-8113/48/6/065201}{\textit{J.~Phys.~A: Math. Theor.}} \textbf{48} (2015), 065201, 16~pages,
 \href{http://arxiv.org/abs/1410.1332}{arXiv:1410.1332}.

\bibitem{ADvanA2014}
Aptekarev A.I., Derevyagin M., Van~Assche W., Discrete integrable systems
 generated by {H}ermite--{P}ad\'e approximants, \href{http://dx.doi.org/10.1088/0951-7715/29/5/1487}{\textit{Nonlinearity}}
 \textbf{29} (2016), 1487--1506, \href{http://arxiv.org/abs/1409.4053}{arXiv:1409.4053}.

\bibitem{Rolania}
Barrios~Rolan{\'{\i}}a D., Branquinho A., Foulqui{\'e}~Moreno A., On the
 relation between the full {K}ostant--{T}oda lattice and multiple orthogonal
 polynomials, \href{http://dx.doi.org/10.1016/j.jmaa.2010.10.044}{\textit{J.~Math. Anal. Appl.}} \textbf{377} (2011), 228--238,
 \href{http://arxiv.org/abs/0911.2856}{arXiv:0911.2856}.

\bibitem{Cheikh-I}
Ben~Cheikh Y., Douak K., On the classical {$d$}-orthogonal polynomials def\/ined
 by certain generating functions.~{I}, \textit{Bull. Belg. Math. Soc. Simon
 Stevin} \textbf{7} (2000), 107--124.

\bibitem{Cheikh-II}
Ben~Cheikh Y., Douak K., On the classical {$d$}-orthogonal polynomials def\/ined
 by certain generating functions.~{II}, \textit{Bull. Belg. Math. Soc. Simon
 Stevin} \textbf{8} (2001), 591--605.

\bibitem{Bern}
Bernstein L., The {J}acobi--{P}erron algorithm~-- {I}ts theory and application,
 \href{http://dx.doi.org/10.1007/BFb0069405}{\textit{Lecture Notes in Math.}}, Vol.~207, Springer-Verlag, Berlin~-- New
 York, 1971.

\bibitem{Bleher}
Bleher P.M., Kuijlaars A.B.J., Random matrices with external source and
 multiple orthogonal polynomials, \href{http://dx.doi.org/10.1155/S1073792804132194}{\textit{Int. Math. Res. Not.}} \textbf{2004}
 (2004), 109--129, \href{http://arxiv.org/abs/math-ph/0307055}{math-ph/0307055}.

\bibitem{BS2002}
Bobenko A.I., Suris Yu.B., Integrable systems on quad-graphs, \href{http://dx.doi.org/10.1155/S1073792802110075}{\textit{Int. Math.
 Res. Not.}} \textbf{2002} (2002), 573--611, \href{http://arxiv.org/abs/nlin.SI/0110004}{nlin.SI/0110004}.

\bibitem{BM04}
Bueno M.I., Marcell{\'a}n F., Darboux transformation and perturbation of linear
 functionals, \href{http://dx.doi.org/10.1016/j.laa.2004.02.004}{\textit{Linear Algebra Appl.}} \textbf{384} (2004), 215--242.

\bibitem{RTL1}
Coussement J., Kuijlaars A.B.J., Van~Assche W., Direct and inverse spectral
 transform for the relativistic {T}oda lattice and the connection with
 {L}aurent orthogonal polynomials, \href{http://dx.doi.org/10.1088/0266-5611/18/3/325}{\textit{Inverse Problems}} \textbf{18}
 (2002), 923--942, \href{http://arxiv.org/abs/math.CA/0204155}{math.CA/0204155}.

\bibitem{RTL2}
Coussement J., Van~Assche W., An extension of the {T}oda lattice: a direct and
 inverse spectral transform connected with orthogonal rational functions,
 \href{http://dx.doi.org/10.1088/0266-5611/20/1/018}{\textit{Inverse Problems}} \textbf{20} (2004), 297--318.

\bibitem{RTL3}
Coussement J., Van~Assche W., A continuum limit of the relativistic {T}oda
 lattice: asymptotic theory of discrete {L}aurent orthogonal polynomials with
 varying recurrence coef\/f\/icients, \href{http://dx.doi.org/10.1088/0305-4470/38/15/008}{\textit{J.~Phys.~A: Math. Gen.}} \textbf{38}
 (2005), 3337--3366.

\bibitem{Douak}
Douak K., Maroni P., Une caract\'erisation des polyn\^omes {$d$}-orthogonaux
 ``classiques'', \href{http://dx.doi.org/10.1006/jath.1995.1074}{\textit{J.~Approx. Theory}} \textbf{82} (1995), 177--204.

\bibitem{FL}
Fidalgo~Prieto U., L{\'o}pez~Lagomasino G., Nikishin systems are perfect,
 \href{http://dx.doi.org/10.1007/s00365-011-9139-6}{\textit{Constr. Approx.}} \textbf{34} (2011), 297--356, \href{http://arxiv.org/abs/1001.0554}{arXiv:1001.0554}.

\bibitem{FHVanA}
Filipuk G., Haneczok M., Van~Assche W., Computing recurrence coef\/f\/icients of
 multiple orthogonal polynomials, \href{http://dx.doi.org/10.1007/s11075-015-9959-8}{\textit{Numer. Algorithms}} \textbf{70}
 (2015), 519--543, \href{http://arxiv.org/abs/1406.0364}{arXiv:1406.0364}.

\bibitem{Gragg}
Gragg W.B., The {P}ad\'e table and its relation to certain algorithms of
 numerical analysis, \href{http://dx.doi.org/10.1137/1014001}{\textit{SIAM Rev.}} \textbf{14} (1972), 1--16.

\bibitem{Grammaticos}
Grammaticos B., Kosmann-Schwarzbach Y., Tamizhmani T. (Editors), Discrete
 integrable systems, \href{http://dx.doi.org/10.1007/b94662}{\textit{Lecture Notes in Phys.}}, Vol.~644,
 Springer-Verlag, Berlin, 2004.

\bibitem{1}
Hermite C., Sur la fonction exponentielle, \textit{C.~R.~Acad. Sci. Paris}
 \textbf{77} (1873), 18--24; 74--79; 226--233.

\bibitem{Hirota}
Hirota R., Conserved quantities of ``random-time {T}oda equation'',
 \href{http://dx.doi.org/10.1143/JPSJ.66.283}{\textit{J.~Phys. Soc. Japan}} \textbf{66} (1997), 283--284.

\bibitem{MI}
Ismail M.E.H., Classical and quantum orthogonal polynomials in one variable,
 \href{http://dx.doi.org/10.1017/CBO9781107325982}{\textit{Encyclopedia of Mathematics and its Applications}}, Vol.~98, Cambridge
 University Press, Cambridge, 2005.

\bibitem{Jack}
Jack I., Jones D.R.T., Panvel J., Quantum non-abelian {T}oda f\/ield theories,
 \href{http://dx.doi.org/10.1142/S0217751X9400145X}{\textit{Internat.~J. Modern Phys.~A}} \textbf{9} (1994), 3631--3656,
 \href{http://arxiv.org/abs/hep-th/9308080}{hep-th/9308080}.

\bibitem{Jac1}
Jacobi C.G.J., Heine E., Ueber die Ausl\"osung der Gleichung $a_l x_1 + a_2x_2
 + \cdots + a_nx_n = f\cdot u$, \href{http://dx.doi.org/10.1515/crll.1868.69.1}{\textit{J.~Reine Angew. Math.}} \textbf{69}
 (1868), 1--28.

\bibitem{Jac2}
Jacobi C.G.J., Heine E., Allgemeine {T}heorie der kettenbruch\"ahnlichen
 {A}lgorithmen, in welchen jede {Z}ahl aus drei vorhergehenden gebildet wird,
 \href{http://dx.doi.org/10.1515/crll.1868.69.29}{\textit{J.~Reine Angew. Math.}} \textbf{69} (1868), 29--64.



\bibitem{kaka}
Kac M., van Moerbeke P., On an explicitly soluble system of nonlinear
 dif\/ferential equations related to certain {T}oda lattices, \href{http://dx.doi.org/10.1016/0001-8708(75)90148-6}{\textit{Adv.
 Math.}} \textbf{16} (1975), 160--169.

\bibitem{2}
Mahler K., Perfect systems, \textit{Compositio Math.} \textbf{19} (1968),
 95--166.

\bibitem{Manakov}
Manakov S.V., Santini P.M., Solvable vector nonlinear {R}iemann problems, exact
 implicit solutions of dispersionless {PDE}s and wave breaking,
 \href{http://dx.doi.org/10.1088/1751-8113/44/34/345203}{\textit{J.~Phys.~A: Math. Theor.}} \textbf{44} (2011), 345203, 19~pages,
 \href{http://arxiv.org/abs/1011.2619}{arXiv:1011.2619}.

\bibitem{Miki2}
Miki H., Tsujimoto S., Vinet L., Zhedanov A., An algebraic model for the
 multiple {M}eixner polynomials of the f\/irst kind, \href{http://dx.doi.org/10.1088/1751-8113/45/32/325205}{\textit{J.~Phys.~A: Math.
 Theor.}} \textbf{45} (2012), 325205, 11~pages, \href{http://arxiv.org/abs/1203.0357}{arXiv:1203.0357}.

\bibitem{Miki}
Miki H., Vinet L., Zhedanov A., Non-{H}ermitian oscillator {H}amiltonians and
 multiple {C}harlier polynomials, \href{http://dx.doi.org/10.1016/j.physleta.2011.10.038}{\textit{Phys. Lett.~A}} \textbf{376} (2011),
 65--69, \href{http://arxiv.org/abs/1106.5243}{arXiv:1106.5243}.

\bibitem{moser}
Moser J., Three integrable {H}amiltonian systems connected with isospectral
 deformations, \href{http://dx.doi.org/10.1016/0001-8708(75)90151-6}{\textit{Adv. Math.}} \textbf{16} (1975), 197--220.

\bibitem{FranWVA}
Ndayiragije F., Van~Assche W., Multiple {M}eixner polynomials and
 non-{H}ermitian oscillator {H}amiltonians, \href{http://dx.doi.org/10.1088/1751-8113/46/50/505201}{\textit{J.~Phys.~A: Math. Gen.}}
 \textbf{46} (2013), 505201, 17~pages, \href{http://arxiv.org/abs/1310.0982}{arXiv:1310.0982}.

\bibitem{23}
Nikishin E.M., Simultaneous {P}ad\'e approximants, \href{http://dx.doi.org/10.1070/SM1982v041n04ABEH002240}{\textit{Math. USSR Sb.}}
 \textbf{41} (1982), 409--425.

\bibitem{Nikishin}
Nikishin E.M., Sorokin V.N., Rational approximations and orthogonality,
 \textit{Translations of Mathematical Monographs}, Vol.~92, Amer. Math. Soc.,
 Providence, RI, 1991.

\bibitem{3}
Nuttall J., Asymptotics of diagonal {H}ermite--{P}ad\'e polynomials,
 \href{http://dx.doi.org/10.1016/0021-9045(84)90036-4}{\textit{J.~Approx. Theory}} \textbf{42} (1984), 299--386.

\bibitem{PGR1995}
Papageorgiou V., Grammaticos B., Ramani A., Orthogonal polynomial approach to
 discrete {L}ax pairs for initial-boundary value problems of the {QD}
 algorithm, \href{http://dx.doi.org/10.1007/BF00739089}{\textit{Lett. Math. Phys.}} \textbf{34} (1995), 91--101.

\bibitem{Perr}
Perron O., Grundlagen f\"ur eine {T}heorie des {J}acobischen
 {K}ettenbruchalgorithmus, \href{http://dx.doi.org/10.1007/BF01449880}{\textit{Math. Ann.}} \textbf{64} (1907), 1--76.

\bibitem{Santini}
Santini P.M., Nieszporski M., Doliwa A., Integrable generalization of the
 {T}oda law to the square lattice, \href{http://dx.doi.org/10.1103/PhysRevE.70.056615}{\textit{Phys. Rev.~E}} \textbf{70} (2004),
 056615, 6~pages, \href{http://arxiv.org/abs/nlin.SI/0409050}{nlin.SI/0409050}.

\bibitem{SNderK2011}
Spicer P.E., Nijhof\/f F.W., van~der Kamp P.H., Higher analogues of the
 discrete-time {T}oda equation and the quotient-dif\/ference algorithm,
 \href{http://dx.doi.org/10.1088/0951-7715/24/8/006}{\textit{Nonlinearity}} \textbf{24} (2011), 2229--2263, \href{http://arxiv.org/abs/1005.0482}{arXiv:1005.0482}.

\bibitem{Spiridonov2}
Spiridonov V., Zhedanov A., Discrete {D}arboux transformations, the
 discrete-time {T}oda lattice, and the {A}skey--{W}ilson polynomials,
 \href{http://dx.doi.org/10.4310/MAA.1995.v2.n4.a1}{\textit{Methods Appl. Anal.}} \textbf{2} (1995), 369--398.

\bibitem{SpZh}
Spiridonov V., Zhedanov A., Spectral transformation chains and some new
 biorthogonal rational functions, \href{http://dx.doi.org/10.1007/s002200050772}{\textit{Comm. Math. Phys.}} \textbf{210}
 (2000), 49--83.

\bibitem{Suris}
Suris Yu.B., The problem of integrable discretization: {H}amiltonian approach,
 \href{http://dx.doi.org/10.1007/978-3-0348-8016-9}{\textit{Progress in Mathematics}}, Vol.~219, Birkh\"auser Verlag, Basel, 2003.

\bibitem{Toda}
Toda M., Vibration of a chain with nonlinear interaction, \href{http://dx.doi.org/10.1143/JPSJ.22.431}{\textit{J.~Phys. Soc.
 Japan}} \textbf{22} (1967), 431--436.

\bibitem{Toda2}
Toda M., Theory of nonlinear lattices, \href{http://dx.doi.org/10.1007/978-3-642-83219-2}{\textit{Springer Series in Solid-State
 Sciences}}, Vol.~20, 2nd ed., Springer-Verlag, Berlin, 1989.

\bibitem{vanA2011}
Van~Assche W., Nearest neighbor recurrence relations for multiple orthogonal
 polynomials, \href{http://dx.doi.org/10.1016/j.jat.2011.05.003}{\textit{J.~Approx. Theory}} \textbf{163} (2011), 1427--1448,
 \href{http://arxiv.org/abs/1104.3778}{arXiv:1104.3778}.

\bibitem{I2003}
Van~Iseghem J., Vector {S}tieltjes continued fraction and vector {QD}
 algorithm, \href{http://dx.doi.org/10.1023/A:1025565400379}{\textit{Numer. Algorithms}} \textbf{33} (2003), 485--498.

\end{thebibliography}
\end{document}